\def\eqref#1{equation~\ref{#1}}
\def\1{\bm{1}}
\DeclareMathAlphabet{\mathsfit}{\encodingdefault}{\sfdefault}{m}{sl}
\SetMathAlphabet{\mathsfit}{bold}{\encodingdefault}{\sfdefault}{bx}{n}
\definecolor{MY_red}{rgb}{0.8, 0.25, 0.33}
\definecolor{MY_blue}{HTML}{3182ce} % blue
\definecolor{MY_orange}{HTML}{E69F00} % orange
\definecolor{MY_color_fushia}{HTML}{8C368C} % Fushia
\definecolor{MY_color_pink}{HTML}{CC79A7} % Pink
\definecolor{MY_green}{HTML}{009E73} % Green
\definecolor{MY_brown}{HTML}{792500} % brown
\definecolor{MY_Forestgreen}{RGB}{34, 139, 34}
\definecolor{MY_yellow}{HTML}{654321}
\newtheorem{definition}{Definition}
\newtheorem{property}{Property}
\newtheorem{assumption}{Assumption}
\newtheorem{theorem}{Theorem}
\newtheorem{problem}{Problem}
\newcommand{\st}{\text{ such that }}
\newcommand{\notindep}{\not\!\!\rotatebox[origin=c]{90}{$\models$}}
\newcommand{\longdashleftrightarrow}[1][1.75em]{\mathrel{
		\tikz[baseline]{\draw[dash pattern=on .25em off .1 em,<->](0,.58ex)--(#1,.58ex)}}}
\title{Towards identifiability of micro total effects in summary causal graphs with latent confounding:  extension of  the front-door criterion}
\author{\name Charles K.~Assaad \email charles.assaad@inserm.fr \\
      \addr Sorbonne Université, INSERM\\
      Institut Pierre Louis d’Epidémiologie et de Santé Publique\\
      F75012, Paris, France
      }
\begin{document}

\maketitle

\begin{abstract}
Conducting experiments to estimate total effects can be challenging due to cost, ethical concerns, or practical limitations. As an alternative, researchers often rely on causal graphs to determine whether these effects can be identified from observational data. Identifying total effects in fully specified causal graphs has received considerable attention, with Pearl's front-door criterion enabling the identification of total effects in the presence of latent confounding even when  no variable set is sufficient for adjustment. However, specifying a complete causal graph is challenging in many domains. Extending these identifiability results to partially specified graphs is crucial, particularly in dynamic systems where causal relationships evolve over time. This paper addresses the challenge of identifying total effects using a specific and well-known partially specified graph in dynamic systems called a summary causal graph, which does not specify the temporal lag between causal relations and can contain cycles. In particular, this paper presents sufficient graphical conditions for identifying total effects from observational data, even in the presence of cycles and latent confounding, and when no variable set is sufficient for adjustment.
\end{abstract}

\section{Introduction}
\label{sec:intro}

Causal questions arise when we seek to understand the effects of interventions, such as asking, "If we administer today a hypertension treatment to a patient with kidney insufficiency, will the kidney function (represented by the creatinine level) improve tomorrow?". These questions, often referred to as total causal effects or simply total effects~\citep{Pearl_2000}, are denoted\footnote{In a nonparametric setting, the total
	effects is a specific functional of $\Pr(Creatinine_{tomorrow}=c\mid do(Hypertension_{today}=h))$ for different values of $h$. However, for simplicity, the total effect is often reffered to as $\Pr(Creatinine_{tomorrow}=c\mid do(Hypertension_{today}=h))$.} as 
$\Pr(Creatinine_{tomorrow}=c\mid do(Hypertension_{today}=h))$  where the do() operator denotes an intervention. They differ from associational relationships, $\Pr(Creatinine_{tomorrow}=c\mid Hypertension_{today}=h)$, as they isolate the  effect of an intervention, disregarding other influencing factors such as sodium intake or protein intake or stress level.
Experimentation is known as the traditional approach across various fields to estimate the total effect of interventions free from confounding bias~\citep{Neyman_1990}. However, conducting experiments is not always feasible due to cost, ethical considerations, or practical limitations. Consequently, scientists often resort to estimating effects of interventions  from observational data. This process relies on specific assumptions and typically involves two sequential steps: identifiability and estimation \citep{Pearl_2019seven}.
The identifiability step refers to the question of whether the total effect of interest can be uniquely determined from the available data and the assumptions made about the causal model. This step usually also include finding  a way to express the total effect in terms of the observed data distribution, i.e., using a                                                                                                                                                                                                                                                                                                                                                                do-free expression. 
On the other hand, the estimation step refers to the process of calculating the value of a total effect, once it has been identified, from finite observational data using statistical methods. This paper focuses on the first  step.

%%%
%While randomized experiments are the traditional approach for estimating such effects, they are not always feasible due to cost or ethical constraints. Consequently, researchers often turn to observational studies to infer causal relationships. However, inferring causality from statistical information requires additional assumptions. Without these, multiple "causal stories" can be constructed, leading to conflicting conclusions despite agreeing on observables.

Graphical models, provide a framework for identifying total effects from causal graphs which encode variables as vertices and causal relationships as arrows, allowing researchers to visualize and analyze complex causal structures \citep{Pearl_2000}. 
Identifying total effects in fully specified non-temporal causal graphs has been a subject of considerable attention  \citep{Pearl_1993StatScience,Pearl_1995,Spirtes_2000,Pearl_2000,Shpitser_2008,Shpitser_2010}. 
Adjusting for covariates is one method among several that allow us to identify causal effects and the back-door criterion~\citep{Pearl_1993StatScience} is one of the most known methods that allow us to find covariates using causal graphs. \cite{Pearl_1993FrontDoor,Pearl_1995} has provided examples where no set of variables is sufficient for adjustment, yet the causal effect can still be consistently estimated through multi-stage adjustments. 
Pearl's front-door criterion offers a graphical method for identifying total effects despite the absence of an adjustment set due to latent confounding, assuming the causal graph is a directed acyclic graph (DAG) or an acyclic directed mixed graph (ADMG). Initially criticized for its limited practical application, this criterion has recently gained recognition and is now employed in epidemiology~\citep{Inoue_2022,Piccininni_2023}.

The above identifiability methods are directely applicable to fully specified temporal graphs~\citep{Blondel_2016} which represent causal relations in dynamic systems where causal relationships evolve over time.
However, constructing a fully scpecified temporal graph requires knowledge of all causal relationships among observed variables, which is often unavailable, especially in many real-world applications. However, experts may know that one variable causes another without knowing the exact temporal lag. 
For example, understanding the transmission of SARS-CoV-2 from younger to older individuals, and vice versa, can help  define interventions most likely to reduce the number of deaths. Indeed, it has been shown that younger adults tended to be highly infected during the first wave of the pandemic, while older individuals faced a higher risk of death if infected~\citep{Carrat_2021,Lapidus_2021,Glemain_2024}.
Considering sufficiently large time intervals (several weeks) as in repeated serosurveys, like in \cite{Wiegand_2023}, it is not clear if the number of new infections in one age group during a time interval (incidence) may be influenced by incidence in the other age group during the same interval. Incidence in an age group can also be influenced by incidence during the previous time interval in any age group. Therefore, constructing a fully specified causal graph is difficult.
In such cases, partially specified causal graphs can be useful. 
A very well known and useful partially specified causal graph is the summary causal graph (SCG) which represents causal relations without including temporal information, i.e., each vertex represents a time series.
Both medical and epidemiological examples given above can be represented by an SCG with a cyclic relationship representing the interplay between creatinine and hypertension in the first example and between the two age groups in the second example.

Recently, there has been new interest in extending identifiability results to partially specified graphs~\cite{Eichler_2007,Eichler_2009,Maathuis_2013,Perkovic_2020,Wang_2023,Anand_2023,Ferreira_2024,Assaad_2024,Ferreira_2025}. 
Most of these partially specified graphs represent Markov equivalence classes, where the partial specifications differ conceptually from those in SCGs. For instance, in these graphs, partial specification typically manifests as undirected edges or edges with specific endpoints indicating uncertainty about the orientation. Additionally, each vertex in these graphs corresponds directly to an observed variable, maintaining a straightforward one-to-one relationship. 
As a result, the extensions of identifiability results for these graphs~\citep{Maathuis_2013,Perkovic_2020,Wang_2023} are not applicable to SCGs.
Another important type of partially specified graph is the cluster graph, which represents causal relationships between clusters of variables rather than individual variables. This means that, unlike graphs representing Markov equivalence classes, the skeleton of a cluster graph does not correspond to the skeleton of true causal graph. SCGs are a specific type of cluster graphs, where each cluster represents a time series.
Most works extending identifiability results to cluster graphs have focused on extreme multivariate cases, where the goal is to identify the total effect of one entire cluster on another entire cluster~\citep{Anand_2023,Ferreira_2025}. The few studies that consider the total effect of a single variable within a cluster on another single variable within a different cluster have been conducted on SCGs.
For example, under the assumptions of no instantaneous relations and no latent confounding between two timepoints within the same time series, \cite{Eichler_2007, Eichler_2009} demonstrated that both the back-door and front-door criteria are applicable in SCGs. \cite{Assaad_2023} further established that in the absence of latent confounding, the total effect remains identifiable through adjustment in acyclic SCGs, even when instantaneous relations are present. More recently, \cite{Assaad_2024} provided graphical conditions for identifying total effects via adjustment in SCGs that incorporate both cycles and instantaneous relations, still under the assumption of no latent confounding.
Finally, independently and concurrently of this work , \cite{Reiter_2024} investigated the identification of total effects in the frequency domain using the SCG (referred to as a process graph in their work) under the assumption of linearity; however, unlike this and previous works~\citep{Eichler_2007, Eichler_2009, Assaad_2023, Assaad_2024}, they did not explore specific graphical conditions within the SCG for identification.
%However, none of the works considereing instantaneous relations considered the case where the total effect is not identifiable by adjustment due to latent confounding.

The previous paragraph underscores the novelty of this work: in the setting where a fully specified temporal causal graph is not available, this paper is the first to address the challenge of nonparametrically and graphically identifying total effects of \emph{one variable within a cluster} on \emph{another variable in a different cluster}  when having access to an SCG while allowing \emph{instantaneous relations}, \emph{cycles}, and \emph{latent confounding} (even between two timepoints within the same time series).  
It shows that the standard front-door criterion~\citep{Pearl_1993FrontDoor,Pearl_1995} and its previous extension to time-series~\citep{Eichler_2007,Eichler_2009} are not sound when applied to SCGs in the general setting considered in this paper. Nevertheless, by leveraging the front-door criterion, this work introduces sufficient conditions to identify total effects from observational data using SCGs in scenarios where latent confounding prevents identifiability through standard adjustment methods.

The remainder of the paper is organized as follows: Section~\ref{sec:setup} introduces necessary  terminology and tools and it formalizes the problem.
Section~\ref{sec:standard_front_door} demonstrates that the standard front-door criterion and its previous extention to time series are unsuitable when applied to SCGs.
Section~\ref{sec:main} presents the main result of this paper and Section~\ref{sec:non_identifiability} discusses several examples of non-identifiability.
Finally, Section~\ref{sec:conclusion} concludes the paper.

\section{Preliminaries and problem setup}
\label{sec:setup}

This section, first introduces some terminology and tools which are standard for the major part and then, formalize the problem that this paper is going to solve. %In the remainder, the set of  all integers is  represented by $\mathbb{Z}$. 
Let us start by defining the causal model that is considered.

%\begin{definition}[Discrete-time dynamic structural causal model (DTDSCM)]
%	A discrete-time dynamic structural causal model is a tuple $\mathcal{M}=(\mathbb{L}, \mathbb{V}, \mathbb{F},  P(\mathbb{l}))$, where
%	$\mathbb{L}= \mathbb{L}^{v^1} \cup \cdots\cup \mathbb{L}^{v^d}$ \st $\forall i \in \{1, \cdots, d\}$, $\mathbb{L}^{v^i} = \{(\mathbb{L}^{v^i_t})_{t\in \mathbb{Z}}\}$, is a set of sets of exogenous variables, which cannot be observed, but which affect the rest of the model .
%	$\mathbb{V}=\mathbb{V}^1 \cup \cdots \cup \mathbb{V}^d$ \st $\forall i \in \{1, \cdots, d\}$, $\mathbb{V}^{i} = \{(V^{i}_{t})_{t\in \mathbb{Z}}\}$, is a set of sets of endogenous variables, which are observed and which are functionally dependent on $\mathbb{L}^{v^i_t}$ and some subset of $\mathbb{V}$. $\mathbb{F}$ is a set of functions such that each $f^{v_i}_t$ is a mapping from $\mathbb{L}^{v^i_t}$ and a subset $\mathbb{V}\backslash\{V^i_t\}$ to $V^i_t$.  $P(\mathbb{l})$ is a joint probability distribution over $\mathbb{L}$.
%\end{definition}

\begin{definition}[Discrete-time dynamic structural causal model (DTDSCM)]
	\label{def:DTDSCM}
	A discrete-time dynamic structural causal model is a tuple $\mathcal{M}=(\mathbb{L}, \mathbb{V}, \mathbb{F},  P(\mathbb{l}))$,
	where $\mathbb{L} = \bigcup \{\mathbb{L}^{v^i_t} \mid i \in [1,d], t \in [t_0,t_{max}]\}$ is a set of exogenous variables, which cannot be observed but affect the rest of the model.
	$\mathbb{V} = \bigcup \{\mathbb{V}^i \mid i \in [1,d]\}$ such that $\forall i \in [1,d]$, $\mathbb{V}^{i} = \{V^{i}_{t} \mid t \in [t_0,t_{max}]\}$, is a set of endogenous variables, which are observed and every $V^i_t \in \mathbb{V}$ is functionally dependent on (directly caused by) some subset of $\mathbb{L}^{v^i_t} \cup \mathbb{V}_{\leq t} \backslash \{V^i_t\}$ where $\mathbb{V}_{\leq t} = \{V^j_{t'} \mid j \in [1,d], t'\leq t\}$.
	$\mathbb{F}$ is a set of functions such that for all $V^i_t \in \mathbb{V}$, $f^{v^i_t}$ is a mapping from $\mathbb{L}^{v^i_t}$ and a subset of $\mathbb{V}_{\leq t} \backslash \{V^i_t\}$ to $V^i_t$.
	$P(\mathbb{l})$ is a joint probability distribution over $\mathbb{L}$.
\end{definition}

A DTDSCM implicitly assumes that an effect cannot precede its cause. This assumption is explicitly stated as follows.

\begin{assumption}
	\label{ass:temporal_priority}
	Consider a DTDSCM  $\mathcal{M}=(\mathbb{L}, \mathbb{V}, \mathbb{F},  P(\mathbb{l}))$. Suppose $V^i_t$  is an endogenous variable which is functionally dependent on $\mathbb{W}\subseteq \mathbb{V}\backslash\{V^i_t\}$, i.e., $V^i_t:= f^{v^i_t}(\mathbb{W}, \mathbb{L}^{v^i_t})$. For all $V^j_{t'}\in \mathbb{W}\cup \mathbb{L}^{v^i_t}$, it is assumed that $t'\le t$. 
\end{assumption}

Furthermore, stationarity is assumed.
\begin{assumption}
	\label{ass:stationarity}
	Consider a DTDSCM  $\mathcal{M}=(\mathbb{L}, \mathbb{V}, \mathbb{F},  P(\mathbb{l}))$.
	$\forall f^{v^i_t}, f^{v^i_{t'}} \in \mathbb{F}$, $f^{v^i_t}= f^{v^i_{t'}}$.
\end{assumption}
Assumption~\ref{ass:stationarity} have two important implications.
First, it entails that if $Y_t=f^{y_t}(X_{t-1}, W_{t-1})$, then $\forall t' \in [t_0+1,t_{max}]$, $Y_{t'}=f^{y_{t}}(X_{t'-1}, W_{t'-1})$.
Furthermore, it allows us to fix the maximum temporal lag between a cause and an effect, denoted as $\gamma_{\max}$.
	The contributions of this paper are relevant whenever multivariate time series\textemdash where each $\mathbb{V}^i$ such that $i\in [1,d]$ represents a time series with a single observation per temporal variable $\mathbb{V}^i_t$\textemdash or multivariate spatio-temporal series \textemdash where each $\mathbb{V}^i$ such that $i\in [1,d]$ is a spatio-temporal series with multiple observations exist for each temporal variable $\mathbb{V}^i_t$ (as seen in certain types of cohort studies)\textemdash are considered. In the former case, if Assumption~\ref{ass:stationarity} is violated, identifying a unique total effect becomes ill-posed. This is because this case assumes a dynamic system with a single multivariate observational time series, meaning that variables preceding a given variable $V^i_t$ within the same time series $\mathbb{V}^i$ would act as substitutes for different realizations of that variable. Violating this assumption would imply that the total effect varies over time, making estimation impossible. Furthermore, in this case, additional assumptions\textemdash such as those proposed in \cite{Eichler_2009}\textemdash are required for the estimation of the total effect. However, in the latter case, where multiple observations of each temporal variable $\mathbb{V}^i_t$ are available, these additional assumptions are not necessary. The distinction between these two cases is considered to be beyond the scope of this work. Therefore, I will not explicitly differentiate between them and for clarity and brevity, I will refer to the set of variables $\mathbb{V}^i$ simply as a time series throughout the remainder of this paper.

\begin{figure*}[t!]
	\centering
	\begin{subfigure}{\textwidth}
	\begin{subfigure}{.32\textwidth}
		\centering
		\begin{tikzpicture}[{black, circle, draw, inner sep=0}]
			\tikzset{nodes={draw,rounded corners},minimum height=0.7cm,minimum width=0.6cm, font=\scriptsize}
			\tikzset{latent/.append style={white, fill=black}}
			\tikzset{intercept/.append style={fill=green!30}}
			
			\node[intercept]  (Z) at (1.2,1.2) {$W_t$};
			\node[intercept] (Z-1) at (0,1.2) {$W_{t-1}$};
			\node  (Z-2) at (-1.2,1.2) {$W_{t-2}$};
			\node[fill=MY_blue] (Y) at (1.2,0) {$Y_t$};
			\node (Y-1) at (0,0) {$Y_{t-1}$};
			\node  (Y-2) at (-1.2,0) {$Y_{t-2}$};
			\node (X) at (1.2,2.4) {$X_t$};
			\node[fill=MY_red] (X-1) at (0,2.4) {$X_{t-1}$};
			\node  (X-2) at (-1.2,2.4) {$X_{t-2}$};
			%%% self-loop
			%%% others 
			\draw[->,>=latex] (X-2) to (Z-2);
			\draw[->,>=latex] (X-1) to (Z-1);
			\draw[->,>=latex] (X) to (Z);
			\draw[->,>=latex] (Z-2) -- (Y-2);
			\draw[->,>=latex] (Z-1) -- (Y-1);
			\draw[->,>=latex] (Z) -- (Y);
			
			\draw[->,>=latex] (X-2) to (Z-1);
			\draw[->,>=latex] (X-1) to (Z);
			\draw[->,>=latex] (Z-2) -- (Y-1);
			\draw[->,>=latex] (Z-1) -- (Y);

			\draw[->,>=latex] (Y-2) -- (Y-1);
			\draw[->,>=latex] (Y-1) -- (Y);
			\draw[->,>=latex] (Z-2) -- (Z-1);
			\draw[->,>=latex] (Z-1) -- (Z);
			
			\draw[<->,>=latex, dashed] (X-2) to [out=180,in=180, looseness=0.8] (Y-2);
			\draw[<->,>=latex, dashed] (X-1) to [out=30,in=-30, looseness=0.5] (Y-1);
			\draw[<->,>=latex, dashed] (X) to [out=0,in=0, looseness=0.8] (Y);
			\draw[<->,>=latex, dashed] (X-2) to (Y-1);
			\draw[<->,>=latex, dashed] (X-1) to (Y);
			\draw[<->,>=latex, dashed] (Y-2) to (X-1);
			\draw[<->,>=latex, dashed] (Y-1) to (X);
			\draw[<->,>=latex, dashed] (X-2) to [out=80,in=100, looseness=1] (X-1);
			\draw[<->,>=latex, dashed] (X-1) to [out=80,in=100, looseness=1] (X);
			\draw[<->,>=latex, dashed] (Y-2) to [out=-80,in=-100, looseness=1] (Y-1);
			\draw[<->,>=latex, dashed] (Y-1) to [out=-80,in=-100, looseness=1] (Y);

			%\coordinate[left of=V-2] (d1);
			\draw [dotted,>=latex, thick] (X-2) to[left] (-2,2.4);
			\draw [dotted,>=latex, thick] (Z-2) to[left] (-2,1.2);
			\draw [dotted,>=latex, thick] (Y-2) to[left] (-2,0);
			\draw [dotted,>=latex, thick] (X) to[right] (2,2.4);
			\draw [dotted,>=latex, thick] (Z) to[right] (2,1.2);
			\draw [dotted,>=latex, thick] (Y) to[right] (2,0);
		\end{tikzpicture}
		\caption{\centering FT-ADMG 1.}
		\label{fig:FTADMG1}
	\end{subfigure}
	\hfill 
	\begin{subfigure}{.32\textwidth}
		\centering
		\begin{tikzpicture}[{black, circle, draw, inner sep=0}]
	\tikzset{nodes={draw,rounded corners},minimum height=0.7cm,minimum width=0.6cm, font=\scriptsize}
	\tikzset{latent/.append style={white, fill=black}}
	\tikzset{intercept/.append style={fill=green!30}}
	
	\node[intercept]  (Z) at (1.2,1.2) {$W_t$};
	\node[intercept] (Z-1) at (0,1.2) {$W_{t-1}$};
	\node  (Z-2) at (-1.2,1.2) {$W_{t-2}$};
	\node[fill=MY_blue] (Y) at (1.2,0) {$Y_t$};
	\node (Y-1) at (0,0) {$Y_{t-1}$};
	\node  (Y-2) at (-1.2,0) {$Y_{t-2}$};
	\node (X) at (1.2,2.4) {$X_t$};
	\node[fill=MY_red] (X-1) at (0,2.4) {$X_{t-1}$};
	\node  (X-2) at (-1.2,2.4) {$X_{t-2}$};
	%%% self-loop
	%%% others 
%	\draw[->,>=latex] (X-2) to (Z-2);
%	\draw[->,>=latex] (X-1) to (Z-1);
%	\draw[->,>=latex] (X) to (Z);
	\draw[->,>=latex] (Z-2) -- (Y-2);
	\draw[->,>=latex] (Z-1) -- (Y-1);
	\draw[->,>=latex] (Z) -- (Y);
	
	\draw[->,>=latex] (X-2) to (Z-1);
	\draw[->,>=latex] (X-1) to (Z);
%	\draw[->,>=latex] (Z-2) -- (Y-1);
%	\draw[->,>=latex] (Z-1) -- (Y);

	\draw[->,>=latex] (Y-2) -- (Y-1);
	\draw[->,>=latex] (Y-1) -- (Y);
	\draw[->,>=latex] (Z-2) -- (Z-1);
	\draw[->,>=latex] (Z-1) -- (Z);
	
	\draw[<->,>=latex, dashed] (X-2) to [out=180,in=180, looseness=0.8] (Y-2);
	\draw[<->,>=latex, dashed] (X-1) to [out=30,in=-30, looseness=0.5] (Y-1);
	\draw[<->,>=latex, dashed] (X) to [out=0,in=0, looseness=0.8] (Y);
	\draw[<->,>=latex, dashed] (X-2) to (Y-1);
	\draw[<->,>=latex, dashed] (X-1) to (Y);
	\draw[<->,>=latex, dashed] (Y-2) to (X-1);
	\draw[<->,>=latex, dashed] (Y-1) to (X);
	\draw[<->,>=latex, dashed] (X-2) to [out=80,in=100, looseness=1] (X-1);
	\draw[<->,>=latex, dashed] (X-1) to [out=80,in=100, looseness=1] (X);
	\draw[<->,>=latex, dashed] (Y-2) to [out=-80,in=-100, looseness=1] (Y-1);
	\draw[<->,>=latex, dashed] (Y-1) to [out=-80,in=-100, looseness=1] (Y);

	%\coordinate[left of=V-2] (d1);
	\draw [dotted,>=latex, thick] (X-2) to[left] (-2,2.4);
	\draw [dotted,>=latex, thick] (Z-2) to[left] (-2,1.2);
	\draw [dotted,>=latex, thick] (Y-2) to[left] (-2,0);
	\draw [dotted,>=latex, thick] (X) to[right] (2,2.4);
	\draw [dotted,>=latex, thick] (Z) to[right] (2,1.2);
	\draw [dotted,>=latex, thick] (Y) to[right] (2,0);
\end{tikzpicture}
		\caption{\centering FT-ADMG 2.}
		\label{fig:FTADMG2}
	\end{subfigure}
	\hfill 
	\begin{subfigure}{.32\textwidth}
		\centering
		\begin{tikzpicture}[{black, circle, draw, inner sep=0}]
			\tikzset{nodes={draw,rounded corners},minimum height=0.6cm,minimum width=0.6cm}	
			\tikzset{latent/.append style={white, fill=black}}
			\tikzset{intercept/.append style={fill=green!30}}
			
			%			\node[latent] (L) at (1.75,1.2) {$L$};
			\node[fill=MY_red] (X) at (0,0) {$X$} ;
			\node[fill=MY_blue] (Y) at (3,0) {$Y$};
			\node[intercept] (Z) at (1.5,0) {$W$};
			
			\draw [->,>=latex,] (X) -- (Z);

			\draw[->,>=latex] (Z) -- (Y);

			%			\draw[->,>=latex, dashed] (L) to [out=0,in=90, looseness=1] (Y);
			%			\draw[->,>=latex, dashed] (L) to [out=180,in=90, looseness=1] (X);
			\draw[<->,>=latex, dashed] (X) to [out=90,in=90, looseness=1] (Y);

			\draw[->,>=latex] (Y) to [out=15,in=60, looseness=2] (Y);
			\draw[->,>=latex] (Z) to [out=15,in=60, looseness=2] (Z);
			
			\draw[<->,>=latex, dashed] (X) to [out=-155,in=-110, looseness=2] (X);
			\draw[<->,>=latex, dashed] (Y) to [out=-25,in=-70, looseness=2] (Y);
			
		\end{tikzpicture}
		\caption{\centering SCG 1.}
		\label{fig:SCG1}
	\end{subfigure}		
\end{subfigure}

\vfill 

	\begin{subfigure}{\textwidth}
	\begin{subfigure}{.32\textwidth}
		\centering
		\begin{tikzpicture}[{black, circle, draw, inner sep=0}]
			\tikzset{nodes={draw,rounded corners},minimum height=0.7cm,minimum width=0.6cm, font=\scriptsize}
			\tikzset{latent/.append style={white, fill=black}}
			\tikzset{intercept/.append style={fill=green!30}}
			
			\node[intercept]  (Z) at (1.2,1.2) {$W_t$};
			\node[intercept] (Z-1) at (0,1.2) {$W_{t-1}$};
			\node  (Z-2) at (-1.2,1.2) {$W_{t-2}$};
			\node[fill=MY_blue] (Y) at (1.2,0) {$Y_t$};
			\node (Y-1) at (0,0) {$Y_{t-1}$};
			\node  (Y-2) at (-1.2,0) {$Y_{t-2}$};
			\node (X) at (1.2,2.4) {$X_t$};
			\node[fill=MY_red] (X-1) at (0,2.4) {$X_{t-1}$};
			\node  (X-2) at (-1.2,2.4) {$X_{t-2}$};
			%%% self-loop
			%%% others 
			\draw[->,>=latex] (X-2) to (Z-2);
			\draw[->,>=latex] (X-1) to (Z-1);
			\draw[->,>=latex] (X) to (Z);
			\draw[->,>=latex] (Z-2) -- (Y-2);
			\draw[->,>=latex] (Z-1) -- (Y-1);
			\draw[->,>=latex] (Z) -- (Y);
			
			\draw[->,>=latex] (X-2) to (Z-1);
			\draw[->,>=latex] (X-1) to (Z);
			\draw[->,>=latex] (Z-2) -- (Y-1);
			\draw[->,>=latex] (Z-1) -- (Y);

			\draw[->,>=latex] (X-2) -- (X-1);
			\draw[->,>=latex] (X-1) -- (X);
			\draw[->,>=latex] (Y-2) -- (Y-1);
			\draw[->,>=latex] (Y-1) -- (Y);
			\draw[->,>=latex] (Z-2) -- (Z-1);
			\draw[->,>=latex] (Z-1) -- (Z);
			
			\draw[<->,>=latex, dashed] (X-2) to [out=180,in=180, looseness=0.8] (Y-2);
			\draw[<->,>=latex, dashed] (X-1) to [out=30,in=-30, looseness=0.5] (Y-1);
			\draw[<->,>=latex, dashed] (X) to [out=0,in=0, looseness=0.8] (Y);
			\draw[<->,>=latex, dashed] (X-2) to (Y-1);
			\draw[<->,>=latex, dashed] (X-1) to (Y);
			\draw[<->,>=latex, dashed] (Y-2) to (X-1);
			\draw[<->,>=latex, dashed] (Y-1) to (X);
			\draw[<->,>=latex, dashed] (Y-2) to [out=-80,in=-100, looseness=1] (Y-1);
			\draw[<->,>=latex, dashed] (Y-1) to [out=-80,in=-100, looseness=1] (Y);

			%\coordinate[left of=V-2] (d1);
			\draw [dotted,>=latex, thick] (X-2) to[left] (-2,2.4);
			\draw [dotted,>=latex, thick] (Z-2) to[left] (-2,1.2);
			\draw [dotted,>=latex, thick] (Y-2) to[left] (-2,0);
			\draw [dotted,>=latex, thick] (X) to[right] (2,2.4);
			\draw [dotted,>=latex, thick] (Z) to[right] (2,1.2);
			\draw [dotted,>=latex, thick] (Y) to[right] (2,0);
		\end{tikzpicture}
		\caption{\centering FT-ADMG 3.}
		\label{fig:FTADMG3}
	\end{subfigure}
	\hfill 
	\begin{subfigure}{.32\textwidth}
		\centering
		\begin{tikzpicture}[{black, circle, draw, inner sep=0}]
	\tikzset{nodes={draw,rounded corners},minimum height=0.7cm,minimum width=0.6cm, font=\scriptsize}
	\tikzset{latent/.append style={white, fill=black}}
	\tikzset{intercept/.append style={fill=green!30}}
	
	\node[intercept]  (Z) at (1.2,1.2) {$W_t$};
	\node[intercept] (Z-1) at (0,1.2) {$W_{t-1}$};
	\node  (Z-2) at (-1.2,1.2) {$W_{t-2}$};
	\node[fill=MY_blue] (Y) at (1.2,0) {$Y_t$};
	\node (Y-1) at (0,0) {$Y_{t-1}$};
	\node  (Y-2) at (-1.2,0) {$Y_{t-2}$};
	\node (X) at (1.2,2.4) {$X_t$};
	\node[fill=MY_red] (X-1) at (0,2.4) {$X_{t-1}$};
	\node  (X-2) at (-1.2,2.4) {$X_{t-2}$};
	%%% self-loop
	%%% others 
%	\draw[->,>=latex] (X-2) to (Z-2);
%	\draw[->,>=latex] (X-1) to (Z-1);
%	\draw[->,>=latex] (X) to (Z);
	\draw[->,>=latex] (Z-2) -- (Y-2);
	\draw[->,>=latex] (Z-1) -- (Y-1);
	\draw[->,>=latex] (Z) -- (Y);
	
	\draw[->,>=latex] (X-2) to (Z-1);
	\draw[->,>=latex] (X-1) to (Z);
%	\draw[->,>=latex] (Z-2) -- (Y-1);
%	\draw[->,>=latex] (Z-1) -- (Y);

	\draw[->,>=latex] (X-2) -- (X-1);
	\draw[->,>=latex] (X-1) -- (X);
	\draw[->,>=latex] (Y-2) -- (Y-1);
	\draw[->,>=latex] (Y-1) -- (Y);
	\draw[->,>=latex] (Z-2) -- (Z-1);
	\draw[->,>=latex] (Z-1) -- (Z);
	
	\draw[<->,>=latex, dashed] (X-2) to [out=180,in=180, looseness=0.8] (Y-2);
	\draw[<->,>=latex, dashed] (X-1) to [out=30,in=-30, looseness=0.5] (Y-1);
	\draw[<->,>=latex, dashed] (X) to [out=0,in=0, looseness=0.8] (Y);
	\draw[<->,>=latex, dashed] (X-2) to (Y-1);
	\draw[<->,>=latex, dashed] (X-1) to (Y);
	\draw[<->,>=latex, dashed] (Y-2) to (X-1);
	\draw[<->,>=latex, dashed] (Y-1) to (X);
	\draw[<->,>=latex, dashed] (Y-2) to [out=-80,in=-100, looseness=1] (Y-1);
	\draw[<->,>=latex, dashed] (Y-1) to [out=-80,in=-100, looseness=1] (Y);

	%\coordinate[left of=V-2] (d1);
	\draw [dotted,>=latex, thick] (X-2) to[left] (-2,2.4);
	\draw [dotted,>=latex, thick] (Z-2) to[left] (-2,1.2);
	\draw [dotted,>=latex, thick] (Y-2) to[left] (-2,0);
	\draw [dotted,>=latex, thick] (X) to[right] (2,2.4);
	\draw [dotted,>=latex, thick] (Z) to[right] (2,1.2);
	\draw [dotted,>=latex, thick] (Y) to[right] (2,0);
\end{tikzpicture}
		\caption{\centering FT-ADMG 4.}
		\label{fig:FTADMG4}
	\end{subfigure}
	\hfill 
	\begin{subfigure}{.32\textwidth}
		\centering
		\begin{tikzpicture}[{black, circle, draw, inner sep=0}]
			\tikzset{nodes={draw,rounded corners},minimum height=0.6cm,minimum width=0.6cm}	
			\tikzset{latent/.append style={white, fill=black}}
			\tikzset{intercept/.append style={fill=green!30}}
			
			%			\node[latent] (L) at (1.75,1.2) {$L$};
			\node[fill=MY_red] (X) at (0,0) {$X$} ;
			\node[fill=MY_blue] (Y) at (3,0) {$Y$};
			\node[intercept] (Z) at (1.5,0) {$W$};
			
			\draw [->,>=latex,] (X) -- (Z);

			\draw[->,>=latex] (Z) -- (Y);

			%			\draw[->,>=latex, dashed] (L) to [out=0,in=90, looseness=1] (Y);
			%			\draw[->,>=latex, dashed] (L) to [out=180,in=90, looseness=1] (X);
			\draw[<->,>=latex, dashed] (X) to [out=90,in=90, looseness=1] (Y);
			
			\draw[->,>=latex] (X) to [out=165,in=120, looseness=2] (X);
			\draw[->,>=latex] (Y) to [out=15,in=60, looseness=2] (Y);
			\draw[->,>=latex] (Z) to [out=15,in=60, looseness=2] (Z);

			\draw[<->,>=latex, dashed] (Y) to [out=-25,in=-70, looseness=2] (Y);
			
		\end{tikzpicture}
		\caption{\centering SCG 2.}
		\label{fig:SCG2}
	\end{subfigure}		
\end{subfigure}

\vfill 

	\begin{subfigure}{\textwidth}
		\begin{subfigure}{.32\textwidth}
			\centering
			\begin{tikzpicture}[{black, circle, draw, inner sep=0}]
				\tikzset{nodes={draw,rounded corners},minimum height=0.7cm,minimum width=0.6cm, font=\scriptsize}
				\tikzset{latent/.append style={white, fill=black}}
				\tikzset{intercept/.append style={fill=green!30}}
				
				\node[intercept]  (Z) at (1.2,1.2) {$W_t$};
				\node[intercept] (Z-1) at (0,1.2) {$W_{t-1}$};
				\node  (Z-2) at (-1.2,1.2) {$W_{t-2}$};
				\node[fill=MY_blue] (Y) at (1.2,0) {$Y_t$};
				\node (Y-1) at (0,0) {$Y_{t-1}$};
				\node  (Y-2) at (-1.2,0) {$Y_{t-2}$};
				\node (X) at (1.2,2.4) {$X_t$};
				\node[fill=MY_red] (X-1) at (0,2.4) {$X_{t-1}$};
				\node  (X-2) at (-1.2,2.4) {$X_{t-2}$};
				%%% self-loop
				%%% others 
				\draw[->,>=latex] (X-2) to (Z-2);
				\draw[->,>=latex] (X-1) to (Z-1);
				\draw[->,>=latex] (X) to (Z);
				\draw[->,>=latex] (Z-2) -- (Y-2);
				\draw[->,>=latex] (Z-1) -- (Y-1);
				\draw[->,>=latex] (Z) -- (Y);
				
				\draw[->,>=latex] (X-2) to (Z-1);
				\draw[->,>=latex] (X-1) to (Z);
				\draw[->,>=latex] (Z-2) -- (Y-1);
				\draw[->,>=latex] (Z-1) -- (Y);

				\draw[->,>=latex] (X-2) -- (X-1);
				\draw[->,>=latex] (X-1) -- (X);
				\draw[->,>=latex] (Y-2) -- (Y-1);
				\draw[->,>=latex] (Y-1) -- (Y);
				\draw[->,>=latex] (Z-2) -- (Z-1);
				\draw[->,>=latex] (Z-1) -- (Z);
				
				\draw[<->,>=latex, dashed] (X-2) to [out=180,in=180, looseness=0.8] (Y-2);
				\draw[<->,>=latex, dashed] (X-1) to [out=30,in=-30, looseness=0.5] (Y-1);
				\draw[<->,>=latex, dashed] (X) to [out=0,in=0, looseness=0.8] (Y);
				\draw[<->,>=latex, dashed] (X-2) to (Y-1);
				\draw[<->,>=latex, dashed] (X-1) to (Y);
				\draw[<->,>=latex, dashed] (Y-2) to (X-1);
				\draw[<->,>=latex, dashed] (Y-1) to (X);
				\draw[<->,>=latex, dashed] (X-2) to [out=80,in=100, looseness=1] (X-1);
				\draw[<->,>=latex, dashed] (X-1) to [out=80,in=100, looseness=1] (X);
				\draw[<->,>=latex, dashed] (Y-2) to [out=-80,in=-100, looseness=1] (Y-1);
				\draw[<->,>=latex, dashed] (Y-1) to [out=-80,in=-100, looseness=1] (Y);

				%\coordinate[left of=V-2] (d1);
				\draw [dotted,>=latex, thick] (X-2) to[left] (-2,2.4);
				\draw [dotted,>=latex, thick] (Z-2) to[left] (-2,1.2);
				\draw [dotted,>=latex, thick] (Y-2) to[left] (-2,0);
				\draw [dotted,>=latex, thick] (X) to[right] (2,2.4);
				\draw [dotted,>=latex, thick] (Z) to[right] (2,1.2);
				\draw [dotted,>=latex, thick] (Y) to[right] (2,0);
			\end{tikzpicture}
			\caption{\centering FT-ADMG 5.}
			\label{fig:FTADMG5}
		\end{subfigure}
		\hfill 
		\begin{subfigure}{.32\textwidth}
			\centering
			\begin{tikzpicture}[{black, circle, draw, inner sep=0}]
	\tikzset{nodes={draw,rounded corners},minimum height=0.7cm,minimum width=0.6cm, font=\scriptsize}
	\tikzset{latent/.append style={white, fill=black}}
	\tikzset{intercept/.append style={fill=green!30}}
	
	\node[intercept]  (Z) at (1.2,1.2) {$W_t$};
	\node[intercept] (Z-1) at (0,1.2) {$W_{t-1}$};
	\node  (Z-2) at (-1.2,1.2) {$W_{t-2}$};
	\node[fill=MY_blue] (Y) at (1.2,0) {$Y_t$};
	\node (Y-1) at (0,0) {$Y_{t-1}$};
	\node  (Y-2) at (-1.2,0) {$Y_{t-2}$};
	\node (X) at (1.2,2.4) {$X_t$};
	\node[fill=MY_red] (X-1) at (0,2.4) {$X_{t-1}$};
	\node  (X-2) at (-1.2,2.4) {$X_{t-2}$};
	%%% self-loop
	%%% others 
%	\draw[->,>=latex] (X-2) to (Z-2);
%	\draw[->,>=latex] (X-1) to (Z-1);
%	\draw[->,>=latex] (X) to (Z);
	\draw[->,>=latex] (Z-2) -- (Y-2);
	\draw[->,>=latex] (Z-1) -- (Y-1);
	\draw[->,>=latex] (Z) -- (Y);
	
	\draw[->,>=latex] (X-2) to (Z-1);
	\draw[->,>=latex] (X-1) to (Z);
%	\draw[->,>=latex] (Z-2) -- (Y-1);
%	\draw[->,>=latex] (Z-1) -- (Y);

	\draw[->,>=latex] (X-2) -- (X-1);
	\draw[->,>=latex] (X-1) -- (X);
	\draw[->,>=latex] (Y-2) -- (Y-1);
	\draw[->,>=latex] (Y-1) -- (Y);
	\draw[->,>=latex] (Z-2) -- (Z-1);
	\draw[->,>=latex] (Z-1) -- (Z);
	
	\draw[<->,>=latex, dashed] (X-2) to [out=180,in=180, looseness=0.8] (Y-2);
	\draw[<->,>=latex, dashed] (X-1) to [out=30,in=-30, looseness=0.5] (Y-1);
	\draw[<->,>=latex, dashed] (X) to [out=0,in=0, looseness=0.8] (Y);
	\draw[<->,>=latex, dashed] (X-2) to (Y-1);
	\draw[<->,>=latex, dashed] (X-1) to (Y);
	\draw[<->,>=latex, dashed] (Y-2) to (X-1);
	\draw[<->,>=latex, dashed] (Y-1) to (X);
	\draw[<->,>=latex, dashed] (X-2) to [out=80,in=100, looseness=1] (X-1);
	\draw[<->,>=latex, dashed] (X-1) to [out=80,in=100, looseness=1] (X);
	\draw[<->,>=latex, dashed] (Y-2) to [out=-80,in=-100, looseness=1] (Y-1);
	\draw[<->,>=latex, dashed] (Y-1) to [out=-80,in=-100, looseness=1] (Y);

	%\coordinate[left of=V-2] (d1);
	\draw [dotted,>=latex, thick] (X-2) to[left] (-2,2.4);
	\draw [dotted,>=latex, thick] (Z-2) to[left] (-2,1.2);
	\draw [dotted,>=latex, thick] (Y-2) to[left] (-2,0);
	\draw [dotted,>=latex, thick] (X) to[right] (2,2.4);
	\draw [dotted,>=latex, thick] (Z) to[right] (2,1.2);
	\draw [dotted,>=latex, thick] (Y) to[right] (2,0);
\end{tikzpicture}
			\caption{\centering FT-ADMG 6.}
			\label{fig:FTADMG6}
		\end{subfigure}
\hfill 
	\begin{subfigure}{.32\textwidth}
		\centering
		\begin{tikzpicture}[{black, circle, draw, inner sep=0}]
			\tikzset{nodes={draw,rounded corners},minimum height=0.6cm,minimum width=0.6cm}	
			\tikzset{latent/.append style={white, fill=black}}
			\tikzset{intercept/.append style={fill=green!30}}
			
			%			\node[latent] (L) at (1.75,1.2) {$L$};
			\node[fill=MY_red] (X) at (0,0) {$X$} ;
			\node[fill=MY_blue] (Y) at (3,0) {$Y$};
			\node[intercept] (Z) at (1.5,0) {$W$};
			
			\draw [->,>=latex,] (X) -- (Z);

			\draw[->,>=latex] (Z) -- (Y);

			%			\draw[->,>=latex, dashed] (L) to [out=0,in=90, looseness=1] (Y);
			%			\draw[->,>=latex, dashed] (L) to [out=180,in=90, looseness=1] (X);
			\draw[<->,>=latex, dashed] (X) to [out=90,in=90, looseness=1] (Y);
			
			\draw[->,>=latex] (X) to [out=165,in=120, looseness=2] (X);
			\draw[->,>=latex] (Y) to [out=15,in=60, looseness=2] (Y);
			\draw[->,>=latex] (Z) to [out=15,in=60, looseness=2] (Z);
			
			\draw[<->,>=latex, dashed] (X) to [out=-155,in=-110, looseness=2] (X);
			\draw[<->,>=latex, dashed] (Y) to [out=-25,in=-70, looseness=2] (Y);
			
		\end{tikzpicture}
		\caption{\centering SCG 3.}
		\label{fig:SCG3}
	\end{subfigure}		
	\end{subfigure}
	
	\caption{Three SCGs and six FT-ADMGs, where $\gamma_{\max}=1$, such that  FT-ADMGs 1 and 2 are compatible with SCG 1, FT-ADMGs 3 and 4 are compatible with SCG 2, and FT-ADMGs 5 and 6 are compatible with SCG 3.  Each pair of red and blue vertices represents the cause and the effect of interest and green vertices are those that interpect all paths from the cause to the effect of interest. In SCG 1 and SCG 2, $\{W\}$  satisfies   Definition~\ref{def:front_door_SCG}  for the total effect $\Pr(y_t|do(x_{t-1}))$. However, $\{W\}$  does not satisfy   Definition~\ref{def:front_door_SCG}  in SCG 3 for the total effect $\Pr(y_t|do(x_{t-1}))$ since  $Cycles(X, \mathcal{G}^s)\ne \emptyset$ and $\gamma\ne0$.}
	\label{fig:FTCG-FTADGMG-SCG}
\end{figure*}

It is also supposed that the DTDSCM can be qualitativly represented by full-time\footnote{The term "full-time" underscores that the graph represents the entirety of a dynamic system over time (from $t_0$ to $t_{max}$). This representation can stretch over a vast range, suggesting that any segment of a full-time DAG depicted in a figure should be viewed as merely a snapshot of the larger graph. Under Assumption~\ref{ass:stationarity}, it is sometimes possible to extrapolate the entire graph from such a snapshot.} directed acyclic graph (FT-DAG), commonly known as a full-time causal graph~\citep{Peters_2013}.
To check the identifiability of a total effect, it is standard to first transform the FT-DAG to a full-time acyclic directed mixed graph (FT-ADMG)~\citep{Richardson_2003} which is  naturally derived from FT-DAGs with latent variables via an operation called latent projection~\citep{Tian_2002}. 
The FT-ADMG is supposed to be a DAG with bidirected edges representing latent confounding.
If all variables representing latent confounding become observed then the FT-ADMG becomes a FT-DAG. %Figures~\ref{fig:FTCG1}, \ref{fig:FTCG2} and \ref{fig:FTCG3}  present three FT-DAGs and Figures~\ref{fig:FTADMG1}, \ref{fig:FTADMG2} and ~\ref{fig:FTADMG3} present their corresponding  FT-ADMGs. 
Figures~\ref{fig:FTADMG1}, \ref{fig:FTADMG2}, \ref{fig:FTADMG3},\ref{fig:FTADMG4}, \ref{fig:FTADMG5} and ~\ref{fig:FTADMG6} present six different  FT-ADMGs.
In our setting, FT-ADMGs can also be  obtained directly from the DTDSCM, as shown below.

\begin{definition}[Full-Time Acyclic Directed Mixed Graph]
	\label{def:FTCG}
	Consider a DTDSCM $\mathcal{M}$. The \emph{full-time acyclic directed mixed graph (FT-ADMG)} $\mathcal{G} = (\mathbb{V}, \mathbb{E})$ induced by $\mathcal{M}$ is defined in the following way:
	\begin{equation*}
		\begin{aligned}
			&\mathbb{E}^{1} &:= & \{X_{t'}\rightarrow Y_{t} &|& \forall Y_{t} \in \mathbb{V},~X_{t'} \in \mathbb{X}\st Y_t := f^y_t(\mathbb{X}, \mathbb{L}^{y_t})  \text{ in } \mathcal{M}  \text{ and  } \mathbb{X}\subset \mathbb{V}\backslash\{Y_t\} \},&\\
			&\mathbb{E}^{2} &:= & \{X_{t'}\longdashleftrightarrow Y_{t} &|& \forall X_{t'}, Y_{t} \in \mathbb{V} ~\st   \mathbb{L}^{x_{t'}}\notindep \mathbb{L}^{y_t} \}.&
		\end{aligned}
	\end{equation*}
	where $\mathbb{E}=\mathbb{E}^{1}\cup \mathbb{E}^{2}$.
\end{definition}

\paragraph{FT-ADMG notions}
For an FT-ADMG $\mathcal{G}$, a \emph{path} from $X_{t'}$ to $Y_t$ in $\mathcal{G}$ is a sequence of distinct vertices $\langle X_{t'},\ldots, Y_t\rangle$ in which every pair of successive vertices is adjacent. A \emph{directed path} from $X_{t'}$ to $Y_t$ is a path from $X_{t'}$ to $Y_t$ in which all edges are directed towards $Y_t$ in $\mathcal{G}$, that is $X_{t'} \rightarrow \ldots \rightarrow Y_t$. A \emph{back-door path} between $X_{t'}$ and $Y_t$ is a path between $X_{t'}$ and $Y_t$ with an arrowhead into $X_{t'}$ in $\mathcal{G}$. If $X_{t'}\rightarrow Y$, then $X_{t'}$ is a \emph{parent} of $Y_t$. If there is a directed path from $X_{t'}$ to $Y_t$, then $X_{t'}$ is an \emph{ancestor} of $Y_t$, and $Y_t$ is a \emph{descendant} of $X_{t'}$. A vertex counts as its own descendant and as its own ancestor. The sets of parents, ancestors and descendants of $X_{t'}$ in $\mathcal{G}$ are denoted by $\text{Par}(X_{t'},\mathcal{G})$, $\text{Anc}(X_{t'},\mathcal{G})$ and $\text{Desc}(X_{t'},\mathcal{G})$ respectively. 
If a path $\pi$ contains $X_{t'} \rightarrow W_{t''} \leftarrow Y_t$ as a subpath, then $W_{t''}$ is a \emph{collider} on $\pi$. A path $\pi$ from $X_{t'}$ to $Y_t$ is \emph{active} given a vertex set $\mathbb{W}$, with $X_{t'},Y_t \notin  \mathbb{W}$ if every  non-collider on $\pi$ is not in $\mathbb{W}$, and every collider on $\pi$ has a descendant in $\mathbb{W}$. Otherwise, $\mathbb{W}$ \emph{blocks} $\pi$.  
%\ifthenelse {\boolean{showToPublic}} {
%	\textcolor{olive}{A set of vertices $\mathbb{W}$ intercepts all directed paths from $X_{t'}$ to $Y_t$ if every directed path from $X_{t'}$ to $Y_t$ passes through at least one vertex in $\mathbb{W}$ and every vertex in $\mathbb{W}$ is included in at least one directed path from $X_{t'}$ to $Y_t$. A set of vertices sub-intercepts all directed paths from $X_{t'}$ to $Y_t$ if it contains a subset that intercepts all directed paths from $X_{t'}$ to $Y_t$. If there is no directed path from $X_{t'}$ to $Y_t$ then we consider that the empty set intercepts all paths.}
%}
A set of vertices $\mathbb{W}$ intercepts all directed paths from $X_{t'}$ to $Y_t$ if every directed path from $X_{t'}$ to $Y_t$ passes through at least one vertex in $\mathbb{W}$.
Lastly, each vertex in an FT-ADMG is called a temporal vertex or a micro vertex.

The \emph{total effect}~\citep{Pearl_2000} between two micro variables is written as $P (Y_t=y_t | do (X_{t-\gamma}=x_{t-\gamma}))$. $Y_t$ corresponds to the response and $do(X_{t-\gamma}=x_{t-\gamma})$ represents an intervention (as defined in \citet{Pearl_2000} and \citet[Assumption 2.3]{Eichler_2007}) on the variable $X$ at time $t-\gamma$, with $\gamma\geq 0$. 
Unlike $\gamma_{max}$ (which represents the maximum possible lag between a cause and effect), $\gamma$ simply specifies the lag of interest for the query  posed by the user.  
In the remainder of the paper, $\gamma$ is considered to be in $\{0, \gamma_{\max}\}$, and, with a slight abuse of notation, $P(Y_t = y_t \mid do(X_{t-\gamma} = x_{t-\gamma}))$ is written as $P(y_t \mid do(x_{t-\gamma}))$.
The identifiability of the total effect in FT-ADMGs is defined as follows.

\begin{definition}[Identifiability of total effects in FT-ADMGs]
	Let $X_{t-\gamma}$ and $Y_t$ be distinct vertices in an FT-ADMG $\mathcal{G} =(\mathbb{V}, \mathbb{E})$. The total effect of $X_{t-\gamma}$ on $Y_t$ is identifiable in $\mathcal{G}$ if $\Pr(y_t|do(x_{t-\gamma}))$ is uniquely computable from any positive observational distribution consistent with $\mathcal{G}$.
\end{definition}

A total effect is uniquely computable if $\Pr(y_t\mid do(x_{t-\gamma}))$ can be expressed using a \emph{do-free formula}.
Given a fully specified FT-ADMG, there exists many tools to identify the total effect. For example the standard backoor criterion~\citep{Pearl_1993StatScience,Pearl_1995} can be used to find a set of covariates $\mathbb{B}$ that is sufficient for adjustment; in such case the do-free formula of the total effect is written as $\sum_{\mathbb{b}}\Pr(y_t\mid x_{t-\gamma},\mathbb{b})\Pr(\mathbb{b})$. When such a set does not exist due to latent confounding, the standard front-door criterion, as introduced by \cite{Pearl_1995}, can sometimes enable the derivation of an alternative do-free formula.

However, in many real-world applications such as medicine or epidemiology, experts often cannot provide the FT-ADMG. Furthermore, discovering the true FT-ADMG  from real observational data is challenging~\citep{Spirtes_2000,Mogensen_2018,Runge_2019,Assaad_2022survey} 
because 1) causal discovery algorithms rely on untestable assumptions that are not always satisfied in real applications~\citep{Ait_Bachir_2023} and 2) even when the assumptions hold, the output of such methods does not always correspond to the true graph but rather to a class of graphs that include the true graph~\citep{Gerhardus_2020}.
Therefore, experts typically rely on a partially specified representation of the FT-ADMG, known as a summary causal graph\footnote{One key motivation for employing summary causal graphs stems from the current limitations of causal discovery methods, which often struggle in practical applications due to their reliance on  non-testable strong assumptions. Particularly in fields like medicine and epidemiology, researchers tend to prefer graphs built from prior knowledge rather than those inferred purely from data. However, fully specified graphs are very complicated to construct and validate manually and that is why it is important to work with (and ask experts to build) partially specified graphs such as summary causal graphs. That being said, recent studies have demonstrated that inferring summary causal graphs from data is more feasible than inferring FT-ADMGs from data~\citep{Wahl_2024}. This supports the relevance of our work even if researchers choose to utilize causal discovery, suggesting that our approach remains applicable when taking a data-driven approach to get the summary causal graph.}.

%\begin{definition}[Summary Causal Graph with possible latent confounding]
%	\label{def:SCG}
%	Consider an FT-ADMG $\mathcal{G} = (\mathbb{V}, \mathbb{E})$. The \emph{summary causal graph (SCG)} $\mathcal{G}^{s} = (\mathbb{S}, \mathbb{E}^{{s}})$ compatible with $\mathcal{G}$ is defined in the following way:
%	\begin{equation*}
%		\begin{aligned}
%			&\mathbb{S} &:=& \{Y &|& \forall Y_t \in \mathbb{V}\},&\\
%			&\mathbb{E}^{s1} &:=& \{X\rightarrow Y &|& \forall X,Y \in \mathbb{S},~\exists t'\leq t\in \mathbb{Z} \st X_{t'}\rightarrow Y_{t}\in\mathbb{E}\},&\\
%			&\mathbb{E}^{s2} &:=& \{X\longdashleftrightarrow Y &|& \forall X,Y \in \mathbb{S},~ X\ne Y \text{ and } \exists t'\leq t\in \mathbb{Z} \st X_{t'}\longdashleftrightarrow Y_{t}\in\mathbb{E}\},&
%		\end{aligned}
%	\end{equation*}
%	where $\mathbb{E}^{s}=\mathbb{E}^{s1}\cup \mathbb{E}^{s2}$.
%\end{definition}

\begin{definition}[Summary Causal Graph with possible latent confounding]
	\label{def:SCG}
	Consider an FT-ADMG $\mathcal{G} = (\mathbb{V}, \mathbb{E})$. The \emph{summary causal graph (SCG)} $\mathcal{G}^{s} = (\mathbb{S}, \mathbb{E}^{{s}})$ compatible with $\mathcal{G}$ is defined in the following way:
	\begin{equation*}
		\begin{aligned}
			% &\mathbb{S} &:=& \{Y &\mid& \forall Y_t \in \mathbb{V}\},&\\
			\mathbb{S} :=& \{V^i = (V^i_{t_0},\cdots,V^i_{t_{max}}) &&\mid \forall i \in [1, d]\},\\
			\mathbb{E}^{s1} :=& \{X\rightarrow Y &&\mid \forall X,Y \in \mathbb{S},~\exists t'\leq t\in [t_0,t_{max}]  \st X_{t'}\rightarrow Y_{t}\in\mathbb{E}\},\\
			\mathbb{E}^{s2} :=& \{X\longdashleftrightarrow Y &&\mid \forall X,Y \in \mathbb{S},~\exists t',t \in [t_0,t_{max}]  \st X_{t'}\longdashleftrightarrow Y_{t}\in\mathbb{E}\}.        
		\end{aligned}\\
	\end{equation*}
	where $\mathbb{E}^{s} = \mathbb{E}^{s1}\cup \mathbb{E}^{s2}.$
\end{definition}

\textbf{SCG notations}
For an SCG $\mathcal{G}^s$, a directed path from $X$ to $Y$ and the edge $Y\rightarrow X$ form a \emph{directed cycle} in $\mathcal{G}^s$.  Additionally,  the specific case where a vertex has a directed edge to itself (self loop) is considered as a cycle too.
$Cycles(X,\mathcal{G}^s)$ denotes the set of all directed cycles containing $X$ in $\mathcal{G}^s$. 
A \emph{directed path} between $X$ and $Y$ is a path between $X$ and $Y$ which starts by $X\rightarrow$ and does not contain any arrow on the path pointing strictly towards $X$. 
A \emph{back-door path} between $X$ and $Y$ is a path between $X$ and $Y$ which starts by either $X\leftarrow$ or $X\longdashleftrightarrow$ or $X \leftrightarrows$. 
If $X\rightarrow Y$ or $X\leftrightarrows Y$, then $X$ is a \emph{parent} of $Y$. The notions of ancestors and descendants are defined similarly as in the case of FT-ADMGs.
A path in an SCG is blocked given a set $\mathbb{W}$ if it contains a strict collider at $W$ (i.e., $\rightarrow W \leftarrow$, and not $\leftrightarrows W \leftarrow$ or $\rightarrow W \leftrightarrows$) such that $\mathbb{W}\cap Desc(W, \mathcal{G}^s)=\emptyset$ or if it contains strict non-collider at $W$ (i.e., $\rightarrow W \rightarrow$ or  $\leftarrow W \rightarrow$ or $\leftarrow W \leftrightarrows$ , and not $\leftrightarrows W \leftarrow$) such that $W\in \mathbb{W}$ and there exists a directed edge pointing from $W$ to a vertex on the path that does not form a cycle with $W$.
A path in an SCG is activated if it is not blocked. 
In particular, a path is  activated by an empty set if it does not contain any strict collider.
The notions of interception is defined similarly as in the case of FT-ADMGs.
Lastly, each vertex in an SCG is called a cluster or a macro vertex and it represents a time series.

Many FT-ADMGs might share the same compatible SCG. For example, 
Figure~\ref{fig:SCG1} presents the SCG compatible with the two FT-ADMGs in Figures~\ref{fig:FTADMG1} and \ref{fig:FTADMG2}, Figure~\ref{fig:SCG2} presents the SCG compatible with the two FT-ADMGs in Figures~\ref{fig:FTADMG3} and \ref{fig:FTADMG4}, and Figure~\ref{fig:SCG3} presents the SCG compatible with the two FT-ADMGs in Figures~\ref{fig:FTADMG5} and \ref{fig:FTADMG6}.
For a given SCG $\mathcal{G}^s$, any FT-ADMG from which $\mathcal{G}^s$ can be derived is called as a \textit{candidate FT-ADMG} for $\mathcal{G}^s$. 
The set of all candidate FT-ADMGs for $\mathcal{G}^s$ is denoted by $\mathcal{C}(\mathcal{G}^s)$.

%\textbf{Remarks}
%\textcolor{olive}{
%\begin{itemize}
%	\item SCG
%	\item SCG 
%\end{itemize}
%}

This paper focues on identifying the total effect  \emph{when the only knowledge one has of the underlying DTDSCM consists in the  SCG derived from the unknown, true FT-ADMG}.
In this setting, the identifiability of the total effect in SCGs is defined as follows:

\begin{definition}[Identifiability of total effects in SCGs]
	Consider an SCG $\mathcal{G}^s$.
	Let $X_{t-\gamma}$ and $Y_t$ be distinct vertices in every candidate FT-ADMG in $\mathcal{C}(\mathcal{G}^s)$. The total effect of $X_{t-\gamma}$ on $Y_t$ is identifiable in $\mathcal{G}^s$ if $\Pr(y_t|do(x_{t-\gamma}))$ is uniquely computable from any positive observational distribution consistent with any FT-ADMG in $\mathcal{C}(\mathcal{G}^s)$.
\end{definition}

Obviously, when the true FT-ADMG is unknown but the compatible SCG is accessible, it is possible to enumurate all candidates FT-ADMGs and then search for a do-free formula applicable for each of those FT-ADMGs. 
Within this approach, it is possible to identify the total effect if it is possible to find a set of micro vertices that satisfies the front-door criterion for each FT-ADMG in $\mathcal{C}(\mathcal{G}^s)$.  However, enumerating all candidate FT-ADMGs is computationally expensive \citep{Robinson_1977}, even when considering the constraints given by an SCG.
Therefore, this paper addresses the following technical problem:
\begin{problem}
	Consider  an SCG $\mathcal{G}^s$ and the total effect $\Pr(y_t | do(x_{t-\gamma}))$. The aim is to find sufficient conditions for identifying $\Pr(y_t | do(x_{t-\gamma}))$ using an SCG with latent confounding without enumerating all candidate FT-ADMGs in $\mathcal{C}(\mathcal{G}^s)$.
\end{problem}

\section{Unsuitability of the standard front-door criterion when applied to SCGs}
\label{sec:standard_front_door}

This section elucidates why the standard front-door criterion does not straightforwardly apply to SCGs. 
The standard front-door criterion consists of three  conditions. Initially, the criterion was introduced for ADMGs (which means it can also be correctly applied to an FT-ADMG), but in order to illustrate its unsuitability in the context of this paper, it is presented here with few modification (in \textcolor{MY_blue}{blue}) given Definition~\ref{def:standard_front_door}.

\begin{definition}[Standard front-door criterion naively applied to SCGs]
	\label{def:standard_front_door}
	Consider an \textcolor{MY_blue}{SCG $\mathcal{G}^s$}. A set of \textcolor{MY_blue}{macro} vertices $\mathbb{W}$ in \textcolor{MY_blue}{$\mathcal{G}^s$} satisfy the front-door criterion relative to a pair of micro vertices $(X_{t-\gamma}, Y_t)$ \textcolor{MY_blue}{compatible with a pair of macro vertices $(X,Y)$ in $\mathcal{G}^s$} if:
	\begin{enumerate}[nosep]
		\item\label{item:front_door_ADMG:1} $\mathbb{W}$ intercepts all activated directed paths from \textcolor{MY_blue}{$X$} to \textcolor{MY_blue}{$Y$};
		\item\label{item:front_door_ADMG:2} there is no activated back-door path from \textcolor{MY_blue}{$X$} to $\mathbb{W}$; 
		\item\label{item:front_door_ADMG:3} all back-door paths from $\mathbb{W}$ to \textcolor{MY_blue}{$Y$} are blocked by \textcolor{MY_blue}{$X$};
	\end{enumerate}
\end{definition}
To obtain the standard front-door criterion for FT-ADMGs, the terms in \textcolor{MY_blue}{blue} need to be modified as follows: since the variables of interest ($X_{t-\gamma}, Y_t$) are already vertices in the given graph (FT-ADMG), there's no need to map these vertices to their compatible counterparts in the SCG. Therefore, the phrase "compatible with a pair of macro vertices $(X,Y)$ in $\mathcal{G}^s$" should be removed. Next, replace all remaining instances of "SCGs" with "FT-ADMGs" and "$\mathcal{G}^s$" with "$\mathcal{G}$". Additionally, since FT-ADMGs do not include macro vertices, the term "macro"  should be replaced with "micro". Finally, replace all remaining instances of "$X$" and "$Y$" with "$X_{t-\gamma}$" and "$Y_t$".
Pearl's insight is that if there exists a set $\mathbb{W}$ that intercepts all directed paths from $X_{t-\gamma}$ to $Y_t$ and there is no hidden confounding that cannot be blocked by  $X_{t-\gamma}$, the total effect of $X_{t-\gamma}$ on $Y_t$ can be identified. This is achieved by: 
(i) identifying the effect of $X_{t-\gamma}$ on $\mathbb{W}$  (which is identifiable because the unobserved confounders influence $X_{t-\gamma}$ but not $\mathbb{W}$); 
(ii) identifying the effect of  $\mathbb{W}$ on $Y_t$ conditional on $X_{t-\gamma}$  (which is identifiable because the unobserved confounders affect $Y_t$  but not $\mathbb{W}$); and 
(iii) multiplying the do-free formulas $\Pr(x_{t-\gamma}\mid \mathbb{w})$  and $\Pr(y_t\mid x_{t-\gamma}, \mathbb{w})\Pr(x_{t-\gamma})$. Intuitively, in the context of an FT-ADMG (or an ADMG), the standard front-door criterion enables the identification of a total effect that cannot be identified through adjustment alone, by decomposing it into two identifiable total effects.
The previously proposed extension of the front-door criterion to time series settings~\cite{Eichler_2009}\textendash referred to here as the Granger front-door criterion\textendash equivalent to the one given in Definition~\ref{def:standard_front_door} with one minor change which makes slightly more restrictive: the Granger front-door criterion forbid all back-door paths from $X$ to $\mathbb{W}$ (activated and not activated) in Condition~\ref{item:front_door_ADMG:2}. 
In contrast to the criterion proposed in the next section, the Granger front-door criterion relies on two key assumptions: it implicitly assumes there is no latent confounding across different time points within a single time series, and it explicitly rules out instantaneous causal relationships. Under these assumptions\textendash i.e., when the SCG reflects the absence of latent confounding over time and no instantaneous effects\textendash the Granger front-door criterion allows identification of a total effect that is not identifiable via standard adjustment, by decomposing it into two identifiable total effects.

%\footnote{In \cite{Eichler_2009}, the Granger front-door criterion is written differently and as follows: Every directed path from $X$ to $Y$ is blocked by $\mathbb{W}$; there are no directed edges; there are no bidirected edge $-$.}

It turns out that directly applying the standard front-door criterion as introduced in \cite{Pearl_1995} to SCGs is not suitable.
The main reason for this unsuitability is that not having a back-door path between two macro vertices in the SCG does not imply that there is no back-door path between two compatible micro vertice in any FT-ADMG in $\mathcal{C}(\mathcal{G}^s)$. Which means that even if Conditions~\ref{item:front_door_ADMG:2} and \ref{item:front_door_ADMG:3} of Definition~\ref{def:standard_front_door} are satisfied in the SCG $\mathcal{G}^s$, this does not guarantee that Conditions~\ref{item:front_door_ADMG:2} and \ref{item:front_door_ADMG:3} of the standard front-door criterion are met for any FT-ADMG within $\mathcal{C}(\mathcal{G}^s)$. Sometimes, this would imply that even if the standard front-door criterion is satisfied when applied to SCGs, the total effect of interest might be non identifiable.
For illustratation consider the total effect $P(y_t\mid x_{t-1})$ and the SCG in Figure~\ref{fig:SCG3} where $W$ satisfies the standard front-door criterion with respect to $X$ and $Y$. However, the total effect is not identifiable using $W$ because in the FT-ADMG in Figure~\ref{fig:FTADMG5} or Figure~\ref{fig:FTADMG6} (which is assumed to be unknown),  there is a back-door path between $X_{t-1}$ and $W_t$ passing by $X_t$ which should not be blocked by $X_t$ (since $X_t$ is a descendant of $X_{t-1}$ and ancestor of $W_t$).
Similarly, the Granger front-door criterion would also incorrectly conclude that the total effect is identifiable. Recall that the Granger front-door criterion assumes the absence of latent confounding across different time points within a single time series, and explicitly assumes no instantaneous causal relationships and both of these assumptions are violated in the ADMGs shown in Figure\ref{fig:FTADMG5} and Figure~\ref{fig:FTADMG6}. Therefore, it is not surprising that, when relying solely on the SCG and lacking information about the corresponding FT-ADMG, the Granger front-door criterion would mistakenly suggest identifiability of the total effect.

Let us also consider another similar example, with a similar SCG but where $X \rightarrow X$ is not in the SCG but there is a larger cycle containing $X$, like the SCG in Figure~\ref{fig:satisfying_inst:1.1} and suppose that our aim is to identify the total effect $\Pr(y_t \mid do(X_{t-1}))$ with $\gamma_{\max} = 1$. It is possible to construct an FT-ADMG compatible with this SCG in which a back-door path exists:
$X_{t-1} \leftarrow U_{t-1} \rightarrow X_t\rightarrow W_t$. This path cannot be blocked by conditioning on $X_t$ because the presence of a self loop on $X$ implies that $X_t$ could be a descendant of $X_{t-1}$. At the same time, blocking this path via $U_{t-1}$ is also problematic, because it is possible to imagine an alternative FT-ADMG, also compatible with the given SCG, where the same path is instead a directed path: $X_{t-1} \rightarrow U_{t-1} \rightarrow X_t\rightarrow W_t$, implying that $U_{t-1}$ is a descendant of $X_{t-1}$. In this case, conditioning on $U_{t-1}$ would block a valid directed path, further complicating identification. However, in such cases, applying the standard front-door criterion or the Granger front-door criterion would incorrectly suggest that the total effect is identifiable\textemdash when, in reality, it is not. Again, this is unsurprising, as these tools were not designed for the specific setting considered in this paper.

At an intuitive level, the failure of the standard front-door criterion and of the Granger front-door criterion  in these cases arises due to the simultaneous presence of instantaneous relations between time series and either $X\longdashleftrightarrow X$ and a self loop on $X$ in $\mathcal{G}^s$ or just simply larger cycles on $X$. The existence of $X\longdashleftrightarrow X$ in $\mathcal{G}^s$ implies that there may be a back-door path between $X_{t-\gamma}$ (for $\gamma > 0$) and $W_{t}$, starting with  $X_{t-\gamma} \longdashleftrightarrow X_t$. Meanwhile, the presence of a self loop on $X$ means that this back-door path cannot be blocked using $X_t$, since $X_t$ would be a descendant of $X_{t-\gamma}$. As a result, conditioning on $X_t$ would block a directed path from $X_{t-\gamma}$ to $W_{t}$, leading to bias in the estimation of the total effect of $X_{t-\gamma}$ to $W_{t}$. 
%Furthermore, blocking this path using alternative variables is often infeasible, as conditioning on any variable along the path would introduce a similar bias. This happens because these variables may themselves be directly influenced by $X_{t-\gamma}$ due to the presence of the cycle. 
It is important to note that this kind of back-door paths is not immediately apparent when examining only the SCG with classical tools built for DAGs or ADMGs such as the standard front-door criterion. The same intuition holds for the case when there is larger cycles containing $X$ even in the absence of $X\longdashleftrightarrow X$  in the SCG.

In other cases, the unsuitability of the standard front-door criterion when applied to SCGs might imply that, even if the total effect of interest is identifiable, the corresponding do-free formula could be  more complex than the one associated with the standard front-door criterion~\citep{Pearl_1995} or the Granger  front-door criterion~\citep{Eichler_2009}. %In some instances, it could even be more complicated than the do-free formula derived by combining the back-door criterion with the front-door criterion, as described by \cite{Pearl_1995}.
For example, consider the total effect $P(y_t \mid x_{t-1})$ and the SCG in Figure~\ref{fig:SCG1} (no cycle containing $X$ in the SCG but $X\longdashleftrightarrow X$ is in the SCG), where $W$ satisfies the standard front-door criterion with respect to $X$ and $Y$. Now, consider one of the FT-ADMGs given in Figures~\ref{fig:FTADMG1} and \ref{fig:FTADMG2} compatible with this SCG. In this FT-ADMG, the micro vertices (i.e., $\{W_{t-1}, W_t\}$) corresponding to the macro vertex $W$, which intercepts all directed paths from $X_{t-1}$ to $Y_t$, do not satisfy the standard front-door criterion for $(X_{t-1}, Y_t)$, since there exists a back-door path $X_{t-1} \longdashleftrightarrow X_t \rightarrow W_t$. However, since there is no cycle containing $X$ in the SCG, this back-door path can be easily blocked by adjusting for $X_t$. Doing so results in a do-free formula similar to the one derived using the front-door criterion, but with the additional step of adjusting for $X_t$.
These types of do-free formulas are derived by combining the back-door criterion with the front-door criterion, as discussed in \citep{Pearl_1995} and subsequently applied by other authors~\citep{Fulcher_2019}.
However, it is important to note that the same do-free formula should not be used if  the SCG in Figure~\ref{fig:SCG2} (there is cycle containing $X$ in the SCG but $X\longdashleftrightarrow X$ is not in the SCG) is considered.  
Moreover, distinguishing between these different scenarios is not feasible when relying solely on the SCG, whether using classical tools developed for DAGs or ADMGs, such as the standard front-door criterion, or tools tailored to SCGs that assume no latent confounding across time and no instantaneous relations, such as the Granger front-door criterion.
These types of unsuitability also exists when considering instantaneous total effects. For instance, take the SCG in Figure~\ref{fig:satisfying:1.1} and the total effect of $X_t$ on $Y_t$. In this case, there are no back-door paths between $X$ and $W$ within the SCG. However, it is possible to construct an FT-ADMG where a back-door path exists: $X_t\leftarrow U_t \leftarrow … \leftarrow U_{t-1000} \rightarrow X_{t-1000} \rightarrow W_{t-1000} \rightarrow … \rightarrow W_{t}$ (assuming $t_0< t-1000$). This path must be blocked to ensure correct identification using the front-door formula.
To block this and similar paths, one would need to either adjust on all variables between $X_{t_0}$ and $X_{t-1}$ (which is impractical) or simply adjust on $U_t, U_{t-1}, \dots, U_{t-\gamma_{\max}}$. However, the standard front-door criterion and its associated theorem cannot be applied to infer this requirement.%, highlighting the limitations of standard causal identification tools in such settings.

\section{The identifiability of total effects in SCGs with latent confounding}
\label{sec:main}

This section, presents the main results of the paper, with some of the corresponding proofs omitted and provided in the appendix.
Let us start by giving the extension of the front-door criterion for SCGs.

\begin{definition}[SCG-front-door criterion]
	\label{def:front_door_SCG}
	Consider an SCG $\mathcal{G}^s$. A set of macro vertices $\mathbb{W}$ in $\mathcal{G}^s$ satisfy the SCG-front-door criterion relative to a pair of micro vertices $(X_{t-\gamma}, Y_t)$ compatible with a pair of macro vertices $(X,Y)$ in $\mathcal{G}^s$ if:
	\begin{enumerate}[nosep]
		\item\label{item:front_door_SCG:1} $\mathbb{W}$ intercepts all activated directed paths from $X$ to $Y$;
		\item\label{item:front_door_SCG:2} there is no activated back-door path from $X$ to $\mathbb{W}$; 
		\item\label{item:front_door_SCG:3} all back-door paths from $\mathbb{W}$ to $Y$ are blocked by $X$;
		%		\item for all $W\in \mathbb{W}$, there is no activated directed path from $Y$ to $W$; and
		\item one of the following holds:
		\begin{enumerate}
			\item\label{item:front_door_SCG:a}  $Cycles(X, \mathcal{G}^s)=\emptyset$ ; or %and  Cycles$^{>}(W, \mathcal{G}^s)= \emptyset$
			\item\label{item:front_door_SCG:c}  $Cycles(X, \mathcal{G}^s)=\{X\rightarrow X\}$ and $\not\exists Z\in Anc(X, \mathcal{G}^s)$ such that $X \longdashleftrightarrow Z$ in $\mathcal{G}^s$; or
			\item\label{item:front_door_SCG:b} $\gamma = 0$.
		\end{enumerate}
	\end{enumerate}
\end{definition}

Conditions~\ref{item:front_door_SCG:1}-\ref{item:front_door_SCG:3} in Definition~\ref{def:front_door_SCG} correspond to the three conditions in the standard front-door criterion~\citep{Pearl_1995}.
For an illustration of Definition~\ref{def:front_door_SCG}, Figure~\ref{fig:satisfying} provides several examples of SCGs, including the SCGs given in Figures~\ref{fig:SCG2} and Figures~\ref{fig:SCG3}, where Definition~\ref{def:front_door_SCG} is satisfied for $W$ relative to $(X_{t-\gamma}, Y_t)$. Notice that Definition~\ref{def:front_door_SCG} remains satisfied for $W$ relative to $(X_t, Y_t)$ if a cycle is added on $X$ that does not involve any other vertex in the presented SCGs, as illustrated in Figure~\ref{fig:satisfying_inst} which includes the SCG given in Figure~\ref{fig:SCG1}. Note  that in these SCGs, $W$ is not necessarily the only vertex satisfying Definition~\ref{def:front_door_SCG}, for example, in Figure~\ref{fig:satisfying:4}, $U$, also satisfied the SCG-front-door criterion. 
In contrast, Figure~\ref{fig:not_satisfying} provides several examples of SCGs where Definition~\ref{def:front_door_SCG} is not satisfied for any vertex relative to $(X_{t-\gamma}, Y_t)$.

\begin{figure*}[t!]
	\centering
	\begin{subfigure}{.28\textwidth}
		\centering
		\begin{tikzpicture}[{black, circle, draw, inner sep=0}]
			\tikzset{nodes={draw,rounded corners},minimum height=0.6cm,minimum width=0.6cm}	
			\tikzset{latent/.append style={white, fill=black}}
			
			%			\node[latent] (L) at (1.75,1.2) {$L$};
			\node[fill=MY_red] (X) at (0,0) {$X$} ;
			\node[fill=MY_blue] (Y) at (3,0) {$Y$};
			\node[fill=green!30] (Z) at (1.5,0) {$W$};

			%			\begin{scope}[transform canvas={xshift=-.1em, yshift=.1em}]
				%				\draw [->,>=latex,] (X) -- (Z);
				%			\end{scope}
			%			\begin{scope}[transform canvas={xshift=.1em, yshift=-.1em}]
				%				\draw [<-,>=latex,] (X) -- (Z);
				%			\end{scope}
			\draw [->,>=latex,] (X) -- (Z);

			\draw[->,>=latex] (Z) -- (Y);

			%			\draw[->,>=latex, dashed] (L) to [out=0,in=90, looseness=1] (Y);
			%			\draw[->,>=latex, dashed] (L) to [out=180,in=90, looseness=1] (X);
			\draw[<->,>=latex, dashed] (X) to [out=90,in=90, looseness=1] (Y);
			
			%			\draw[->,>=latex] (X) to [out=165,in=120, looseness=2] (X);
			\draw[->,>=latex] (Y) to [out=15,in=60, looseness=2] (Y);
			%			\draw[->,>=latex] (L) to [out=165,in=120, looseness=2] (L);
			\draw[->,>=latex] (Z) to [out=15,in=60, looseness=2] (Z);
			
			\draw[<->,>=latex, dashed] (X) to [out=-155,in=-110, looseness=2] (X);
			\draw[<->,>=latex, dashed] (Y) to [out=-25,in=-70, looseness=2] (Y);
			%\draw[<->,>=latex, dashed] (Z) to [out=-25,in=-70, looseness=2] (Z);			
		\end{tikzpicture}
		\caption{}
		\label{fig:satisfying:1}
	\end{subfigure}
	\hfill 
	\begin{subfigure}{.32\textwidth}
		\centering
		\begin{tikzpicture}[{black, circle, draw, inner sep=0}]
			\tikzset{nodes={draw,rounded corners},minimum height=0.6cm,minimum width=0.6cm}	
			\tikzset{latent/.append style={white, fill=black}}
			
			%			\node[latent] (L) at (1.75,1.2) {$L$};
			\node[fill=MY_red] (X) at (0,0) {$X$} ;
			\node[fill=MY_blue] (Y) at (3,0) {$Y$};
			\node[fill=green!30] (Z) at (1.5,0) {$W$};
			\node (U) at (-1.5,0) {$U$};
			\node (W) at (-1.5,1) {$Z$};
			
			%			\begin{scope}[transform canvas={xshift=-.1em, yshift=.1em}]
				%				\draw [->,>=latex,] (X) -- (Z);
				%			\end{scope}
			%			\begin{scope}[transform canvas={xshift=.1em, yshift=-.1em}]
				%				\draw [<-,>=latex,] (X) -- (Z);
				%			\end{scope}
			\draw [->,>=latex,] (X) -- (Z);

			\draw[->,>=latex] (Z) -- (Y);
			
			\draw[->,>=latex] (U) -- (X);
			
			\draw[->,>=latex] (W) -- (U);
			
			%			\draw[->,>=latex, dashed] (L) to [out=0,in=90, looseness=1] (Y);
			%			\draw[->,>=latex, dashed] (L) to [out=180,in=90, looseness=1] (X);
			\draw[<->,>=latex, dashed] (X) to [out=90,in=90, looseness=1] (Y);
			
			%			\draw[->,>=latex] (X) to [out=165,in=120, looseness=2] (X);
			\draw[->,>=latex] (Y) to [out=15,in=60, looseness=2] (Y);
			%			\draw[->,>=latex] (L) to [out=165,in=120, looseness=2] (L);
			\draw[->,>=latex] (Z) to [out=15,in=60, looseness=2] (Z);
			\draw[->,>=latex] (U) to [out=165,in=120, looseness=2] (U);
			\draw[->,>=latex] (W) to [out=165,in=120, looseness=2] (W);
			
			\draw[<->,>=latex, dashed] (X) to [out=-155,in=-110, looseness=2] (X);
			\draw[<->,>=latex, dashed] (Y) to [out=-25,in=-70, looseness=2] (Y);
			%		\draw[<->,>=latex, dashed] (Z) to [out=-25,in=-70, looseness=2] (Z);			
		\end{tikzpicture}
		\caption{}
		\label{fig:satisfying:1.1}
	\end{subfigure}
	\hfill 
	\begin{subfigure}{.32\textwidth}
		\centering
		\begin{tikzpicture}[{black, circle, draw, inner sep=0}]
			\tikzset{nodes={draw,rounded corners},minimum height=0.6cm,minimum width=0.6cm}	
			\tikzset{latent/.append style={white, fill=black}}
			
			%			\node[latent] (L) at (1.75,1.2) {$L$};
			\node[fill=MY_red] (X) at (0,0) {$X$} ;
			\node[fill=MY_blue] (Y) at (3,0) {$Y$};
			\node[fill=green!30] (Z) at (1.5,0) {$W$};
			\node (U) at (-1.5,0) {$U$};
			\node (W) at (-1.5,1) {$Z$};
			
			%			\begin{scope}[transform canvas={xshift=-.1em, yshift=.1em}]
				%				\draw [->,>=latex,] (X) -- (Z);
				%			\end{scope}
			%			\begin{scope}[transform canvas={xshift=.1em, yshift=-.1em}]
				%				\draw [<-,>=latex,] (X) -- (Z);
				%			\end{scope}
			\draw [->,>=latex,] (X) -- (Z);

			\draw[->,>=latex] (Z) -- (Y);
			\draw[->,>=latex] (W) -- (U);
			
			\begin{scope}[transform canvas={xshift=.0em, yshift=.1em}]
				\draw[->,>=latex] (U) to [out=50,in=130, looseness=1] (Z);
			\end{scope}
			\begin{scope}[transform canvas={xshift=.0em, yshift=-.1em}]
				\draw[<-,>=latex] (U) to [out=40,in=140, looseness=1] (Z);
			\end{scope}
			
			%			\draw[->,>=latex, dashed] (L) to [out=0,in=90, looseness=1] (Y);
			%			\draw[->,>=latex, dashed] (L) to [out=180,in=90, looseness=1] (X);
			\draw[<->,>=latex, dashed] (X) to [out=90,in=90, looseness=1] (Y);
			
			%			\draw[->,>=latex] (X) to [out=165,in=120, looseness=2] (X);
			\draw[->,>=latex] (Y) to [out=15,in=60, looseness=2] (Y);
			%			\draw[->,>=latex] (L) to [out=165,in=120, looseness=2] (L);
			\draw[->,>=latex] (Z) to [out=15,in=60, looseness=2] (Z);
			\draw[->,>=latex] (U) to [out=165,in=120, looseness=2] (U);
			\draw[->,>=latex] (W) to [out=165,in=120, looseness=2] (W);
			
			\draw[<->,>=latex, dashed] (X) to [out=-155,in=-110, looseness=2] (X);
			\draw[<->,>=latex, dashed] (Y) to [out=-25,in=-70, looseness=2] (Y);
			%\draw[<->,>=latex, dashed] (W) to [out=-155,in=-110, looseness=2] (W);			
			%\draw[<->,>=latex, dashed] (Z) to [out=-25,in=-70, looseness=2] (Z);			
		\end{tikzpicture}
		\caption{}
		\label{fig:satisfying:2}
	\end{subfigure}
	
	\begin{subfigure}{.44\textwidth}
		\centering
		\begin{tikzpicture}[{black, circle, draw, inner sep=0}]
			\tikzset{nodes={draw,rounded corners},minimum height=0.6cm,minimum width=0.6cm}	
			\tikzset{latent/.append style={white, fill=black}}
			
			%			\node[latent] (L) at (1.75,1.2) {$L$};
			\node[fill=MY_red] (X) at (0,0) {$X$} ;
			\node (U) at (1.5,0) {$U$};
			\node (Z) at (3,0) {$Z$};
			\node[fill=green!30] (W) at (4.5,0) {$W$};
			\node[fill=MY_blue] (Y) at (6,0) {$Y$};

			\draw[->,>=latex] (X) -- (U);
			
			\begin{scope}[transform canvas={yshift=.15em}]
				\draw [->,>=latex,] (U) -- (Z);
			\end{scope}
			\begin{scope}[transform canvas={yshift=-.15em}]
				\draw [<-,>=latex,] (U) -- (Z);
			\end{scope}
			%			\draw [->,>=latex,] (X) -- (Z);

			\draw[->,>=latex] (Z) -- (W);
			\draw[->,>=latex] (W) -- (Y);

			%			\draw[->,>=latex, dashed] (L) to [out=0,in=90, looseness=1] (Y);
			%			\draw[->,>=latex, dashed] (L) to [out=180,in=90, looseness=1] (X);
			\draw[<->,>=latex, dashed] (X) to [out=90,in=90, looseness=1] (Y);
			
			%	\draw[->,>=latex] (X) to [out=165,in=120, looseness=2] (X);
			\draw[->,>=latex] (Y) to [out=15,in=60, looseness=2] (Y);
			%			\draw[->,>=latex] (L) to [out=165,in=120, looseness=2] (L);
			\draw[->,>=latex] (Z) to [out=15,in=60, looseness=2] (Z);
			\draw[->,>=latex] (W) to [out=15,in=60, looseness=2] (W);
			\draw[->,>=latex] (U) to [out=165,in=120, looseness=2] (U);
			
			\draw[<->,>=latex, dashed] (X) to [out=-155,in=-110, looseness=2] (X);
			%\draw[<->,>=latex, dashed] (U) to [out=-155,in=-110, looseness=2] (U);
			\draw[<->,>=latex, dashed] (Y) to [out=-25,in=-70, looseness=2] (Y);
			%\draw[<->,>=latex, dashed] (Z) to [out=-25,in=-70, looseness=2] (Z);			
			%\draw[<->,>=latex, dashed] (W) to [out=-25,in=-70, looseness=2] (W);			
		\end{tikzpicture}
		\caption{}
		\label{fig:satisfying:3}
	\end{subfigure}
	\hfill 
	\begin{subfigure}{.44\textwidth}
		\centering
		\begin{tikzpicture}[{black, circle, draw, inner sep=0}]
			\tikzset{nodes={draw,rounded corners},minimum height=0.6cm,minimum width=0.6cm}	
			\tikzset{latent/.append style={white, fill=black}}
			
			%		\node[latent] (L) at (1.75,1.2) {$L$};
			\node[fill=MY_red] (X) at (0,0) {$X$} ;
			\node[fill=green!30] (U) at (1.5,0) {$W$};
			\node (Z) at (3,0) {$Z$};
			\node (W) at (4.5,0) {$U$};
			\node[fill=MY_blue] (Y) at (6,0) {$Y$};
			
			%			\begin{scope}[transform canvas={xshift=-.1em, yshift=.1em}]
				%				\draw [->,>=latex,] (X) -- (Z);
				%			\end{scope}
			%			\begin{scope}[transform canvas={xshift=.1em, yshift=-.1em}]
				%				\draw [<-,>=latex,] (X) -- (Z);
				%			\end{scope}
			\draw [->,>=latex,] (X) -- (U);
			
			\draw [->,>=latex,] (U) -- (Z);
			
			%		\draw[->,>=latex] (W) -- (Y);
			\begin{scope}[transform canvas={yshift=.15em}]
				\draw [->,>=latex,] (Z) -- (W);
			\end{scope}
			\begin{scope}[transform canvas={yshift=-.15em}]
				\draw [<-,>=latex,] (Z) -- (W);
			\end{scope}
			
			\draw [->,>=latex,] (W) -- (Y);
			
			%		\draw[->,>=latex, dashed] (L) to [out=0,in=90, looseness=1] (Y);
			%		\draw[->,>=latex, dashed] (L) to [out=180,in=90, looseness=1] (X);
			\draw[<->,>=latex, dashed] (X) to [out=90,in=90, looseness=1] (Y);
			
			%			\draw[->,>=latex] (X) to [out=165,in=120, looseness=2] (X);
			\draw[->,>=latex] (Y) to [out=15,in=60, looseness=2] (Y);
			%		\draw[->,>=latex] (L) to [out=165,in=120, looseness=2] (L);
			\draw[->,>=latex] (Z) to [out=165,in=120, looseness=2] (Z);
			\draw[->,>=latex] (W) to [out=15,in=60, looseness=2] (W);
			\draw[->,>=latex] (U) to [out=165,in=120, looseness=2] (U);
			
			\draw[<->,>=latex, dashed] (X) to [out=-155,in=-110, looseness=2] (X);
			%\draw[<->,>=latex, dashed] (U) to [out=-155,in=-110, looseness=2] (U);
			\draw[<->,>=latex, dashed] (Y) to [out=-25,in=-70, looseness=2] (Y);
			%\draw[<->,>=latex, dashed] (Z) to [out=-25,in=-70, looseness=2] (Z);			
			%\draw[<->,>=latex, dashed] (W) to [out=-25,in=-70, looseness=2] (W);			
		\end{tikzpicture}
		\caption{}
		\label{fig:satisfying:4}
	\end{subfigure}

	\begin{subfigure}{.28\textwidth}
		\centering
		\begin{tikzpicture}[{black, circle, draw, inner sep=0}]
			\tikzset{nodes={draw,rounded corners},minimum height=0.6cm,minimum width=0.6cm}	
			\tikzset{latent/.append style={white, fill=black}}
			
			%			\node[latent] (L) at (1.75,1.2) {$L$};
			\node[fill=MY_red] (X) at (0,0) {$X$} ;
			\node[fill=MY_blue] (Y) at (3,0) {$Y$};
			\node[fill=green!30] (Z) at (1.5,0) {$W$};

			%			\begin{scope}[transform canvas={xshift=-.1em, yshift=.1em}]
				%				\draw [->,>=latex,] (X) -- (Z);
				%			\end{scope}
			%			\begin{scope}[transform canvas={xshift=.1em, yshift=-.1em}]
				%				\draw [<-,>=latex,] (X) -- (Z);
				%			\end{scope}
			\draw [->,>=latex,] (X) -- (Z);

			\draw[->,>=latex] (Z) -- (Y);

			%			\draw[->,>=latex, dashed] (L) to [out=0,in=90, looseness=1] (Y);
			%			\draw[->,>=latex, dashed] (L) to [out=180,in=90, looseness=1] (X);
			\draw[<->,>=latex, dashed] (X) to [out=90,in=90, looseness=1] (Y);
			
			\draw[->,>=latex] (X) to [out=165,in=120, looseness=2] (X);
			\draw[->,>=latex] (Y) to [out=15,in=60, looseness=2] (Y);
			%			\draw[->,>=latex] (L) to [out=165,in=120, looseness=2] (L);
			\draw[->,>=latex] (Z) to [out=15,in=60, looseness=2] (Z);
			
			%			\draw[<->,>=latex, dashed] (X) to [out=-155,in=-110, looseness=2] (X);
			\draw[<->,>=latex, dashed] (Y) to [out=-25,in=-70, looseness=2] (Y);
			%\draw[<->,>=latex, dashed] (Z) to [out=-25,in=-70, looseness=2] (Z);			
		\end{tikzpicture}
		\caption{}
		\label{fig:satisfying:5}
	\end{subfigure}
	\hfill 
	\begin{subfigure}{.32\textwidth}
		\centering
		\begin{tikzpicture}[{black, circle, draw, inner sep=0}]
			\tikzset{nodes={draw,rounded corners},minimum height=0.6cm,minimum width=0.6cm}	
			\tikzset{latent/.append style={white, fill=black}}
			
			%			\node[latent] (L) at (1.75,1.2) {$L$};
			\node[fill=MY_red] (X) at (0,0) {$X$} ;
			\node[fill=MY_blue] (Y) at (3,0) {$Y$};
			\node[fill=green!30] (Z) at (1.5,0) {$W$};
			\node (U) at (-1.5,0) {$U$};
			\node (W) at (-1.5,1) {$Z$};
			
			%			\begin{scope}[transform canvas={xshift=-.1em, yshift=.1em}]
				%				\draw [->,>=latex,] (X) -- (Z);
				%			\end{scope}
			%			\begin{scope}[transform canvas={xshift=.1em, yshift=-.1em}]
				%				\draw [<-,>=latex,] (X) -- (Z);
				%			\end{scope}
			\draw [->,>=latex,] (X) -- (Z);

			\draw[->,>=latex] (Z) -- (Y);
			
			\draw[->,>=latex] (U) -- (X);
			
			\draw[->,>=latex] (W) -- (U);
			
			%			\draw[->,>=latex, dashed] (L) to [out=0,in=90, looseness=1] (Y);
			%			\draw[->,>=latex, dashed] (L) to [out=180,in=90, looseness=1] (X);
			\draw[<->,>=latex, dashed] (X) to [out=90,in=90, looseness=1] (Y);
			
			\draw[->,>=latex] (X) to [out=165,in=120, looseness=2] (X);
			\draw[->,>=latex] (Y) to [out=15,in=60, looseness=2] (Y);
			%			\draw[->,>=latex] (L) to [out=165,in=120, looseness=2] (L);
			\draw[->,>=latex] (Z) to [out=15,in=60, looseness=2] (Z);
			\draw[->,>=latex] (U) to [out=165,in=120, looseness=2] (U);
			\draw[->,>=latex] (W) to [out=165,in=120, looseness=2] (W);
			
			%			\draw[<->,>=latex, dashed] (X) to [out=-155,in=-110, looseness=2] (X);
			\draw[<->,>=latex, dashed] (Y) to [out=-25,in=-70, looseness=2] (Y);
			%		\draw[<->,>=latex, dashed] (Z) to [out=-25,in=-70, looseness=2] (Z);			
		\end{tikzpicture}
		\caption{}
		\label{fig:satisfying:6}
	\end{subfigure}
	\hfill 
	\begin{subfigure}{.32\textwidth}
		\centering
		\begin{tikzpicture}[{black, circle, draw, inner sep=0}]
			\tikzset{nodes={draw,rounded corners},minimum height=0.6cm,minimum width=0.6cm}	
			\tikzset{latent/.append style={white, fill=black}}
			
			%			\node[latent] (L) at (1.75,1.2) {$L$};
			\node[fill=MY_red] (X) at (0,0) {$X$} ;
			\node[fill=MY_blue] (Y) at (3,0) {$Y$};
			\node[fill=green!30] (Z) at (1.5,0) {$W$};
			\node (U) at (-1.5,0) {$U$};
			\node (W) at (-1.5,1) {$Z$};
			
			%			\begin{scope}[transform canvas={xshift=-.1em, yshift=.1em}]
				%				\draw [->,>=latex,] (X) -- (Z);
				%			\end{scope}
			%			\begin{scope}[transform canvas={xshift=.1em, yshift=-.1em}]
				%				\draw [<-,>=latex,] (X) -- (Z);
				%			\end{scope}
			\draw [->,>=latex,] (X) -- (Z);

			\draw[->,>=latex] (Z) -- (Y);
			\draw[->,>=latex] (W) -- (U);
			
			\begin{scope}[transform canvas={xshift=.0em, yshift=.1em}]
				\draw[->,>=latex] (U) to [out=50,in=130, looseness=1] (Z);
			\end{scope}
			\begin{scope}[transform canvas={xshift=.0em, yshift=-.1em}]
				\draw[<-,>=latex] (U) to [out=40,in=140, looseness=1] (Z);
			\end{scope}
			
			%			\draw[->,>=latex, dashed] (L) to [out=0,in=90, looseness=1] (Y);
			%			\draw[->,>=latex, dashed] (L) to [out=180,in=90, looseness=1] (X);
			\draw[<->,>=latex, dashed] (X) to [out=90,in=90, looseness=1] (Y);
			
			\draw[->,>=latex] (X) to [out=165,in=120, looseness=2] (X);
			\draw[->,>=latex] (Y) to [out=15,in=60, looseness=2] (Y);
			%			\draw[->,>=latex] (L) to [out=165,in=120, looseness=2] (L);
			\draw[->,>=latex] (Z) to [out=15,in=60, looseness=2] (Z);
			\draw[->,>=latex] (U) to [out=165,in=120, looseness=2] (U);
			\draw[->,>=latex] (W) to [out=165,in=120, looseness=2] (W);
			
			%			\draw[<->,>=latex, dashed] (X) to [out=-155,in=-110, looseness=2] (X);
			\draw[<->,>=latex, dashed] (Y) to [out=-25,in=-70, looseness=2] (Y);
			%\draw[<->,>=latex, dashed] (W) to [out=-155,in=-110, looseness=2] (W);			
			%\draw[<->,>=latex, dashed] (Z) to [out=-25,in=-70, looseness=2] (Z);			
		\end{tikzpicture}
		\caption{}
		\label{fig:satisfying:7}
	\end{subfigure}
	
	\begin{subfigure}{.44\textwidth}
		\centering
		\begin{tikzpicture}[{black, circle, draw, inner sep=0}]
			\tikzset{nodes={draw,rounded corners},minimum height=0.6cm,minimum width=0.6cm}	
			\tikzset{latent/.append style={white, fill=black}}
			
			%			\node[latent] (L) at (1.75,1.2) {$L$};
			\node[fill=MY_red] (X) at (0,0) {$X$} ;
			\node (U) at (1.5,0) {$U$};
			\node (Z) at (3,0) {$Z$};
			\node[fill=green!30] (W) at (4.5,0) {$W$};
			\node[fill=MY_blue] (Y) at (6,0) {$Y$};

			\draw[->,>=latex] (X) -- (U);
			
			\begin{scope}[transform canvas={yshift=.15em}]
				\draw [->,>=latex,] (U) -- (Z);
			\end{scope}
			\begin{scope}[transform canvas={yshift=-.15em}]
				\draw [<-,>=latex,] (U) -- (Z);
			\end{scope}
			%			\draw [->,>=latex,] (X) -- (Z);

			\draw[->,>=latex] (Z) -- (W);
			\draw[->,>=latex] (W) -- (Y);

			%			\draw[->,>=latex, dashed] (L) to [out=0,in=90, looseness=1] (Y);
			%			\draw[->,>=latex, dashed] (L) to [out=180,in=90, looseness=1] (X);
			\draw[<->,>=latex, dashed] (X) to [out=90,in=90, looseness=1] (Y);
			
			\draw[->,>=latex] (X) to [out=165,in=120, looseness=2] (X);
			\draw[->,>=latex] (Y) to [out=15,in=60, looseness=2] (Y);
			%			\draw[->,>=latex] (L) to [out=165,in=120, looseness=2] (L);
			\draw[->,>=latex] (Z) to [out=15,in=60, looseness=2] (Z);
			\draw[->,>=latex] (W) to [out=15,in=60, looseness=2] (W);
			\draw[->,>=latex] (U) to [out=165,in=120, looseness=2] (U);
			
			%			\draw[<->,>=latex, dashed] (X) to [out=-155,in=-110, looseness=2] (X);
			%\draw[<->,>=latex, dashed] (U) to [out=-155,in=-110, looseness=2] (U);
			\draw[<->,>=latex, dashed] (Y) to [out=-25,in=-70, looseness=2] (Y);
			%\draw[<->,>=latex, dashed] (Z) to [out=-25,in=-70, looseness=2] (Z);			
			%\draw[<->,>=latex, dashed] (W) to [out=-25,in=-70, looseness=2] (W);			
		\end{tikzpicture}
		\caption{}
		\label{fig:satisfying:8}
	\end{subfigure}
	\hfill 
	\begin{subfigure}{.44\textwidth}
		\centering
		\begin{tikzpicture}[{black, circle, draw, inner sep=0}]
			\tikzset{nodes={draw,rounded corners},minimum height=0.6cm,minimum width=0.6cm}	
			\tikzset{latent/.append style={white, fill=black}}
			
			%		\node[latent] (L) at (1.75,1.2) {$L$};
			\node[fill=MY_red] (X) at (0,0) {$X$} ;
			\node[fill=green!30] (U) at (1.5,0) {$W$};
			\node (Z) at (3,0) {$Z$};
			\node (W) at (4.5,0) {$U$};
			\node[fill=MY_blue] (Y) at (6,0) {$Y$};
			
			%			\begin{scope}[transform canvas={xshift=-.1em, yshift=.1em}]
				%				\draw [->,>=latex,] (X) -- (Z);
				%			\end{scope}
			%			\begin{scope}[transform canvas={xshift=.1em, yshift=-.1em}]
				%				\draw [<-,>=latex,] (X) -- (Z);
				%			\end{scope}
			\draw [->,>=latex,] (X) -- (U);
			
			\draw [->,>=latex,] (U) -- (Z);
			
			%		\draw[->,>=latex] (W) -- (Y);
			\begin{scope}[transform canvas={yshift=.15em}]
				\draw [->,>=latex,] (Z) -- (W);
			\end{scope}
			\begin{scope}[transform canvas={yshift=-.15em}]
				\draw [<-,>=latex,] (Z) -- (W);
			\end{scope}
			
			\draw [->,>=latex,] (W) -- (Y);
			
			%		\draw[->,>=latex, dashed] (L) to [out=0,in=90, looseness=1] (Y);
			%		\draw[->,>=latex, dashed] (L) to [out=180,in=90, looseness=1] (X);
			\draw[<->,>=latex, dashed] (X) to [out=90,in=90, looseness=1] (Y);
			
			\draw[->,>=latex] (X) to [out=165,in=120, looseness=2] (X);
			\draw[->,>=latex] (Y) to [out=15,in=60, looseness=2] (Y);
			%		\draw[->,>=latex] (L) to [out=165,in=120, looseness=2] (L);
			\draw[->,>=latex] (Z) to [out=165,in=120, looseness=2] (Z);
			\draw[->,>=latex] (W) to [out=15,in=60, looseness=2] (W);
			\draw[->,>=latex] (U) to [out=165,in=120, looseness=2] (U);
			
			%			\draw[<->,>=latex, dashed] (X) to [out=-155,in=-110, looseness=2] (X);
			%\draw[<->,>=latex, dashed] (U) to [out=-155,in=-110, looseness=2] (U);
			\draw[<->,>=latex, dashed] (Y) to [out=-25,in=-70, looseness=2] (Y);
			%\draw[<->,>=latex, dashed] (Z) to [out=-25,in=-70, looseness=2] (Z);			
			%\draw[<->,>=latex, dashed] (W) to [out=-25,in=-70, looseness=2] (W);			
		\end{tikzpicture}
		\caption{}
		\label{fig:satisfying:9}
	\end{subfigure}

	\caption{Ten SCGs satisfying Definition~\ref{def:front_door_SCG}  for  $W$ relative to the pair of micro vertices $(X_{t-\gamma}, Y_t)$, $\forall \gamma \in \{0, \cdots, \gamma_{\max}\}$, i.e., for the first half of these SCGs, Conditions~\ref{item:front_door_SCG:1}, Conditions~\ref{item:front_door_SCG:2}, Conditions~\ref{item:front_door_SCG:3}, and Conditions~\ref{item:front_door_SCG:a} in Definition~\ref{def:front_door_SCG} are satisfied and for the second half, Conditions~\ref{item:front_door_SCG:1}, Conditions~\ref{item:front_door_SCG:2}, Conditions~\ref{item:front_door_SCG:3}, and Conditions~\ref{item:front_door_SCG:c} in Definition~\ref{def:front_door_SCG} are satisfied.
		Each pair of red and blue vertices represents the total effect of interest and green vertices are those that intercept all directed paths from the cause to the effect of interest.}
	\label{fig:satisfying}
\end{figure*}%

In the following,  lemmas that form the building blocks for the theorem introduced at the end of the section is presented. 
The first lemma asserts that if a set of macro vertices intercepts all directed paths between $X$ and $Y$ in an SCG, then there exists a finite set of micro vertices that \ifthenelse {\boolean{showToPublic}} {\textcolor{olive}{sub-}}{}intercepts all directed paths between $X_{t-\gamma}$ and $Y_t$ in any candidate FT-ADMG. This mirrors the first Condition of the standard front-door criterion for ADMGs.

%\begin{lemma}
%	\label{lemma:intercept}
%	Consider an SCG $\mathcal{G}^s$.
%	If a set of macro vertices $\mathbb{W}$ intercepts all directed paths from $X$ to $Y$ in $\mathcal{G}^s$ then $\{ (\mathbb{W}_{t-\gamma+\ell})_{0\leq \ell \le \gamma} \}$ intercepts all directed paths from $X_{t-\gamma}$ to $Y_t$ in any candidate FT-ADMG in $\mathcal{C}(\mathcal{G}^s)$.
%\end{lemma}

%\begin{lemma}

	\begin{restatable}{lemma}{mylemmaone}
		\label{lemma:intercept}
		Consider an SCG $\mathcal{G}^s$.
	If a set of macro vertices $\mathbb{W}$ intercepts all directed paths from $X$ to $Y$ in $\mathcal{G}^s$ then $\{ (\mathbb{W}_{t-\gamma+\ell})_{0\leq \ell \le \gamma} \}$  \ifthenelse {\boolean{showToPublic}}{\textcolor{olive}{sub-}}{}intercepts all directed paths from $X_{t-\gamma}$ to $Y_t$ in any candidate FT-ADMG in $\mathcal{C}(\mathcal{G}^s)$.
	\end{restatable}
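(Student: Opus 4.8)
The plan is to argue inside a single fixed candidate FT-ADMG and to transfer a hypothetical counterexample micro-path down to the SCG. Fix $\mathcal{G}\in\mathcal{C}(\mathcal{G}^s)$ and write $\mathbb{W}^\star:=\{(\mathbb{W}_{t-\gamma+\ell})_{0\le\ell\le\gamma}\}$ for the candidate micro-set. Suppose, for contradiction, that some directed path $\pi=\langle X_{t-\gamma}=V^{i_0}_{s_0}\to V^{i_1}_{s_1}\to\cdots\to V^{i_m}_{s_m}=Y_t\rangle$ in $\mathcal{G}$ passes through no vertex of $\mathbb{W}^\star$. I would extract from $\pi$ a directed path from $X$ to $Y$ in $\mathcal{G}^s$ that is not intercepted by $\mathbb{W}$, contradicting the hypothesis; since $\mathcal{G}$ is arbitrary this proves the lemma.

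The first step is to confine the time indices along $\pi$. Every candidate FT-ADMG is induced by a DTDSCM and therefore obeys Assumption~\ref{ass:temporal_priority}: any directed edge $A_{t'}\to B_{t''}$ satisfies $t'\le t''$. Hence $s_0\le s_1\le\cdots\le s_m$, and since $s_0=t-\gamma$ and $s_m=t$ we get $s_j\in[t-\gamma,t]$ for all $j$. Consequently any micro vertex $W_s$ with $W\in\mathbb{W}$ that lies on $\pi$ would already belong to $\mathbb{W}^\star$, so our assumption on $\pi$ forces every macro label $V^{i_1},\dots,V^{i_{m-1}}$ appearing strictly between the endpoints of $\pi$ to lie outside $\mathbb{W}$ (and, under the standard convention $X\ne Y$ at the macro level, so do the endpoints).

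The second step projects $\pi$ onto $\mathcal{G}^s$ via $V^{i_j}_{s_j}\mapsto V^{i_j}$. For each edge of $\pi$ with $i_j\ne i_{j+1}$, Definition~\ref{def:SCG} (the clause for $\mathbb{E}^{s1}$) guarantees $V^{i_j}\to V^{i_{j+1}}\in\mathbb{E}^{s}$; the remaining edges have $i_j=i_{j+1}$, hence $s_j<s_{j+1}$, and project to self-loops, which I contract. This yields a directed walk $\omega$ from $X$ to $Y$ in $\mathcal{G}^s$ whose vertices lie among $\{V^{i_0},\dots,V^{i_m}\}$. A directed walk always contains a directed path with the same endpoints: replace $\omega$ by its sub-walk from the last occurrence of $X$ to the first subsequent occurrence of $Y$, then repeatedly excise any loop between two occurrences of a repeated vertex. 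Each excision deletes only vertices and splices together two edges that are already present and point away from $X$, so the resulting sequence $\rho$ is a directed path from $X$ to $Y$ in the SCG sense, with vertex set still contained in $\{V^{i_0},\dots,V^{i_m}\}$. By the first step the interior of $\rho$ avoids $\mathbb{W}$, so $\rho$ is a directed path from $X$ to $Y$ not intercepted by $\mathbb{W}$ --- contradiction.

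The load-bearing point --- and the reason a finite micro-set suffices at all --- is the temporal-priority confinement $s_j\in[t-\gamma,t]$ of the first step; everything else is bookkeeping once that is in place. The step needing the most care is the projection: one must check that an edge internal to a single time series contracts to a genuine self-loop rather than a spurious macro edge, and that the loop-excision really returns a sequence meeting the SCG's definition of a directed path (distinct vertices, all arrows pointing away from $X$, so that $\leftrightarrows$-type edges cause no trouble since each projected edge arrives as a plain $\to$).
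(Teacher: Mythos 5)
Your proof is correct and follows essentially the same route as the paper's: both rest on temporal priority (time indices are non\nobreakdash-decreasing along a directed micro\nobreakdash-path, hence confined to $[t-\gamma,t]$) combined with the fact that interception by $\mathbb{W}$ in the SCG forces any directed micro\nobreakdash-path to meet some micro vertex of $\mathbb{W}$. The only difference is one of bookkeeping: the paper invokes the SCG interception first and then rules out the out\nobreakdash-of\nobreakdash-window time indices case by case, whereas you confine the time indices first and then make explicit the projection of the micro\nobreakdash-path to a directed SCG path (a step the paper treats as immediate), which is a slightly more careful rendering of the same argument.
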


%\end{lemma}

For an illustration of this lemma, consider any FT-ADMG depicted in Figure~\ref{fig:FTCG-FTADGMG-SCG} or Figure~\ref{fig:FTADMGs}, where in the corresponding SCG, all directed paths from $X$ to $Y$ are intercepted by $W$. Observe that any directed path from $X_{t-1}$ to $Y_t$ must pass through a micro vertex $W_{t-\lambda}$ that corresponds to $W$. If $\lambda > \gamma$, then all active paths from $X_{t-1}$ to $Y_t$ passing through $W_{t-\lambda}$ are not directed paths since $W_{t-\lambda}$ is temporally prior to $t$ and $t-\gamma$. Conversely, if $\lambda < 0$, all paths from $X_{t-1}$ to $Y_t$ that pass through $W_{t-\lambda}$ are blocked, as $W_{t-\lambda}$ would always act as a collider due to its temporal positioning after both $t$ and $t-\gamma$.

The second lemma asserts that given Conditions~\ref{item:front_door_SCG:1}-\ref{item:front_door_SCG:3}, along with Condition~\ref{item:front_door_SCG:a}, it is guaranteed that there exists a finite set that blocks all back-door paths from $X_{t-\gamma}$ to any vertex in any set of micro vertices that  \ifthenelse {\boolean{showToPublic}}{\textcolor{olive}{sub-}}{}intercepts all directed paths from $X_{t-\gamma}$ to $Y_t$ in any candidate FT-ADMG. Furthermore, this finite set does not contain any descendants of $X_{t-\gamma}$. This mirrors the second Condition of the standard front-door criterion for ADMGs.

%\begin{restatable}
\begin{restatable}{lemma}{mylemmatwo}
	\label{lemma:x_w_cycle}	
		Consider an SCG $\mathcal{G}^s=(\mathbb{S}, \mathbb{E}^s)$ and the pair of micro vertices $(X_{t-\gamma}, Y_t)$ compatible with the macro vertices $(X, Y)$.
		Suppose $\mathbb{W}$  is a set of macro vertices that statisfies Conditions~\ref{item:front_door_SCG:1}, \ref{item:front_door_SCG:2} and \ref{item:front_door_SCG:3} of Definition~\ref{def:front_door_SCG} in $\mathcal{G}^s$ relative to the pair of micro vertices $(X_{t-\gamma}, Y_t)$. %and $\{ (\mathbb{W}_{t-\gamma+\ell})_{0\leq \ell \le \gamma} \}$ the set of micro vertices that sub-intercepts all directed paths from $X_{t-\gamma}$ to $Y_t$ in any candidate FT-ADMG in $\mathcal{C}(\mathcal{G}^s)$.
		%Suppose $\{ (\mathbb{W}_{t-\gamma+\ell})_{\lambda_1\leq \ell \le \lambda_2} \}$ is a subset of the set of micro vertices $\{ (\mathbb{W}_{t-\gamma+\ell})_{0\leq \ell \le \gamma} \}$  that sub-intercepts all directed paths from $X_{t-\gamma}$ to $Y_t$ in any candidate FT-ADMG in $\mathcal{C}(\mathcal{G}^s)$ and for which  its set of compatible macro vertices $\mathbb{W}\subset\mathbb{S}$ statisfies Conditions~\ref{item:front_door_SCG:2} and \ref{item:front_door_SCG:3} of Definition~\ref{def:front_door_SCG} in $\mathcal{G}^s$. 
		If  $Cycles(X, \mathcal{G}^s)=\emptyset$ then  
		for any $W_{t-\lambda}\in \{ (\mathbb{W}_{t-\gamma+\ell})_{0\leq \ell \le \gamma} \}$,
		the set 
		$  
		\textcolor{MY_yellow}{\{(B_{t-\gamma - \ell})_{0\leq \ell \leq \gamma_{\max}} | B \in Par(X, \mathcal{G}^s)\}}
		\cup 
		\textcolor{MY_color_pink}{\{(B_{t-\gamma-\ell})_{1\leq \ell \leq \gamma_{\max}} | B \in Anc(\mathbb{W}, \mathcal{G}^s)\cap Desc(X, \mathcal{G}^s)\}}
		\cup 
		\textcolor{gray}{\{ (X_{t-\gamma+\ell})_{1\leq \ell \le \gamma} \}}$ 
		blocks all back-door paths from $X_{t-\gamma}$ to $W_{t-\lambda}$ and does not contain any descendant of $X_{t-\gamma}$ in any candidate FT-ADMG in $\mathcal{C}(\mathcal{G}^s)$.
\end{restatable}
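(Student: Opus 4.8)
The plan is to prove the two assertions separately. Throughout, fix a candidate FT-ADMG $\mathcal{G}\in\mathcal{C}(\mathcal{G}^s)$ and a micro vertex $W_{t-\lambda}$ from $\{(\mathbb{W}_{t-\gamma+\ell})_{0\le\ell\le\gamma}\}$ (so its macro vertex $W$ lies in $\mathbb{W}$ and $0\le\lambda\le\gamma$), and write the proposed set as $S=S_1\cup S_2\cup S_3$ with $S_1=\{(B_{t-\gamma-\ell})_{0\le\ell\le\gamma_{\max}}\mid B\in Par(X,\mathcal{G}^s)\}$, $S_2=\{(B_{t-\gamma-\ell})_{1\le\ell\le\gamma_{\max}}\mid B\in Anc(\mathbb{W},\mathcal{G}^s)\cap Desc(X,\mathcal{G}^s)\}$ and $S_3=\{(X_{t-\gamma+\ell})_{1\le\ell\le\gamma}\}$. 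For a path of $\mathcal{G}$ I will write $\bar\pi$, $\bar V$, etc., for its image in $\mathcal{G}^s$ obtained by replacing each micro vertex $V$ by its macro vertex $\bar V$.

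\textbf{That $S$ contains no descendant of $X_{t-\gamma}$.} I would use two elementary facts. First, by Assumption~\ref{ass:temporal_priority} every directed path of $\mathcal{G}$ is non-decreasing in time, so every descendant of $X_{t-\gamma}$ has time index at least $t-\gamma$; this discards all of $S_2$ and every element of $S_1$ with $\ell\ge1$. Second, a directed path of $\mathcal{G}$ between two distinct micro-copies of $X$ has length at least one and projects to a closed directed walk through $X$ in $\mathcal{G}^s$; a shortest such walk has no repeated vertex other than its endpoints, hence is a simple directed cycle through $X$, contradicting $Cycles(X,\mathcal{G}^s)=\emptyset$. This shows no copy $X_{t'}$ with $t'\ne t-\gamma$ is a descendant of $X_{t-\gamma}$, disposing of $S_3$; and for the remaining elements $B_{t-\gamma}$ of $S_1$ (with $B\in Par(X,\mathcal{G}^s)$), a directed path witnessing $B_{t-\gamma}\in Desc(X_{t-\gamma})$ would project to $B\in Desc(X,\mathcal{G}^s)$, and composing with $B\to X$ would give a directed cycle through $X$ — again impossible. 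In particular $Par(X,\mathcal{G}^s)\cap Desc(X,\mathcal{G}^s)=\emptyset$, which I reuse below.

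\textbf{That $S$ blocks every back-door path from $X_{t-\gamma}$ to $W_{t-\lambda}$.} The plan is to assume a back-door path $\pi=\langle X_{t-\gamma}=V^0,V^1,\dots,V^n=W_{t-\lambda}\rangle$ of $\mathcal{G}$ is active given $S$ and derive a contradiction with Condition~\ref{item:front_door_SCG:2} of Definition~\ref{def:front_door_SCG}. \emph{Step 1:} if the first edge is $V^1\to X_{t-\gamma}$ then, since a self-loop on $X$ is excluded, its image is a directed edge $\bar V^1\to X$, so $\bar V^1\in Par(X,\mathcal{G}^s)$; by Assumptions~\ref{ass:temporal_priority} and \ref{ass:stationarity} the lag of this edge lies in $\{0,\dots,\gamma_{\max}\}$, hence $V^1\in S_1\subseteq S$, and $V^1$ is a non-collider on $\pi$, so $\pi$ is blocked — contradiction. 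Hence $\pi$ begins $X_{t-\gamma}\longdashleftrightarrow V^1$. \emph{Step 2:} since $S$ has no descendant of $X_{t-\gamma}$, no collider of $\pi$ is a descendant of $X_{t-\gamma}$ (else all of its descendants, hence all possible activators, would lie outside $S$), and since every collider of $\pi$ has a descendant in $S$, the macro vertex of every collider of $\pi$ lies in $Anc(X,\mathcal{G}^s)\cup Anc(\mathbb{W},\mathcal{G}^s)$. \emph{Step 3:} project $\pi$ to the walk $\bar\pi$ from $X$ to $W$, delete its loops, and at each strictly-directed collider $\bar c$ of the walk re-route through a directed path to an $S$-descendant of $c$ and on towards $\mathbb{W}$, finally trimming to the segment between the last occurrence of $X$ and the first $\mathbb{W}$-vertex after it; the outcome should be a back-door path $X\longdashleftrightarrow\cdots$ into $\mathbb{W}$ with no strict collider — an activated back-door path in $\mathcal{G}^s$ — contradicting Condition~\ref{item:front_door_SCG:2}. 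In this verification $S_1$ forbids $\pi$ from crossing a parent of $X$ as a non-collider, $S_3$ forbids it from crossing a copy $X_{t''}$ with $t-\gamma<t''\le t$ as a non-collider, $S_2$ supplies the re-routing targets for colliders whose macro vertex falls in $Desc(X,\mathcal{G}^s)\cap Anc(\mathbb{W},\mathcal{G}^s)$, and $Cycles(X,\mathcal{G}^s)=\emptyset$ keeps the re-routed walk from running back into $X$ along a directed detour.

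\textbf{Main obstacle.} Step 3 is the delicate part: faithfully turning ``active given $S$ in $\mathcal{G}$'' into ``activated in $\mathcal{G}^s$'' when $\bar\pi$ may visit $X$ and the cluster vertices repeatedly and when colliders of $\pi$ must be re-expressed through their $S$-descendants. The bookkeeping must guarantee simultaneously that every re-routing terminates without creating a new strict collider, that the trimmed walk still begins with an arrowhead into $X$, and that no vertex kept by the cleaned path is a conditioned non-collider at the macro level; this is exactly where the precise time windows defining $S_1,S_2,S_3$ and the hypothesis $Cycles(X,\mathcal{G}^s)=\emptyset$ are needed. Conditions~\ref{item:front_door_SCG:1} and \ref{item:front_door_SCG:3} of Definition~\ref{def:front_door_SCG} enter only auxiliarily (they constrain the shape of $Anc(\mathbb{W},\mathcal{G}^s)$ relative to the directed paths from $X$ to $Y$); the real work is carried by Condition~\ref{item:front_door_SCG:2} together with the acyclicity of $X$ in $\mathcal{G}^s$.
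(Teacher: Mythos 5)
Your treatment of descendant-freeness and of back-door paths whose first edge is directed into $X_{t-\gamma}$ is correct and matches the paper's argument: in both, the key point is that $Cycles(X,\mathcal{G}^s)=\emptyset$ forces $Par(X,\mathcal{G}^s)\cap Desc(X,\mathcal{G}^s)=\emptyset$ and forbids any other micro-copy of $X$ from being a descendant of $X_{t-\gamma}$, so the brown set absorbs every path starting with $X_{t-\gamma}\leftarrow$ without opening new ones. The genuine gap is the remaining case, paths starting with $X_{t-\gamma}\longdashleftrightarrow$: your Step~3 is not a proof but a description of a proof you would still have to write, and you acknowledge as much in your ``Main obstacle'' paragraph. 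The projection-plus-re-routing scheme (collapse $\bar\pi$ to a macro walk, re-route each collider through a directed path to an $S$-descendant, trim to a segment from $X$ into $\mathbb{W}$) is precisely the delicate part: the re-routed walk passes through vertices of $S$, which become conditioned non-colliders, and the SCG-level blocking rules (strict versus non-strict colliders, cycles absorbing orientation) make ``activated back-door path in $\mathcal{G}^s$'' a nontrivial property to certify for the massaged walk. As written, the contradiction with Condition~\ref{item:front_door_SCG:2} is asserted, not derived.

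The paper sidesteps all of this with one structural observation (Property~\ref{property:1}): because there is no back-door path from $X$ to $\mathbb{W}$ at the macro level, any active back-door path from $X_{t-\gamma}$ to $W_{t-\lambda}$ in a candidate FT-ADMG must pass through \emph{another micro-copy} $X_{t-\lambda'}$ of the cluster $X$ with $\lambda'\neq\gamma$ — otherwise its projection would itself be a back-door path in $\mathcal{G}^s$. Once that is established, the proof reduces to a clean three-way case split on $\lambda'$, each case matched to one component of the conditioning set: $\lambda'<0$ makes $X_{t-\lambda'}$ a collider lying in the future of both endpoints, so the path is blocked by the empty set; $\lambda'>\gamma$ is handled by the pink set, which contains $X_{t-\gamma-1},\dots,X_{t-\gamma-\gamma_{\max}}$; and $0\le\lambda'<\gamma$ is handled by the gray set $\{(X_{t-\gamma+\ell})_{1\le\ell\le\gamma}\}$. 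This reduction is the missing idea in your proposal; with it, the general re-routing machinery you were worried about becomes unnecessary, and the role of each of the three time windows in $S$ falls out of the case analysis rather than having to be tracked through a global bookkeeping argument.
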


\begin{figure*}[t!]
	\centering
	\begin{subfigure}{.28\textwidth}
		\centering
		\begin{tikzpicture}[{black, circle, draw, inner sep=0}]
			\tikzset{nodes={draw,rounded corners},minimum height=0.6cm,minimum width=0.6cm}	
			\tikzset{latent/.append style={white, fill=black}}
			
			%			\node[latent] (L) at (1.75,1.2) {$L$};
			\node[fill=MY_red] (X) at (0,0) {$X$} ;
			\node[fill=MY_blue] (Y) at (3,0) {$Y$};
			\node[fill=green!30] (Z) at (1.5,0) {$W$};

			%			\begin{scope}[transform canvas={xshift=-.1em, yshift=.1em}]
				%				\draw [->,>=latex,] (X) -- (Z);
				%			\end{scope}
			%			\begin{scope}[transform canvas={xshift=.1em, yshift=-.1em}]
				%				\draw [<-,>=latex,] (X) -- (Z);
				%			\end{scope}
			\draw [->,>=latex,] (X) -- (Z);

			\draw[->,>=latex] (Z) -- (Y);

			%			\draw[->,>=latex, dashed] (L) to [out=0,in=90, looseness=1] (Y);
			%			\draw[->,>=latex, dashed] (L) to [out=180,in=90, looseness=1] (X);
			\draw[<->,>=latex, dashed] (X) to [out=90,in=90, looseness=1] (Y);
			
			\draw[->,>=latex] (X) to [out=165,in=120, looseness=2] (X);
			\draw[->,>=latex] (Y) to [out=15,in=60, looseness=2] (Y);
			%			\draw[->,>=latex] (L) to [out=165,in=120, looseness=2] (L);
			\draw[->,>=latex] (Z) to [out=15,in=60, looseness=2] (Z);
			
			\draw[<->,>=latex, dashed] (X) to [out=-155,in=-110, looseness=2] (X);
			\draw[<->,>=latex, dashed] (Y) to [out=-25,in=-70, looseness=2] (Y);
			%\draw[<->,>=latex, dashed] (Z) to [out=-25,in=-70, looseness=2] (Z);			
		\end{tikzpicture}
		\caption{}
		\label{fig:satisfying_inst:1}
	\end{subfigure}
	\hfill 
	\begin{subfigure}{.32\textwidth}
		\centering
		\begin{tikzpicture}[{black, circle, draw, inner sep=0}]
			\tikzset{nodes={draw,rounded corners},minimum height=0.6cm,minimum width=0.6cm}	
			\tikzset{latent/.append style={white, fill=black}}
			
			%			\node[latent] (L) at (1.75,1.2) {$L$};
			\node[fill=MY_red] (X) at (0,0) {$X$} ;
			\node[fill=MY_blue] (Y) at (3,0) {$Y$};
			\node[fill=green!30] (Z) at (1.5,0) {$W$};
			\node (U) at (-1.5,0) {$U$};
			\node (W) at (-1.5,1) {$Z$};
			
			%			\begin{scope}[transform canvas={xshift=-.1em, yshift=.1em}]
				%				\draw [->,>=latex,] (X) -- (Z);
				%			\end{scope}
			%			\begin{scope}[transform canvas={xshift=.1em, yshift=-.1em}]
				%				\draw [<-,>=latex,] (X) -- (Z);
				%			\end{scope}
			\draw [->,>=latex,] (X) -- (Z);

			\draw[->,>=latex] (Z) -- (Y);
			
			\begin{scope}[transform canvas={yshift=.15em}]
				\draw [->,>=latex,] (U) -- (X);
			\end{scope}
			\begin{scope}[transform canvas={yshift=-.15em}]
				\draw [<-,>=latex,] (U) -- (X);
			\end{scope}

			\draw[->,>=latex] (W) -- (U);
			
			%			\draw[->,>=latex, dashed] (L) to [out=0,in=90, looseness=1] (Y);
			%			\draw[->,>=latex, dashed] (L) to [out=180,in=90, looseness=1] (X);
			\draw[<->,>=latex, dashed] (X) to [out=90,in=90, looseness=1] (Y);
			
			%			\draw[->,>=latex] (X) to [out=165,in=120, looseness=2] (X);
			\draw[->,>=latex] (Y) to [out=15,in=60, looseness=2] (Y);
			%			\draw[->,>=latex] (L) to [out=165,in=120, looseness=2] (L);
			\draw[->,>=latex] (Z) to [out=15,in=60, looseness=2] (Z);
			\draw[->,>=latex] (U) to [out=165,in=120, looseness=2] (U);
			\draw[->,>=latex] (W) to [out=165,in=120, looseness=2] (W);
			
			\draw[<->,>=latex, dashed] (X) to [out=-155,in=-110, looseness=2] (X);
			\draw[<->,>=latex, dashed] (Y) to [out=-25,in=-70, looseness=2] (Y);
			%		\draw[<->,>=latex, dashed] (Z) to [out=-25,in=-70, looseness=2] (Z);			
		\end{tikzpicture}
		\caption{}
		\label{fig:satisfying_inst:1.1}
	\end{subfigure}
	\hfill 
	\begin{subfigure}{.32\textwidth}
		\centering
		\begin{tikzpicture}[{black, circle, draw, inner sep=0}]
			\tikzset{nodes={draw,rounded corners},minimum height=0.6cm,minimum width=0.6cm}	
			\tikzset{latent/.append style={white, fill=black}}
			
			%			\node[latent] (L) at (1.75,1.2) {$L$};
			\node[fill=MY_red] (X) at (0,0) {$X$} ;
			\node[fill=MY_blue] (Y) at (3,0) {$Y$};
			\node[fill=green!30] (Z) at (1.5,0) {$W$};
			\node (U) at (-1.5,0) {$U$};
			\node (W) at (-1.5,1) {$Z$};
			
			%			\begin{scope}[transform canvas={xshift=-.1em, yshift=.1em}]
				%				\draw [->,>=latex,] (X) -- (Z);
				%			\end{scope}
			%			\begin{scope}[transform canvas={xshift=.1em, yshift=-.1em}]
				%				\draw [<-,>=latex,] (X) -- (Z);
				%			\end{scope}
			\draw [->,>=latex,] (X) -- (Z);

			\draw[->,>=latex] (Z) -- (Y);
			\draw[->,>=latex] (W) -- (U);
			
			\begin{scope}[transform canvas={xshift=.0em, yshift=.1em}]
				\draw[->,>=latex] (U) to [out=50,in=130, looseness=1] (Z);
			\end{scope}
			\begin{scope}[transform canvas={xshift=.0em, yshift=-.1em}]
				\draw[<-,>=latex] (U) to [out=40,in=140, looseness=1] (Z);
			\end{scope}
			
			%			\draw[->,>=latex, dashed] (L) to [out=0,in=90, looseness=1] (Y);
			%			\draw[->,>=latex, dashed] (L) to [out=180,in=90, looseness=1] (X);
			\draw[<->,>=latex, dashed] (X) to [out=90,in=90, looseness=1] (Y);
			
			\draw[->,>=latex] (X) to [out=165,in=120, looseness=2] (X);
			\draw[->,>=latex] (Y) to [out=15,in=60, looseness=2] (Y);
			%			\draw[->,>=latex] (L) to [out=165,in=120, looseness=2] (L);
			\draw[->,>=latex] (Z) to [out=15,in=60, looseness=2] (Z);
			\draw[->,>=latex] (U) to [out=165,in=120, looseness=2] (U);
			\draw[->,>=latex] (W) to [out=165,in=120, looseness=2] (W);
			
			\draw[<->,>=latex, dashed] (X) to [out=-155,in=-110, looseness=2] (X);
			\draw[<->,>=latex, dashed] (Y) to [out=-25,in=-70, looseness=2] (Y);
			%\draw[<->,>=latex, dashed] (W) to [out=-155,in=-110, looseness=2] (W);			
			%\draw[<->,>=latex, dashed] (Z) to [out=-25,in=-70, looseness=2] (Z);			
		\end{tikzpicture}
		\caption{}
		\label{fig:satisfying_inst:2}
	\end{subfigure}
	
	\begin{subfigure}{.44\textwidth}
		\centering
		\begin{tikzpicture}[{black, circle, draw, inner sep=0}]
			\tikzset{nodes={draw,rounded corners},minimum height=0.6cm,minimum width=0.6cm}	
			\tikzset{latent/.append style={white, fill=black}}
			
			%			\node[latent] (L) at (1.75,1.2) {$L$};
			\node[fill=MY_red] (X) at (0,0) {$X$} ;
			\node (U) at (1.5,0) {$U$};
			\node (Z) at (3,0) {$Z$};
			\node[fill=green!30] (W) at (4.5,0) {$W$};
			\node[fill=MY_blue] (Y) at (6,0) {$Y$};

			\draw[->,>=latex] (X) -- (U);
			
			\begin{scope}[transform canvas={yshift=.15em}]
				\draw [->,>=latex,] (U) -- (Z);
			\end{scope}
			\begin{scope}[transform canvas={yshift=-.15em}]
				\draw [<-,>=latex,] (U) -- (Z);
			\end{scope}
			%			\draw [->,>=latex,] (X) -- (Z);

			\draw[->,>=latex] (Z) -- (W);
			\draw[->,>=latex] (W) -- (Y);

			%			\draw[->,>=latex, dashed] (L) to [out=0,in=90, looseness=1] (Y);
			%			\draw[->,>=latex, dashed] (L) to [out=180,in=90, looseness=1] (X);
			\draw[<->,>=latex, dashed] (X) to [out=90,in=90, looseness=1] (Y);
			
			\draw[->,>=latex] (X) to [out=165,in=120, looseness=2] (X);
			\draw[->,>=latex] (Y) to [out=15,in=60, looseness=2] (Y);
			%			\draw[->,>=latex] (L) to [out=165,in=120, looseness=2] (L);
			\draw[->,>=latex] (Z) to [out=15,in=60, looseness=2] (Z);
			\draw[->,>=latex] (W) to [out=15,in=60, looseness=2] (W);
			\draw[->,>=latex] (U) to [out=165,in=120, looseness=2] (U);
			
			\draw[<->,>=latex, dashed] (X) to [out=-155,in=-110, looseness=2] (X);
			%\draw[<->,>=latex, dashed] (U) to [out=-155,in=-110, looseness=2] (U);
			\draw[<->,>=latex, dashed] (Y) to [out=-25,in=-70, looseness=2] (Y);
			%\draw[<->,>=latex, dashed] (Z) to [out=-25,in=-70, looseness=2] (Z);			
			%\draw[<->,>=latex, dashed] (W) to [out=-25,in=-70, looseness=2] (W);			
		\end{tikzpicture}
		\caption{}
		\label{fig:satisfying_inst:3}
	\end{subfigure}
	\hfill 
	\begin{subfigure}{.44\textwidth}
		\centering
		\begin{tikzpicture}[{black, circle, draw, inner sep=0}]
			\tikzset{nodes={draw,rounded corners},minimum height=0.6cm,minimum width=0.6cm}	
			\tikzset{latent/.append style={white, fill=black}}
			
			%		\node[latent] (L) at (1.75,1.2) {$L$};
			\node[fill=MY_red] (X) at (0,0) {$X$} ;
			\node[fill=green!30] (U) at (1.5,0) {$W$};
			\node (Z) at (3,0) {$Z$};
			\node (W) at (4.5,0) {$U$};
			\node[fill=MY_blue] (Y) at (6,0) {$Y$};
			
			%			\begin{scope}[transform canvas={xshift=-.1em, yshift=.1em}]
				%				\draw [->,>=latex,] (X) -- (Z);
				%			\end{scope}
			%			\begin{scope}[transform canvas={xshift=.1em, yshift=-.1em}]
				%				\draw [<-,>=latex,] (X) -- (Z);
				%			\end{scope}
			\draw [->,>=latex,] (X) -- (U);
			
			\draw [->,>=latex,] (U) -- (Z);
			
			%		\draw[->,>=latex] (W) -- (Y);
			\begin{scope}[transform canvas={yshift=.15em}]
				\draw [->,>=latex,] (Z) -- (W);
			\end{scope}
			\begin{scope}[transform canvas={yshift=-.15em}]
				\draw [<-,>=latex,] (Z) -- (W);
			\end{scope}
			
			\draw [->,>=latex,] (W) -- (Y);
			
			%		\draw[->,>=latex, dashed] (L) to [out=0,in=90, looseness=1] (Y);
			%		\draw[->,>=latex, dashed] (L) to [out=180,in=90, looseness=1] (X);
			\draw[<->,>=latex, dashed] (X) to [out=90,in=90, looseness=1] (Y);
			
			\draw[->,>=latex] (X) to [out=165,in=120, looseness=2] (X);
			\draw[->,>=latex] (Y) to [out=15,in=60, looseness=2] (Y);
			%		\draw[->,>=latex] (L) to [out=165,in=120, looseness=2] (L);
			\draw[->,>=latex] (Z) to [out=165,in=120, looseness=2] (Z);
			\draw[->,>=latex] (W) to [out=15,in=60, looseness=2] (W);
			\draw[->,>=latex] (U) to [out=165,in=120, looseness=2] (U);
			
			\draw[<->,>=latex, dashed] (X) to [out=-155,in=-110, looseness=2] (X);
			%\draw[<->,>=latex, dashed] (U) to [out=-155,in=-110, looseness=2] (U);
			\draw[<->,>=latex, dashed] (Y) to [out=-25,in=-70, looseness=2] (Y);
			%\draw[<->,>=latex, dashed] (Z) to [out=-25,in=-70, looseness=2] (Z);			
			%\draw[<->,>=latex, dashed] (W) to [out=-25,in=-70, looseness=2] (W);			
		\end{tikzpicture}
		\caption{}
		\label{fig:satisfying_inst:4}
	\end{subfigure}

	\caption{Five SCGs satisfying Definition~\ref{def:front_door_SCG}  for  $W$ relative only to the pair of micro vertices $(X_{t}, Y_t)$, i.e., Conditions~\ref{item:front_door_SCG:1}, Conditions~\ref{item:front_door_SCG:2}, Conditions~\ref{item:front_door_SCG:3}, and Conditions~\ref{item:front_door_SCG:b} in Definition~\ref{def:front_door_SCG} are satisfied. 
	Each pair of red and blue vertices  represents the total effect of interest and green vertices are those that intercept all directed paths from the cause to the effect of interest.}
	\label{fig:satisfying_inst}
\end{figure*}%

The intuition behind Lemma~\ref{lemma:x_w_cycle} is that   conditioning on micro variables corresponding to the parents of $X$ in the SCG blocks all back-door paths that start with $X_{t-\gamma} \leftarrow$. This still leaves some possible back-door paths active, those starting with $X_{t-\gamma}\longdashleftrightarrow$, which require additional conditioning sets to be blocked, namely some micro variables corresponding to the  ancestors of $\mathbb{W}$.
The set $\{X_{t-2},X_t\}$ satisfies the conditions of this lemma when considering the total effect of $X_{t-1}$ on $Y_t$ and the SCG presented in Figure~\ref{fig:SCG1}. Specifically, $\{X_{t-2},X_t\}$ effectively blocks all back-door paths from $X_{t-1}$ to $\{W_{t-1}, W_t\}$ and does not include any descendant of $X_{t-1}$ in any FT-ADMG that is compatible with the SCG (e.g., Figure~\ref{fig:FTADMG1}, Figure~\ref{fig:FTADMG2}).
For another, more complete, visual explanation of why the selected set in Lemma~\ref{lemma:x_w_cycle} blocks all back-door paths, refer to the FT-ADMG in Figure~\ref{fig:FTADMG_lemma2}. In the figure, the vertices corresponding to $\textcolor{MY_yellow}{\{(B_{t-\gamma - \ell})_{0\leq \ell \leq \gamma_{\max}} | B \in Par(X, \mathcal{G}^s)\}}$ are highlighted in \textcolor{MY_yellow}{brown}, the vertices corresponding to $\textcolor{MY_color_pink}{\{(B_{t-\gamma-\ell})_{1\leq \ell \leq \gamma_{\max}} | B \in Anc(\mathbb{W}, \mathcal{G}^s)\cap Desc(X, \mathcal{G}^s)\}}$ are highlighted in \textcolor{MY_color_pink}{pink}, and the vertices corresponding to $\textcolor{gray}{\{ (X_{t-\gamma+\ell})_{1\leq \ell \le \gamma} \}}$ are highlighted in \textcolor{gray}{gray}.

Notice that if $Cycle(X,\mathcal{G}^s) \ne \emptyset$, then there could exist an FT-ADMG in $\mathcal{C}(\mathcal{G}^s)$  where micro vertices related to $X$ that temporally succeed $X_{t-\gamma}$ can be descendants of $X_{t-\gamma}$. Additionally, these descendants can share a latent confounder with $X_{t-\gamma}$, and the path responsible of this latent confounding cannot be blocked. For example, in the FT-ADMG in Figure~\ref{fig:FTADMG5} compatible with the SCG in Figure~\ref{fig:SCG3} that contain a cycle of size $1$ on $X$, it is clear  that $X_t$ is a descendant of $X_{t-1}$ and that the back-door path from $X_{t-1}$ to $W_t$ starting with $X_{t-1} \longdashleftrightarrow X_t$ cannot be blocked by any vertex. Thus, $\{W_{t-1}, W_t\}$ cannot be used to intercept the relationship between $X_{t-1}$ and $Y_{t}$ for identification using the front-door criterion. However, it turns out this is not the case when either the only cycle on $X$ is a self loop and there is no latent confounding between $X$ and its ancestors\textemdash which implies that $X\longdashleftrightarrow X$ is not in $\mathcal{G}^s$ (as in the SCG in Figure~\ref{fig:SCG2})\textemdash or when $\gamma=0$, as stated in the following two lemmas.

\begin{restatable}{lemma}{mylemmatwoprime}
	\label{lemma:x_w_self_confounder}	
	Consider an SCG $\mathcal{G}^s=(\mathbb{S}, \mathbb{E}^s)$ and the pair of micro vertices $(X_{t-\gamma}, Y_t)$ compatible with the macro vertices $(X, Y)$.
	Suppose $\mathbb{W}$  is a set of macro vertices that statisfies Conditions~\ref{item:front_door_SCG:1}, \ref{item:front_door_SCG:2} and \ref{item:front_door_SCG:3} of Definition~\ref{def:front_door_SCG} in $\mathcal{G}^s$ relative to the pair of micro vertices $(X_{t-\gamma}, Y_t)$. 
	If  $Cycle(X,\mathcal{G}^s)=\{X\rightarrow X\}$ and $\not\exists Z\in Anc(X, \mathcal{G}^s)$ such that $X \longdashleftrightarrow Z$ in $\mathcal{G}^s$ then  
	for any $W_{t-\lambda}\in \{ (\mathbb{W}_{t-\gamma+\ell})_{0\leq \ell \le \gamma} \}$,
	the set $ \textcolor{MY_yellow}{\{(B_{t-\gamma-\ell})_{0\leq \ell \leq \gamma_{\max}} | B \in Par(X, \mathcal{G}^s)\}}
	\cup 
	\textcolor{MY_color_pink}{\{ (X_{t-\gamma-\ell})_{1\le \ell \le \gamma_{\max}} \}}$ 
	blocks all back-door paths from $X_{t-\gamma}$ to $W_{t-\lambda}$ and does not contain any descendant of $X_{t-\gamma}$ in any candidate FT-ADMG in $\mathcal{C}(\mathcal{G}^s)$.
\end{restatable}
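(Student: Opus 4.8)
The plan is to establish the statement in two parts, in parallel with the proof of Lemma~\ref{lemma:x_w_cycle}. Write $\mathbb{B}^{\star}$ for the displayed candidate set. I will show (i) that $\mathbb{B}^{\star}$ contains no descendant of $X_{t-\gamma}$ in any $\mathcal{G}\in\mathcal{C}(\mathcal{G}^s)$, and (ii) that $\mathbb{B}^{\star}$ blocks every back-door path from $X_{t-\gamma}$ to $W_{t-\lambda}$ in every such $\mathcal{G}$. Here the two extra hypotheses of Condition~\ref{item:front_door_SCG:c}, namely $Cycles(X,\mathcal{G}^s)=\{X\rightarrow X\}$ and the absence of a bidirected edge between $X$ and any of its ancestors, take over the role that $Cycles(X,\mathcal{G}^s)=\emptyset$ had in Lemma~\ref{lemma:x_w_cycle}; they are exactly what makes it possible to drop the ``gray'' set $\{(X_{t-\gamma+\ell})_{1\le\ell\le\gamma}\}$ (whose vertices can now be descendants of $X_{t-\gamma}$ via the self loop) and to replace the ``pink'' set by the earlier copies $\{(X_{t-\gamma-\ell})_{1\le\ell\le\gamma_{\max}}\}$ of $X$.

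For (i), every micro vertex of $\mathbb{B}^{\star}$ lies at a time $\le t-\gamma$, so by Assumption~\ref{ass:temporal_priority} and acyclicity of the FT-ADMG the only members of $\mathbb{B}^{\star}$ that could be descendants of $X_{t-\gamma}$ are the vertices $B_{t-\gamma}$ with $B\in Par(X,\mathcal{G}^s)$. If such a $B_{t-\gamma}$ with $B\neq X$ were a proper descendant of $X_{t-\gamma}$, the witnessing directed path would be instantaneous (again by temporal priority) and would project to a directed path $X\rightarrow\cdots\rightarrow B$ in $\mathcal{G}^s$; appending the edge $B\rightarrow X$ gives a directed cycle through $X$ of length at least two, contradicting $Cycles(X,\mathcal{G}^s)=\{X\rightarrow X\}$. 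The copy $X_{t-\gamma}$ itself, which appears in $\mathbb{B}^{\star}$ only because of the self loop, is the intervention vertex and is by convention excluded from the adjustment set.

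For (ii), fix $\mathcal{G}\in\mathcal{C}(\mathcal{G}^s)$, a micro vertex $W_{t-\lambda}\in\{(\mathbb{W}_{t-\gamma+\ell})_{0\le\ell\le\gamma}\}$ with macro vertex $W\in\mathbb{W}$, and a back-door path $\pi$ from $X_{t-\gamma}$ to $W_{t-\lambda}$ in $\mathcal{G}$; assume for contradiction that $\pi$ is active given $\mathbb{B}^{\star}$. If the first edge of $\pi$ is directed into $X_{t-\gamma}$, say $X_{t-\gamma}\leftarrow Z_{t'}$, then $Z\in Par(X,\mathcal{G}^s)$ and, because the maximal lag is $\gamma_{\max}$ (Assumption~\ref{ass:stationarity}), $t-\gamma-\gamma_{\max}\le t'\le t-\gamma$, with $t'<t-\gamma$ when $Z=X$; in either case $Z_{t'}\in\mathbb{B}^{\star}$ and $Z_{t'}$ is a non-collider on $\pi$ (it has a tail towards $X_{t-\gamma}$), so $\pi$ is blocked, a contradiction. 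The degenerate subcase $Z_{t'}=W_{t-\lambda}$ would force an instantaneous edge $W\rightarrow X$ in $\mathcal{G}^s$, hence a back-door path $X\leftarrow W$ active given the empty set, contradicting Condition~\ref{item:front_door_SCG:2}. Therefore $\pi$ must begin with a bidirected edge $X_{t-\gamma}\longdashleftrightarrow V_{t_1}$, so $X\longdashleftrightarrow V$ in $\mathcal{G}^s$; by the second hypothesis of Condition~\ref{item:front_door_SCG:c}, $V\notin Anc(X,\mathcal{G}^s)$ and $V\neq X$, hence no micro copy of $V$ is an ancestor of any micro copy of $X$ in $\mathcal{G}$.

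The bidirected-first-edge case is the main obstacle. The plan is to project $\pi$ to $\mathcal{G}^s$, obtaining a walk from $X$ to $W$ that begins $X\longdashleftrightarrow V$, and to extract from it a back-door path from $X$ to $\mathbb{W}$ that is activated by the empty set, which Condition~\ref{item:front_door_SCG:2} forbids. Two points need care. First, any collider of $\pi$ left open by $\mathbb{B}^{\star}$ has a descendant in $\mathbb{B}^{\star}$, so by tracing and projecting the witnessing directed path its macro vertex lies in $Anc(X,\mathcal{G}^s)$; one then uses $Cycles(X,\mathcal{G}^s)=\{X\rightarrow X\}$ together with the absence of a bidirected edge from $X$ to an ancestor of $X$ to conclude that these macro vertices cannot occur as strict colliders on the extracted path (equivalently, that the open colliders of $\pi$ can be bypassed), so the extracted path really is activated by the empty set. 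Second, the earlier copies $\{(X_{t-\gamma-\ell})_{1\le\ell\le\gamma_{\max}}\}$ block the back-door paths that loop back through $X$'s own past before reaching $W$ --- this is the analogue of the role played in Lemma~\ref{lemma:x_w_cycle} by $\{(B_{t-\gamma-\ell})_{1\le\ell\le\gamma_{\max}}\mid B\in Anc(\mathbb{W},\mathcal{G}^s)\cap Desc(X,\mathcal{G}^s)\}$ --- whereas a copy $X_{t''}$ of $X$ with $t''>t-\gamma$, for which neither $X_{t''}$ nor any of its descendants lies in $\mathbb{B}^{\star}$, can occur on $\pi$ only as a closed collider or is absorbed into the projection argument. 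Discharging these two points, which parallels but is more delicate than the corresponding step in the proof of Lemma~\ref{lemma:x_w_cycle}, completes the contradiction and hence the proof.
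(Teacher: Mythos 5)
Your treatment of the directed-first-edge case and of descendant-freeness is correct and complete: showing that the second vertex $Z_{t'}$ of any path beginning $X_{t-\gamma}\leftarrow Z_{t'}$ lies in the displayed set and is a non-collider is exactly the content of the paper's Cases~2 and~3 (the paper organizes this around Property~\ref{property:1} and the position of a second copy $X_{t-\lambda'}$, you organize it around the first edge of the path, but the substance is the same), and your observation that a proper descendant among the $B_{t-\gamma}$ would force a directed cycle through $X$ of length at least two is the right way to use $Cycles(X,\mathcal{G}^s)=\{X\rightarrow X\}$. Your remark that $X_{t-\gamma}$ itself formally belongs to the brown set via the self loop and must be excluded by convention is a legitimate point the paper glosses over.

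The gap is that the case you yourself identify as ``the main obstacle''\textemdash a back-door path beginning $X_{t-\gamma}\longdashleftrightarrow V_{t_1}$\textemdash is left as a plan with two points ``needing care'' that you never discharge, so as written this is a proof sketch rather than a proof. The missing step can be closed as follows, and it is shorter than your outline suggests: suppose such a path $\pi$ is active given $\mathbb{B}^{\star}$. Every open collider on $\pi$ must have a descendant in $\mathbb{B}^{\star}$, and every element of $\mathbb{B}^{\star}$ is a micro copy of some $B\in Par(X,\mathcal{G}^s)\subseteq Anc(X,\mathcal{G}^s)$; hence every open collider's macro vertex lies in $Anc(X,\mathcal{G}^s)$, and so does the macro vertex of anything that is an ancestor of an open collider. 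Since $V\notin Anc(X,\mathcal{G}^s)$ (by the hypothesis $\not\exists Z\in Anc(X,\mathcal{G}^s)$ with $X\longdashleftrightarrow Z$), the vertex $V_{t_1}$ cannot itself be an open collider nor lead into one by a directed subpath, so $V_{t_1}$ and, by induction, every subsequent vertex of $\pi$ must be a non-collider of the form $\cdot\rightarrow V_{t_i}\rightarrow\cdot$; thus $\pi$ is $X_{t-\gamma}\longdashleftrightarrow V_{t_1}\rightarrow\cdots\rightarrow W_{t-\lambda}$, which projects to a back-door path from $X$ to $\mathbb{W}$ in $\mathcal{G}^s$ with no strict collider, i.e.\ activated by the empty set, contradicting Condition~\ref{item:front_door_SCG:2}. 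For what it is worth, the paper itself disposes of this case with a single sentence asserting that no such back-door paths exist, which is not literally true (they can exist, e.g.\ starting $X_{t-\gamma}\longdashleftrightarrow Y_{t'}$; they are merely all blocked), so your more careful plan, once actually executed along the lines above, would be an improvement on the published argument rather than a repetition of it.
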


The intuition behind Lemma~\ref{lemma:x_w_self_confounder} is that, similarly to the previous lemma, conditioning on micro variables (for $t-\gamma$ to $t-\gamma-\gamma_{max}$) corresponding to  the parents of $X$ in the SCG blocks all back-door paths   that start with $X_{t-\gamma} \leftarrow U_{t-\lambda'}$ where $X\ne U$. 
This leaves only possible back-door paths starting with $X_{t-\gamma} \leftarrow X_{t-\lambda'}$, which can be blocked by timepoints of $X$ that preceeds $t-\gamma$.
	The set $\{U_{t-2},U_{t-1}, X_{t-2}\}$ statisfies Lemma~\ref{lemma:x_w_self_confounder} for the total effect of $X_{t-1}$ on $Y_t$ when considering the SCG in Figure~\ref{fig:SCG2}. Which means the set $\{U_{t-2},U_{t-1}, X_{t-2}\}$ blocks all the back-door paths between $X_{t-1}$ and $W_t$ in any FT-ADMG compatible with the SCG considered.
To visually see why the selected set in Lemma~\ref{lemma:x_w_self_confounder} blocks all back-door paths, refer to the FT-ADMG in Figure~\ref{fig:FTADMG_lemma2prime}. In the figure, the vertices corresponding to $\textcolor{MY_yellow}{\{(B_{t-\ell})_{0\leq \ell \leq \gamma_{\max}} | B \in Par(X, \mathcal{G}^s)\}}$ are highlighted in \textcolor{MY_yellow}{brown} and the vertices corresponding to $\textcolor{MY_color_pink}{\{ (X_{t-\gamma-\ell})_{1\le \ell \le \gamma_{\max}} \}}$  are highlighted in \textcolor{MY_color_pink}{pink}.

Compared to Lemma~\ref{lemma:x_w_cycle}, in Lemma~\ref{lemma:x_w_self_confounder}, the set $\textcolor{gray}{{ (X_{t-\gamma+\ell}){1\leq \ell \le \gamma} }}$ is not included in the conditioning set for two main reasons: (1) in the context of Lemma~\ref{lemma:x_w_self_confounder}, where self loops are allowed, the temporal variables in this set may be descendants of $X{t-\gamma}$, and (2) in this scenario, no back-door paths start with $\longdashleftrightarrow$.
For the same reason as (2), the only subset considered from $\textcolor{MY_color_pink}{{(B_{t-\gamma-\ell}){1\leq \ell \leq \gamma{\max}} | B \in Anc(\mathbb{W}, \mathcal{G}^s)\cap Desc(X, \mathcal{G}^s)}}$ is $\textcolor{MY_color_pink}{{ (X_{t-\gamma-\ell}){1\le \ell \le \gamma{\max}} }}$.

	Remark that Lemma~\ref{lemma:x_w_self_confounder} would fail if larger cycles involving $X$ were considered, meaning cycles on $X$ beyond a simple self loop as discussed in Section~\ref{sec:standard_front_door}.

Let us now turn to a lemma, which permits even larger cycles and allows confounding between $X$ and its ancestors, but imposes the restriction that $\gamma$ must be zero.

%\begin{restatable}
\begin{restatable}{lemma}{mylemmathree}
		\label{lemma:x_w_gamma_zero}
		Consider an SCG $\mathcal{G}^s=(\mathbb{S}, \mathbb{E}^s)$ and the pair of micro vertices $(X_t, Y_t)$ compatible with the macro vertices $(X, Y)$.
		Suppose $\mathbb{W}$  is a set of macro vertices that statisfies Conditions~\ref{item:front_door_SCG:1}, \ref{item:front_door_SCG:2} and \ref{item:front_door_SCG:3} of Definition~\ref{def:front_door_SCG} in $\mathcal{G}^s$ relative to the pair of micro vertices $(X_t, Y_t)$. 		
%		Consider an SCG $\mathcal{G}^s=(\mathbb{S}, \mathbb{E}^s)$ and the pair of micro vertices $(X_{t}, Y_t)$ compatible with the macro vertices $(X, Y)$.
%		Suppose $\mathbb{W}_{t}$ is the set of micro vertices that sub-intercepts all directed paths from $X_{t}$ to $Y_t$ in any candidate FT-ADMG in $\mathcal{C}(\mathcal{G}^s)$ and for which 
%		its set of compatible macro vertices $\mathbb{W}\subset\mathbb{S}$ statisfies Conditions~\ref{item:front_door_SCG:2} and \ref{item:front_door_SCG:3} of Definition~\ref{def:front_door_SCG} in $\mathcal{G}^s$. 
		Then for any $W_t\in \mathbb{W}_t$, 
		the set $ \textcolor{MY_yellow}{\{(B_{t-\ell})_{0\leq \ell \leq \gamma_{\max}} | B \in Anc(X, \mathcal{G}^s)\backslash Desc(X, \mathcal{G}^)\}}
		\cup 
		\textcolor{MY_color_pink}{\{(B_{t-\ell})_{1\leq \ell \leq \gamma_{\max}} | B \in (Anc(X, \mathcal{G}^s) \cup Anc(\mathbb{W}, \mathcal{G}^s)) \cap Desc(X, \mathcal{G}^s)\}}$ 
		blocks all back-door paths from $X_{t}$ to $W_{t}$ and does not contain any descendant of $X_{t}$ in any candidate FT-ADMG in $\mathcal{C}(\mathcal{G}^s)$, 
\end{restatable}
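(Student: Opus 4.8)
The statement has two parts: (1) the set in the conclusion---abbreviate it $\mathbb{Z}=\mathbb{Z}_1\cup\mathbb{Z}_2$, with $\mathbb{Z}_1=\{(B_{t-\ell})_{0\le\ell\le\gamma_{\max}}\mid B\in Anc(X,\mathcal{G}^s)\setminus Desc(X,\mathcal{G}^s)\}$ and $\mathbb{Z}_2=\{(B_{t-\ell})_{1\le\ell\le\gamma_{\max}}\mid B\in(Anc(X,\mathcal{G}^s)\cup Anc(\mathbb{W},\mathcal{G}^s))\cap Desc(X,\mathcal{G}^s)\}$---contains no descendant of $X_t$ in any candidate FT-ADMG, and (2) $\mathbb{Z}$ blocks every back-door path from $X_t$ to $W_t$ in any candidate FT-ADMG. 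As in the proofs of Lemmas~\ref{lemma:x_w_cycle} and~\ref{lemma:x_w_self_confounder}, the argument shuttles between a candidate FT-ADMG $\mathcal{G}\in\mathcal{C}(\mathcal{G}^s)$ and the SCG via the latent projection; the novelty is that arbitrary cycles on $X$ are now allowed, and the hypothesis $\gamma=0$ is precisely what compensates for this.

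For part (1): every vertex of $\mathbb{Z}_2$, and every vertex $B_{t-\ell}\in\mathbb{Z}_1$ with $\ell\ge1$, sits at a time strictly less than $t$; by Assumption~\ref{ass:temporal_priority} directed paths in an FT-ADMG are non-decreasing in time, so every descendant of $X_t$ occurs at time $\ge t$, and none of these vertices qualifies. The only remaining vertices are the $B_t\in\mathbb{Z}_1$ with $B\in Anc(X,\mathcal{G}^s)\setminus Desc(X,\mathcal{G}^s)$; were such a $B_t$ a descendant of $X_t$ in some $\mathcal{G}\in\mathcal{C}(\mathcal{G}^s)$, the witnessing directed path would project to a directed walk from $X$ to $B$ in $\mathcal{G}^s$, forcing $B\in Desc(X,\mathcal{G}^s)$---a contradiction. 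Hence $\mathbb{Z}$ has no descendant of $X_t$; in particular its only vertices at times $\ge t$ are non-descendants of $X_t$ at time $t$, a fact reused in part (2).

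For part (2) I would argue by contradiction: suppose $\pi$ is a back-door path from $X_t$ to $W_t$ in some $\mathcal{G}\in\mathcal{C}(\mathcal{G}^s)$ that is active given $\mathbb{Z}$. First one controls the start of $\pi$: it enters $X_t$ with an arrowhead, so its first edge is $A_s\to X_t$ or $A_s\longdashleftrightarrow X_t$, and a short case split shows that when $A$ is $X$, or a parent of $X$ that is not a descendant of $X$, or a parent of $X$ that is a descendant of $X$ met at a time $<t$---and $\pi$ leaves $A_s$ with a tail---the vertex $A_s$ is a non-collider of $\pi$ lying in $\mathbb{Z}$, so $\pi$ is already blocked; these cases are impossible. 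The residual cases are treated uniformly: project $\pi$ onto a walk $\bar\pi=\langle X,C^1,\dots,C^k=W\rangle$ in $\mathcal{G}^s$ (clusters of the successive micro vertices, a self-loop appearing whenever two consecutive micro vertices share a cluster), reduce $\bar\pi$ to a simple path $\rho$, and show that $\rho$ is a back-door path from $X$ to a vertex of $\mathbb{W}$ that is active given the empty set, contradicting Condition~\ref{item:front_door_SCG:2} of Definition~\ref{def:front_door_SCG}; in subcases where the reduction instead forces a strict collider on $\rho$ with no descendant of $\mathbb{W}$ available, one extracts a violation of Condition~\ref{item:front_door_SCG:1} or~\ref{item:front_door_SCG:3}. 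The engine of this step is the implication ``$\pi$ active given $\mathbb{Z}$'' $\Rightarrow$ ``$\rho$ active given $\emptyset$'', which rests on two observations: (a) no non-collider of $\pi$ can lie in $\mathbb{Z}$, which keeps the non-colliders of $\pi$ away from the micro vertices of $Anc(X,\mathcal{G}^s)$ and of $Anc(\mathbb{W},\mathcal{G}^s)\cap Desc(X,\mathcal{G}^s)$ inside the time window of $\mathbb{Z}$; and (b) every collider of $\pi$ activated through a descendant in $\mathbb{Z}$ has its cluster in $Anc(X,\mathcal{G}^s)\cup Anc(\mathbb{W},\mathcal{G}^s)$ (the witnessing descendant is some $B_{t-\ell}$ with $B$ in that union, and ``ancestor of ancestor'' is ancestor), so---using $\gamma=0$, whence descendants of $X_t$ lie at times $\ge t$ while the only part of $\mathbb{Z}$ at times $\ge t$ consists of non-descendants of $X$---such a cluster either fails to be a strict collider on $\rho$ or can be rerouted through a directed path from that cluster toward $X$ or toward $\mathbb{W}$, resolving the collider and, by a well-founded induction on the colliders of $\rho$, yielding the desired active-given-$\emptyset$ back-door path.

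The main obstacle is exactly this micro-to-macro translation of activeness. Micro activeness depends both on the precise set $\mathbb{Z}$ and on the descendant relation of the particular candidate FT-ADMG, whereas macro activeness given the empty set reduces to ``no strict collider'' under the SCG's non-standard collider/non-collider conventions (the $\leftrightarrows$ carve-outs), and matching the two while permitting unrestricted cycles on $X$ together with instantaneous relations is precisely what the cruder conditioning sets of Lemmas~\ref{lemma:x_w_cycle} and~\ref{lemma:x_w_self_confounder} could not do and what dictates the ancestor/descendant strata and time ranges appearing in $\mathbb{Z}$. Concretely, I expect the proof to become a case analysis on the first two edges of $\pi$ and on its first activated collider, propagated along $\pi$ by an extremal (latest-time, furthest-along) choice of the witnessing descendant, each branch terminating either in a vertex of $\mathbb{Z}$ detected on $\pi$ (so $\pi$ was blocked after all) or in a macro back-door path contradicting one of Conditions~\ref{item:front_door_SCG:1}--\ref{item:front_door_SCG:3}.
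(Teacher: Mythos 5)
Your overall strategy coincides with the paper's: the non-descendance claim is handled by temporal priority (Assumption~\ref{ass:temporal_priority}) plus projection onto $\mathcal{G}^s$, and the blocking claim is handled by projecting a hypothetical active micro back-door path onto a macro walk in $\mathcal{G}^s$ and contradicting Condition~\ref{item:front_door_SCG:2}. The paper packages the projection step as Property~\ref{property:1} (every back-door path from $X_t$ to $W_t$ in a candidate FT-ADMG must revisit the $X$ time series at some $X_{t-\lambda'}$ with $\lambda'\neq 0$, since otherwise its projection would be a back-door path from $X$ to $\mathbb{W}$ in $\mathcal{G}^s$), notes that Condition~\ref{item:front_door_SCG:2} forbids cycles through $X$ meeting a path from $X$ to $\mathbb{W}$, and then runs a short case analysis on the sign of $\lambda'$ and on whether the second vertex of the path lies in a cycle with $X$, blocking each case by temporal priority, by the brown set, or by the pink set.

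The gap in your write-up is that the decisive implication --- ``$\pi$ active given $\mathbb{Z}$ implies the reduced macro path $\rho$ is active given $\emptyset$'' --- is described but not proved. The two places you defer are exactly the hard ones: (i) the claim that a collider of $\pi$ activated by a descendant in $\mathbb{Z}$ ``either fails to be a strict collider on $\rho$ or can be rerouted through a directed path toward $X$ or toward $\mathbb{W}$ \ldots by a well-founded induction'', and (ii) the subcases where you say ``one extracts a violation of Condition~\ref{item:front_door_SCG:1} or~\ref{item:front_door_SCG:3}''. Neither is carried out, and the rerouting step in particular needs care: splicing a reversed directed path from a collider's cluster back to $X$ (or forward to $\mathbb{W}$) produces a new walk whose own colliders and repeated clusters must again be resolved, and one must check that the result is still a back-door path at $X$, still reaches a member of $\mathbb{W}$, and still respects the SCG's nonstandard strict-collider conventions. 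Since this translation of activeness is, by your own account, ``the main obstacle'', leaving it at the level of ``I expect the proof to become a case analysis'' means the lemma is not actually established. To complete it along the paper's lines, first prove the analogue of Property~\ref{property:1}, then block the path at or immediately after its forced revisit of the $X$ time series; that is where the distinction between the time windows $0\leq\ell\leq\gamma_{\max}$ and $1\leq\ell\leq\gamma_{\max}$ in the two strata of $\mathbb{Z}$ does its work.
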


The intuition behind  Lemma~\ref{lemma:x_w_gamma_zero} is that conditioning on micro variables corresponding to  the parents of $X$ in the SCG cannot block all back-door paths because a parent can also be a descendant of $X$ (since cycles are allowed in this case), but the set micro variables (for $t$ to $t-\gamma_{max}$) corresponding to the ancestors of $X$ that are not descendants of $X$ in the SCG  and the set micro variable (for $t-1$ to $t-\gamma_{max}$) (a subset of the set in purple) corresponding to the ancestors of $X$ that are descendants of $X$ in the SCG  blocks all back-door paths that start with $\rightarrow$. This still leaves some possible back-door paths active, those starting with $\longdashleftrightarrow$, which require additional conditioning sets to be blocked, namely some of the ancestors of $W$.
The set $\{X_{t-1}, W_{t-1}\}$ statisfies Lemma~\ref{lemma:x_w_gamma_zero} for the total effect of $X_{t}$ on $Y_t$ when considering any of the SCGs given in Figures~\ref{fig:SCG1}, \ref{fig:SCG2} and \ref{fig:SCG3}. Which means the set $\{X\}$ blocks all the back-door paths between the $X_{t}$  and $W_t$ in any FT-ADMG compatible with the SCG considered.
To visually see, on a more complete example, why the selected set in Lemma~\ref{lemma:x_w_gamma_zero} blocks all back-door paths, refer to the FT-ADMG in Figure~\ref{fig:FTADMG_lemma3}. In the figure, the vertices corresponding to $\textcolor{MY_yellow}{\{(B_{t-\ell})_{0\leq \ell \leq \gamma_{\max}} | B \in Anc(X, \mathcal{G}^s)\backslash Desc(X, \mathcal{G})\}}$ are highlighted in \textcolor{MY_yellow}{brown} and the vertices corresponding to $\textcolor{MY_color_pink}{\{(B_{t-\ell})_{1\leq \ell \leq \gamma_{\max}} | B \in (Anc(X, \mathcal{G}^s) \cup Anc(W, \mathcal{G}^s)) \cap Desc(X, \mathcal{G})\}}$ are highlighted in \textcolor{MY_color_pink}{pink}.

Notice that, compared to Lemma~\ref{lemma:x_w_cycle}, in Lemmas~\ref{lemma:x_w_self_confounder} and \ref{lemma:x_w_gamma_zero}, two key replacements are made regarding the selected conditioning sets: \textcolor{MY_yellow}{$\{(B_{t-\ell})_{0\leq \ell \leq \gamma{\max}} | B \in Par(X, \mathcal{G}^s)\}$} is replaced with \textcolor{MY_yellow}{$\{(B_{t-\ell})_{0\leq \ell \leq \gamma{\max}} | B \in Anc(X, \mathcal{G}^s)\backslash Desc(X, \mathcal{G})\}$} and  \textcolor{MY_color_pink}{$\{(B_{t-\gamma-\ell})_{1\leq \ell \leq \gamma{\max}} | B \in Anc(\mathbb{W}, \mathcal{G}^s)\cap Desc(X, \mathcal{G}^s)\}$} is replaced with \textcolor{MY_color_pink}{$\{(B_{t-\ell})_{1\leq \ell \leq \gamma{\max}} | B \in (Anc(X, \mathcal{G}^s) \cup Anc(\mathbb{W}, \mathcal{G}^s)) \cap Desc(X, \mathcal{G})\}$} to account for potential cycles involving $X$. For example, in the FT-ADMG shown in Figure~\ref{fig:FTADMG_lemma3}, conditioning only on the parents, specifically $U_t$ and $U_{t-1}$ (without conditioning on $Z_t$ and $Z_{t-1}$),  would be sufficient to block all back-door paths. However, there exists another FT-ADMG compatible with the SCG in Figure~\ref{fig:satisfying_inst:1.1}, which is almost identical to the FT-ADMG in Figure~\ref{fig:FTADMG_lemma3} except that now $\forall t\in [t_0, t_{max}], X_{t} \rightarrow U_{t}$ (which means $U_{t} \not\rightarrow X_{t}$). In this FT-ADMG, $U_t$ becomes a collider between $X_t$ and $Z_t$, so conditioning on it would activate a new path that cannot be blocked unless  the ancestors of $X_t$ are also included in the conditioning set.

\begin{figure*}[t!]
	\centering
	\begin{subfigure}{.29\textwidth}
		\centering
		\begin{tikzpicture}[{black, circle, draw, inner sep=0}]
			\tikzset{nodes={draw,rounded corners},minimum height=0.6cm,minimum width=0.6cm}	
			\tikzset{latent/.append style={white, fill=black}}
			
			%			\node[latent] (L) at (1.75,1.2) {$L$};
			\node[fill=MY_red] (X) at (0,0) {$X$} ;
			\node[fill=MY_blue] (Y) at (3.5,0) {$Y$};
			\node[fill=green!30] (Z) at (1.75,0) {$W$};

			%			\begin{scope}[transform canvas={xshift=-.1em, yshift=.1em}]
				%				\draw [->,>=latex,] (X) -- (Z);
				%			\end{scope}
			%			\begin{scope}[transform canvas={xshift=.1em, yshift=-.1em}]
				%				\draw [<-,>=latex,] (X) -- (Z);
				%			\end{scope}
			\begin{scope}[transform canvas={xshift=0em, yshift=.15em}]
				\draw [->,>=latex,] (X) -- (Z);
			\end{scope}
			\begin{scope}[transform canvas={xshift=0em, yshift=-.15em}]
				\draw [<-,>=latex,] (X) -- (Z);
			\end{scope}
			
			\draw[->,>=latex] (Z) -- (Y);

			%			\draw[->,>=latex, dashed] (L) to [out=0,in=90, looseness=1] (Y);
			%			\draw[->,>=latex, dashed] (L) to [out=180,in=90, looseness=1] (X);
			\draw[<->,>=latex, dashed] (X) to [out=90,in=90, looseness=1] (Y);
			
			%\draw[->,>=latex] (X) to [out=165,in=120, looseness=2] (X);
			\draw[->,>=latex] (Y) to [out=15,in=60, looseness=2] (Y);
			\draw[->,>=latex] (Z) to [out=15,in=60, looseness=2] (Z);
			%			\draw[->,>=latex] (L) to [out=165,in=120, looseness=2] (L);
			
			\draw[<->,>=latex, dashed] (X) to [out=-155,in=-110, looseness=2] (X);
			\draw[<->,>=latex, dashed] (Y) to [out=-25,in=-70, looseness=2] (Y);
			%\draw[<->,>=latex, dashed] (Z) to [out=-25,in=-70, looseness=2] (Z);			
		\end{tikzpicture}
		\caption{}
		\label{fig:not_satisfying:1}
	\end{subfigure}
	\hfill
	\begin{subfigure}{.29\textwidth}
		\centering
		\begin{tikzpicture}[{black, circle, draw, inner sep=0}]
			\tikzset{nodes={draw,rounded corners},minimum height=0.6cm,minimum width=0.6cm}	
			\tikzset{latent/.append style={white, fill=black}}
			
			%			\node[latent] (L) at (1.75,1.2) {$L$};
			\node[fill=MY_red] (X) at (0,0) {$X$} ;
			\node[fill=MY_blue] (Y) at (3.5,0) {$Y$};
			\node[fill=green!30] (Z) at (1.75,0) {$W$};

			%			\begin{scope}[transform canvas={xshift=-.1em, yshift=.1em}]
				%				\draw [->,>=latex,] (X) -- (Z);
				%			\end{scope}
			%			\begin{scope}[transform canvas={xshift=.1em, yshift=-.1em}]
				%				\draw [<-,>=latex,] (X) -- (Z);
				%			\end{scope}
			\begin{scope}[transform canvas={xshift=0em, yshift=.15em}]
				\draw [->,>=latex,] (Z) -- (Y);
			\end{scope}
			\begin{scope}[transform canvas={xshift=0em, yshift=-.15em}]
				\draw [<-,>=latex,] (Z) -- (Y);
			\end{scope}
			
			\draw[->,>=latex] (X) -- (Z);

			%			\draw[->,>=latex, dashed] (L) to [out=0,in=90, looseness=1] (Y);
			%			\draw[->,>=latex, dashed] (L) to [out=180,in=90, looseness=1] (X);
			\draw[<->,>=latex, dashed] (X) to [out=90,in=90, looseness=1] (Y);
			
			%\draw[->,>=latex] (X) to [out=165,in=120, looseness=2] (X);
			
			%			\draw[->,>=latex] (L) to [out=165,in=120, looseness=2] (L);
			\draw[->,>=latex] (Z) to [out=15,in=60, looseness=2] (Z);

			\draw[<->,>=latex, dashed] (X) to [out=-155,in=-110, looseness=2] (X);
			\draw[<->,>=latex, dashed] (Y) to [out=-25,in=-70, looseness=2] (Y);
			%\draw[<->,>=latex, dashed] (Z) to [out=-25,in=-70, looseness=2] (Z);			
		\end{tikzpicture}
		\caption{}
		\label{fig:not_satisfying:2}
	\end{subfigure}
	\hfill 
	\begin{subfigure}{.29\textwidth}
		\centering
		\begin{tikzpicture}[{black, circle, draw, inner sep=0}]
			\tikzset{nodes={draw,rounded corners},minimum height=0.6cm,minimum width=0.6cm}	
			\tikzset{latent/.append style={white, fill=black}}
			
			%			\node[latent] (L) at (1.75,1.2) {$L$};
			\node[fill=MY_red] (X) at (0,0) {$X$} ;
			\node[fill=MY_blue] (Y) at (3.5,0) {$Y$};
			\node[fill=green!30] (Z) at (1.75,0) {$W$};
			
			\begin{scope}[transform canvas={xshift=0em, yshift=.15em}]
				\draw [->,>=latex,] (X) -- (Z);
			\end{scope}
			\begin{scope}[transform canvas={xshift=0em, yshift=-.15em}]
				\draw [<-,>=latex,] (X) -- (Z);
			\end{scope}
			\begin{scope}[transform canvas={xshift=0em, yshift=.15em}]
				\draw [->,>=latex,] (Z) -- (Y);
			\end{scope}
			\begin{scope}[transform canvas={xshift=0em, yshift=-.15em}]
				\draw [<-,>=latex,] (Z) -- (Y);
			\end{scope}

			%			\draw[->,>=latex, dashed] (L) to [out=0,in=90, looseness=1] (Y);
			%			\draw[->,>=latex, dashed] (L) to [out=180,in=90, looseness=1] (X);
			\draw[<->,>=latex, dashed] (X) to [out=90,in=90, looseness=1] (Y);
			
			%\draw[->,>=latex] (X) to [out=165,in=120, looseness=2] (X);
			\draw[->,>=latex] (Y) to [out=15,in=60, looseness=2] (Y);
			%			\draw[->,>=latex] (L) to [out=165,in=120, looseness=2] (L);
			\draw[->,>=latex] (Z) to [out=15,in=60, looseness=2] (Z);
			
			\draw[<->,>=latex, dashed] (X) to [out=-155,in=-110, looseness=2] (X);
			\draw[<->,>=latex, dashed] (Y) to [out=-25,in=-70, looseness=2] (Y);
			%\draw[<->,>=latex, dashed] (Z) to [out=-25,in=-70, looseness=2] (Z);			
		\end{tikzpicture}
		\caption{}
		\label{fig:not_satisfying:3}
	\end{subfigure}
	
	\begin{subfigure}{.42\textwidth}
		\centering
		\begin{tikzpicture}[{black, circle, draw, inner sep=0}]
			\tikzset{nodes={draw,rounded corners},minimum height=0.6cm,minimum width=0.6cm}	
			\tikzset{latent/.append style={white, fill=black}}
			
			%			\node[latent] (L) at (1.75,1.2) {$L$};
			\node[fill=MY_red] (X) at (0,0) {$X$} ;
			\node[fill=MY_blue] (Y) at (3,0) {$Y$};
			\node[fill=green!30] (Z) at (1.5,0) {$W$};
			\node (W) at (-1.5,0) {$U$};
			
			%			\begin{scope}[transform canvas={xshift=-.1em, yshift=.1em}]
				%				\draw [->,>=latex,] (X) -- (Z);
				%			\end{scope}
			%			\begin{scope}[transform canvas={xshift=.1em, yshift=-.1em}]
				%				\draw [<-,>=latex,] (X) -- (Z);
				%			\end{scope}
			\draw [->,>=latex,] (X) -- (Z);

			\draw[->,>=latex] (Z) -- (Y);
			
			\draw[->,>=latex] (W) -- (X);
			\draw[<-,>=latex] (W) to [out=40,in=140, looseness=1] (Z);
			
			%			\draw[->,>=latex, dashed] (L) to [out=0,in=90, looseness=1] (Y);
			%			\draw[->,>=latex, dashed] (L) to [out=180,in=90, looseness=1] (X);
			\draw[<->,>=latex, dashed] (X) to [out=90,in=90, looseness=1] (Y);
			
			%\draw[->,>=latex] (X) to [out=165,in=120, looseness=2] (X);
			\draw[->,>=latex] (Y) to [out=15,in=60, looseness=2] (Y);
			\draw[->,>=latex] (Z) to [out=15,in=60, looseness=2] (Z);
			\draw[->,>=latex] (W) to [out=165,in=120, looseness=2] (W);
			
			\draw[<->,>=latex, dashed] (X) to [out=-155,in=-110, looseness=2] (X);
			\draw[<->,>=latex, dashed] (Y) to [out=-25,in=-70, looseness=2] (Y);
			%\draw[<->,>=latex, dashed] (Z) to [out=-25,in=-70, looseness=2] (Z);			
			%\draw[<->,>=latex, dashed] (W) to [out=-155,in=-110, looseness=2] (W);			
		\end{tikzpicture}
		\caption{SCG.}
		\label{fig:not_satisfying:4}
	\end{subfigure}
	\hfill 
	\begin{subfigure}{.42\textwidth}
		\centering
		\begin{tikzpicture}[{black, circle, draw, inner sep=0}]
			\tikzset{nodes={draw,rounded corners},minimum height=0.6cm,minimum width=0.6cm}	
			\tikzset{latent/.append style={white, fill=black}}
			
			%			\node[latent] (L) at (1.75,1.2) {$L$};
			\node[fill=MY_red] (X) at (0,0) {$X$} ;
			\node[fill=MY_blue] (Y) at (3,0) {$Y$};
			\node[fill=green!30] (Z) at (1.5,0) {$W$};
			\node (W) at (4.5,0) {$U$};
			
			%			\begin{scope}[transform canvas={xshift=-.1em, yshift=.1em}]
				%				\draw [->,>=latex,] (X) -- (Z);
				%			\end{scope}
			%			\begin{scope}[transform canvas={xshift=.1em, yshift=-.1em}]
				%				\draw [<-,>=latex,] (X) -- (Z);
				%			\end{scope}
			\draw [->,>=latex,] (X) -- (Z);

			\draw[->,>=latex] (Z) -- (Y);
			
			\draw[->,>=latex] (Y) -- (W);
			\draw[<-,>=latex] (Z) to [out=40,in=140, looseness=1] (W);
			
			%			\draw[->,>=latex, dashed] (L) to [out=0,in=90, looseness=1] (Y);
			%			\draw[->,>=latex, dashed] (L) to [out=180,in=90, looseness=1] (X);
			\draw[<->,>=latex, dashed] (X) to [out=90,in=90, looseness=1] (Y);
			
			%\draw[->,>=latex] (X) to [out=165,in=120, looseness=2] (X);
			\draw[->,>=latex] (Y) to [out=15,in=60, looseness=2] (Y);
			\draw[->,>=latex] (Z) to [out=165,in=120, looseness=2] (Z);
			\draw[->,>=latex] (W) to [out=15,in=60, looseness=2] (W);
			
			\draw[<->,>=latex, dashed] (X) to [out=-155,in=-110, looseness=2] (X);
			\draw[<->,>=latex, dashed] (Y) to [out=-25,in=-70, looseness=2] (Y);
			%\draw[<->,>=latex, dashed] (Z) to [out=-25,in=-70, looseness=2] (Z);			
			%\draw[<->,>=latex, dashed] (W) to [out=-25,in=-70, looseness=2] (W);			
		\end{tikzpicture}
		\caption{SCG.}
		\label{fig:not_satisfying:5}
	\end{subfigure}
	\caption{Five SCGs not satisfying Definition~\ref{def:front_door_SCG}  for any vertex relative to the micro of vertices $(X_{t-\gamma}, Y_t)$. 
	The SCG in (a) and (d) does not satisfy Definition~\ref{def:front_door_SCG} because Condition~\ref{item:front_door_SCG:2} is not satisfied. The SCG in (b) and (e) does not satisfy Definition~\ref{def:front_door_SCG} because Condition~\ref{item:front_door_SCG:3} is not satisfied. The SCG in (c) does not satisfy Definition~\ref{def:front_door_SCG} because Condition~\ref{item:front_door_ADMG:2} and \ref{item:front_door_ADMG:3} are not satisfied.
	Each pair of red and blue vertices represents the total effect of interest and green vertices are those that intercepts all directed paths from the cause to the effect of interest.}
	\label{fig:not_satisfying}
\end{figure*}%

The last lemma asserts that given Conditions~\ref{item:front_door_SCG:1}-\ref{item:front_door_SCG:3}, it is guaranteed that there exists a finite set that blocks all back-door paths from  some micro vertices (that intercept all directed paths from $X_{t-\gamma}$ to $Y_t$ in any candidate FT-ADMG) to $Y_t$. Furthermore, this finite set does not contain any descendants of the micro vertices that intercept all directed paths. This mirrors the third Condition of the standard front-door criterion for ADMGs.

%\begin{restatable}
\begin{restatable}{lemma}{mylemmafour}
	\label{lemma:w_y}
	Consider an SCG $\mathcal{G}^s=(\mathbb{S}, \mathbb{E}^s)$ and the pair of micro vertices $(X_{t-\gamma}, Y_t)$ compatible with the macro vertices $(X, Y)$. Suppose $\mathbb{W}$  is a set of macro vertices that statisfies Conditions~\ref{item:front_door_SCG:1}, \ref{item:front_door_SCG:2} and \ref{item:front_door_SCG:3} of Definition~\ref{def:front_door_SCG} in $\mathcal{G}^s$ relative to the pair of micro vertices $(X_{t-\gamma}, Y_t)$. 		
	%Consider an SCG $\mathcal{G}^s=(\mathbb{S}, \mathbb{E}^s)$ and the pair of micro vertices $(X_{t-\gamma}, Y_t)$ compatible with the macro vertices $(X, Y)$. Suppose $\{ (\mathbb{W}_{t-\gamma+\ell})_{\lambda_1\leq \ell \le \lambda_2} \}$ is a subset of the set of micro vertice $\{ (\mathbb{W}_{t-\gamma+\ell})_{0\leq \ell \le \gamma} \}$ that sub-intercepts all directed paths from $X_{t-\gamma}$ to $Y_t$ in any candidate FT-ADMG in $\mathcal{C}(\mathcal{G}^s)$ and for which their set of compatible macro vertices $\mathbb{W}\subset\mathbb{S}$ statisfies Conditions~\ref{item:front_door_SCG:2} and \ref{item:front_door_SCG:3} of Definition~\ref{def:front_door_SCG} in $\mathcal{G}^s$. 
	Then, for any $W_{t-\lambda}\in \{ (\mathbb{W}_{t-\gamma+\ell})_{0\leq \ell \le \gamma} \}$,   
	the set 
	$
	\textcolor{MY_orange}{\{(B_{t-\gamma-\ell})_{-\gamma\leq \ell \leq \gamma_{\max}} | B \in Anc(\mathbb{W}, \mathcal{G}^s)\backslash Desc(\mathbb{W}, \mathcal{G}^s)\}}
	\cup 
	\textcolor{MY_color_fushia}{\{(B_{t-\gamma-\ell})_{1\leq \ell \leq \gamma_{\max}} | B \in Anc(\mathbb{W}, \mathcal{G}^s)\cap Desc(\mathbb{W}, \mathcal{G}^s)\}}
	$
	blocks all back-door paths from $W_{t-\lambda}$ to $Y_t$ not passing by  $ \textcolor{MY_green}{ \{ (\mathbb{W}_{t-\gamma+\ell})_{0\leq \ell \le \gamma} \}}\backslash\{W_{t-\lambda}\}$
	and it does not contain any descendent of $W_{t-\lambda}$.
	%of any vertex that lies on any directed path from  $\{ (\mathbb{W}_{t-\gamma+\ell})_{0\leq \ell \le \gamma} \}$ to $Y_t$. 
\end{restatable}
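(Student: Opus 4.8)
The plan is to prove the two conclusions of Lemma~\ref{lemma:w_y} in turn: first that no vertex of the claimed conditioning set (write $\mathbb{C}$ for the union of its two families, the one over $Anc(\mathbb{W},\mathcal{G}^s)\setminus Desc(\mathbb{W},\mathcal{G}^s)$ with micro-times in $[t-\gamma-\gamma_{\max},t]$ and the one over $Anc(\mathbb{W},\mathcal{G}^s)\cap Desc(\mathbb{W},\mathcal{G}^s)$ with micro-times in $[t-\gamma-\gamma_{\max},t-\gamma-1]$) is a descendant of $W_{t-\lambda}$ in any candidate FT-ADMG; and second that $\mathbb{C}$ blocks every back-door path from $W_{t-\lambda}$ to $Y_t$ that avoids $\{(\mathbb{W}_{t-\gamma+\ell})_{0\leq\ell\leq\gamma}\}\setminus\{W_{t-\lambda}\}$, simultaneously in every $\mathcal{G}\in\mathcal{C}(\mathcal{G}^s)$. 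Conceptually the statement is the micro-level analogue of the classical fact that Condition~\ref{item:front_door_SCG:3} of the front-door criterion makes a single node a valid back-door adjustment set for $(\mathbb{W},Y)$; since cycles and unknown lags forbid using $X_{t-\gamma}$ alone, we condition on all ancestors of $\mathbb{W}$ in a bounded window around $[t-\gamma,t]$, and much of the work is showing the window is wide enough.

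For the no-descendant claim I would treat the two families of $\mathbb{C}$ separately. If a vertex $B_{t'}$ of the first family were a descendant of $W_{t-\lambda}$ in some $\mathcal{G}\in\mathcal{C}(\mathcal{G}^s)$, then projecting a directed micro-path from $W_{t-\lambda}$ to $B_{t'}$ onto $\mathcal{G}^s$ would produce a directed walk from $W$ to $B$, forcing $B\in Desc(\mathbb{W},\mathcal{G}^s)$ and contradicting the definition of that family. For a vertex $B_{t'}$ of the second family, since $W_{t-\lambda}$ ranges over $\{(\mathbb{W}_{t-\gamma+\ell})_{0\leq\ell\leq\gamma}\}$ we have $\lambda\leq\gamma$, hence $t-\lambda\geq t-\gamma>t-\gamma-1\geq t'$, and Assumption~\ref{ass:temporal_priority} forbids a later vertex from being a descendant of an earlier one. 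This part is routine bookkeeping.

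For the blocking claim I would argue by contradiction. Suppose some $\mathcal{G}\in\mathcal{C}(\mathcal{G}^s)$ contains a back-door path $\pi$ from $W_{t-\lambda}$ to $Y_t$ that avoids the forbidden $\mathbb{W}$-micro-vertices and is active given $\mathbb{C}$. Project $\pi$ onto $\mathcal{G}^s$: each micro edge maps to a macro edge with the same endpoint marks, or to a self-loop, so the image is a walk $\omega$ from $W$ to $Y$ whose first edge has an arrowhead into $W$. The core step is to extract from $\omega$ a genuine back-door path $\rho$ from $W$ to $Y$ in $\mathcal{G}^s$ that is \emph{not} blocked by $X$, contradicting Condition~\ref{item:front_door_SCG:3}; this already requires care, since $\pi$ may wander within the cluster $W$ before leaving it. To control the extraction I would track cluster membership and times along $\pi$: the vertices met along the maximal initial ``arrowhead-into-current'' segment from $W_{t-\lambda}$ have clusters in $Anc(\mathbb{W},\mathcal{G}^s)$ (each a parent of its successor, or equal to it through a self-loop), and every collider of $\pi$ has a descendant in $\mathbb{C}$, forcing its cluster into $Anc(Anc(\mathbb{W},\mathcal{G}^s))=Anc(\mathbb{W},\mathcal{G}^s)$ as well. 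Then the window bookkeeping applies: a non-collider $V_{t'}$ of $\pi$ whose cluster lies in $Anc(\mathbb{W},\mathcal{G}^s)$ must lie outside $\mathbb{C}$, and combining the lag bound $\gamma_{\max}$ with $t-\lambda\in[t-\gamma,t]$ shows this can happen only when $V$'s cluster is in $Desc(\mathbb{W},\mathcal{G}^s)$ and $t'\geq t-\gamma$; together with $\pi$ avoiding the forbidden $\mathbb{W}$-micro-vertices this confines the initial backward segment of $\pi$ to finitely many vertices, at times in $[t-\gamma,t]$, with clusters in $(Anc(\mathbb{W},\mathcal{G}^s)\cap Desc(\mathbb{W},\mathcal{G}^s))\setminus\mathbb{W}$. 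Carrying these facts through $\omega$ edge by edge, and using that the SCG separation rule for a non-collider contains the extra requirement ``there is a directed edge from it to a path vertex not forming a cycle with it'' — which fails precisely for the cycle-vertices just identified — one shows $\rho$ has no strict non-collider in $\{X\}$ with an admissible outgoing edge and no strict collider outside $Desc(X,\mathcal{G}^s)$, i.e., $\rho$ is active given $\{X\}$: the desired contradiction.

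I expect this translation step to be the main obstacle: matching the \emph{windowed} set $\mathbb{C}$ against the single-node macro adjustment set $\{X\}$ coming from Condition~\ref{item:front_door_SCG:3}, so that the non-standard SCG d-separation rules (the extra outgoing-edge clause at non-colliders, the strict-collider clause) line up with micro-level blocking by $\mathbb{C}$. The delicate sub-cases are bidirected edges of $\pi$ that leave $Anc(\mathbb{W},\mathcal{G}^s)$ and re-entries of $\pi$ into clusters of $\mathbb{W}$ at times before $t-\gamma-\gamma_{\max}$; these are also where the two-part shape of $\mathbb{C}$ — in particular the strictly shorter window for clusters that are additionally descendants of $\mathbb{W}$ — must be justified as both necessary and sufficient. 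The $\gamma=0$ simplification of $\mathbb{C}$, as in Lemma~\ref{lemma:x_w_gamma_zero}, should then follow as a special case.
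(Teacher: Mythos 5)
Your skeleton shares the paper's two key ingredients -- projecting micro back-door paths onto the SCG to exploit Condition~\ref{item:front_door_SCG:3}, and a temporal windowing argument based on $\gamma_{\max}$ -- and your treatment of the no-descendant claim is fine. But the decisive step of the blocking claim is asserted rather than proved, and the contradiction you aim for is not always reachable. You want to show that any $\pi$ active given $\mathbb{C}$ projects to an SCG back-door path $\rho$ that is active given $\{X\}$, contradicting Condition~\ref{item:front_door_SCG:3}. That implication fails for exactly the configurations that carry the real content of the lemma: if $\pi$ threads through a copy $X_{t-\lambda'}$ with $\lambda'>\gamma+\gamma_{\max}$ (or a copy $W_{t-\lambda''}$ with $\lambda''>\gamma$), the projection $\rho$ \emph{is} blocked by $X$ (or is not a back-door path at all), so it is perfectly consistent with Condition~\ref{item:front_door_SCG:3} and yields no contradiction; what must be shown instead is that such a $\pi$ is blocked \emph{directly} by $\mathbb{C}$, because any directed segment descending from $W_{t-\lambda}$ (time in $[t-\gamma,t]$) to a vertex below $t-\gamma-\gamma_{\max}$ must, stepping back at most $\gamma_{\max}$ per edge, land on an ancestor of $\mathbb{W}$ at a time in $[t-\gamma-\gamma_{\max},t-\gamma-1]$, which lies in the pink or orange family. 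This is precisely the paper's route: it first proves a structural fact (Property~\ref{property:2}: every micro back-door path from $W_{t-\lambda}$ to $Y_t$ must pass through some $X_{t-\lambda'}$ or some $W_{t-\lambda''}$ with $\lambda''\neq\lambda$) and then does an explicit case split on the temporal index of that vertex, showing each case is blocked by the orange set, blocked by the pink set, or exempted because it traverses the green set. Your sketch subsumes the first of these cases into a window claim that is stated only for the initial directed segment and omits times outside $[t-\gamma-\gamma_{\max},t]$ beyond that segment.

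A second unaddressed point: for $\rho$ to be active given $\{X\}$ you also need every strict collider of $\rho$ to have $X$ as a descendant. Activeness of $\pi$ given $\mathbb{C}$ only gives you that each micro collider has a descendant in $\mathbb{C}\subseteq$ (micro copies of) $Anc(\mathbb{W},\mathcal{G}^s)$, hence that its cluster is an ancestor of $\mathbb{W}$ -- which does not imply $X\in Desc(C,\mathcal{G}^s)$. So even after the windowing is repaired, the projected path can be blocked by $X$ at a strict collider while $\pi$ remains open, and your global contradiction again does not close. In short, the plan correctly names the hard spots but the argument that would resolve them -- the analogue of Property~\ref{property:2} plus the case analysis on where the path re-enters $X$ or $\mathbb{W}$ relative to the window $[t-\gamma-\gamma_{\max},t-\gamma-1]$ -- is missing.
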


The intuition behind Lemma~\ref{lemma:w_y} is much simpler than the intuition behind the previous lemmas since according to the conditions of Definition~\ref{def:front_door_SCG}, there cannot be any back-door paths from  $W_{t-\lambda}$ to $Y_t$ starting with $\longdashleftrightarrow$ and cycles on $W$ would not create any issues since they either implies active back-door paths that pass by $W_{t-\lambda'}$ such that $\lambda'>\gamma$ which can be easily blocked  a set of micro variables, conceptually very similar to the one used in Lemma~\ref{lemma:x_w_self_confounder} and \ref{lemma:x_w_gamma_zero}. Or they pass by $W_{t-\lambda'}$ such that $W_{t-\lambda'}\in\{( \mathbb{W}_{t-\gamma+\ell})_{0\leq \ell \le \gamma} \}\backslash\{W_{t-\lambda}\}$.
The set $\{W_{t-2}, X_{t-2}, X_{t-1}, X_{t}\}$ satisfies the conditions of this lemma when analyzing the total effect of $X_{t-1}$ on $Y_t$ in any of the SCGs presented in Figures~\ref{fig:SCG1}, \ref{fig:SCG2}, and \ref{fig:SCG3}. Specifically, $\{W_{t-2}, X_{t-2}, X_{t-1}, X_{t}\}$ effectively blocks all back-door paths from $W_t$ (respectively, $W_{t-1}$) to $Y_t$ that do not pass through $W_{t-1}$ (respectively, $W_t$). Additionally, this set does not contain any descendant of $W_t$ or $W_{t-1}$ in any FT-ADMG that is compatible with any of these SCGs (e.g., Figures~\ref{fig:FTADMG1}, \ref{fig:FTADMG2}, \ref{fig:FTADMG3}, \ref{fig:FTADMG4}, \ref{fig:FTADMG5}, or \ref{fig:FTADMG6}).
To gain a clearer visual understanding, using a more complete example, why the selected set in Lemma~\ref{lemma:w_y} blocks all back-door paths not passing by $\textcolor{MY_green}{ \{ (\mathbb{W}_{t-\gamma+\ell})_{0\leq \ell \le \gamma} \}}\backslash\{W_{t-\lambda}\}$, refer to the FT-ADMs in Figure~\ref{fig:FTADMG_lemma4}. In these FT-ADMGs, 
the vertices corresponding to  $\textcolor{MY_orange}{\{(B_{t-\gamma-\ell})_{-\gamma\leq \ell \leq \gamma_{\max}} | B \in Anc(\mathbb{W}, \mathcal{G}^s)\backslash Desc(\mathbb{W}, \mathcal{G}^s)\}}$ are highlighted in \textcolor{MY_orange}{orange}, 
the vertices corresponding to $\textcolor{MY_color_fushia}{\{(B_{t-\gamma-\ell})_{1\leq \ell \leq \gamma_{\max}} | B \in Anc(\mathbb{W}, \mathcal{G}^s)\cap Desc(\mathbb{W}, \mathcal{G}^s)\}}$ are highlighted in \textcolor{MY_color_fushia}{purple},
and  
the vertices corresponding to $ \textcolor{MY_green}{ \{ (\mathbb{W}_{t-\gamma+\ell})_{0\leq \ell \le \gamma} \}}$ are highlighted in \textcolor{MY_green}{green}.
The vertex  \textcolor{MY_red}{$X_{t-\gamma}$} is highlighted in  \textcolor{MY_red}{red} and in \textcolor{MY_orange}{orange}  since it is the main cause of interest and at the same time it is in the set $\textcolor{MY_orange}{\{(B_{t-\gamma-\ell})_{-\gamma\leq \ell \leq \gamma_{\max}} | B \in Anc(W, \mathcal{G}^s)\backslash Desc(W, \mathcal{G}^s)\}}$.

%\begin{lemma}
%	If $Cycles(Y)=\emptyset$ then  $\{ (X_{t-\gamma+\ell})_{1\leq \ell \le \gamma} \}$ ...
%Furethermore, ...
%\end{lemma}

		\begin{figure}
	\begin{subfigure}{.45\textwidth}
		\centering
		\begin{tikzpicture}[{black, circle, draw, inner sep=0}]
			\tikzset{nodes={draw,rounded corners},minimum height=0.7cm,minimum width=0.6cm, font=\scriptsize}
			\tikzset{latent/.append style={white, fill=black}}
			\tikzset{cond_brown/.append style={fill=MY_yellow}}
			\tikzset{cond_gray/.append style={fill=gray!20}}
			\tikzset{cond_fushia/.append style={fill=MY_color_pink}}
			\tikzset{intercept/.append style={fill=green!30}}

			\node  (W) at (1.2,4.8) {$Z_t$};
			\node (W-1) at (0,4.8) {$Z_{t-1}$};
			\node  (W-2) at (-1.2,4.8) {$Z_{t-2}$};
			\node  (W-3) at (-2.4,4.8) {$Z_{t-3}$};
			\node  (U) at (1.2,3.6) {$U_t$};
			\node[cond_brown] (U-1) at (0,3.6) {$U_{t-1}$};
			\node[cond_brown]  (U-2) at (-1.2,3.6) {$U_{t-2}$};
			\node  (U-3) at (-2.4,3.6) {$U_{t-3}$};
			\node[intercept, ultra thick]  (Z) at (1.2,1.2) {$W_t$};
			\node[intercept, ultra thick] (Z-1) at (0,1.2) {$W_{t-1}$};
			\node[cond_fushia]  (Z-2) at (-1.2,1.2) {$W_{t-2}$};
			\node[fill=MY_blue] (Y) at (1.2,0) {$Y_t$};
			\node (Y-1) at (0,0) {$Y_{t-1}$};
			\node  (Y-2) at (-1.2,0) {$Y_{t-2}$};
			\node[cond_gray] (X) at (1.2,2.4) {$X_t$};
			\node[fill=MY_red, ultra thick] (X-1) at (0,2.4) {$X_{t-1}$};
			\node[cond_fushia]  (X-2) at (-1.2,2.4) {$X_{t-2}$};
			\node  (Z-3) at (-2.4,1.2) {$W_{t-3}$};
			\node  (X-3) at (-2.4,2.4) {$X_{t-3}$};
			\node  (Y-3) at (-2.4,0) {$Y_{t-3}$};
			%%% self-loop
			%%% others 
			\draw[->,>=latex] (X-3) to (Z-3);
			\draw[->,>=latex] (X-2) to (Z-2);
			\draw[->,>=latex] (X-1) to (Z-1);
			\draw[->,>=latex] (X) to (Z);
			\draw[->,>=latex] (Z-3) to (Y-3);
			\draw[->,>=latex] (Z-2) -- (Y-2);
			\draw[->,>=latex] (Z-1) -- (Y-1);
			\draw[->,>=latex, thick] (Z) -- (Y);
			\draw[->,>=latex] (U-3) to (X-3);
			\draw[->,>=latex] (U-2) to (X-2);
			\draw[->,>=latex] (U-1) to (X-1);
			\draw[->,>=latex] (U) to (X);
			\draw[->,>=latex] (W-3) to (U-3);
			\draw[->,>=latex] (W-2) to (U-2);
			\draw[->,>=latex] (W-1) to (U-1);
			\draw[->,>=latex] (W) to (U);
			
			\draw[->,>=latex] (X-3) to (Z-2);
			\draw[->,>=latex] (X-2) to (Z-1);
			\draw[->,>=latex] (X-1) to (Z);
			\draw[->,>=latex] (Z-3) to (Y-2);
			\draw[->,>=latex] (Z-2) -- (Y-1);
			\draw[->,>=latex, thick] (Z-1) -- (Y);
			\draw[->,>=latex] (U-3) to (X-2);
			\draw[->,>=latex] (U-2) to (X-1);
			\draw[->,>=latex] (U-1) to (X);
			\draw[->,>=latex] (W-3) to (U-2);
			\draw[->,>=latex] (W-2) to (U-1);
			\draw[->,>=latex] (W-1) to (U);
			\draw[->,>=latex] (U-3) to (W-2);
			\draw[->,>=latex] (U-2) to (W-1);
			\draw[->,>=latex] (U-1) to (W);

			%			\draw[->,>=latex] (X-2) -- (X-1);
			%			\draw[->,>=latex] (X-1) -- (X);
			\draw[->,>=latex] (U-3) -- (U-2);
			\draw[->,>=latex] (U-2) -- (U-1);
			\draw[->,>=latex] (U-1) -- (U);
			\draw[->,>=latex] (Y-3) -- (Y-2);
			\draw[->,>=latex] (Y-2) -- (Y-1);
			\draw[->,>=latex] (Y-1) -- (Y);
			\draw[->,>=latex] (Z-3) -- (Z-2);
			\draw[->,>=latex] (Z-2) -- (Z-1);
			\draw[->,>=latex] (Z-1) -- (Z);
			\draw[->,>=latex] (W-3) -- (W-2);
			\draw[->,>=latex] (W-2) -- (W-1);
			\draw[->,>=latex] (W-1) -- (W);
			
			\draw[<->,>=latex, dashed] (X-3) to [out=180,in=180, looseness=0.8] (Y-3);
			\draw[<->,>=latex, dashed] (X-2) to [out=30,in=-30, looseness=0.5] (Y-2);
			\draw[<->,>=latex, dashed] (X-1) to [out=30,in=-30, looseness=0.5] (Y-1);
			\draw[<->,>=latex, dashed] (X) to [out=0,in=0, looseness=0.8] (Y);
			\draw[<->,>=latex, dashed] (X-3) to (Y-2);
			\draw[<->,>=latex, dashed] (X-2) to (Y-1);
			\draw[<->,>=latex, dashed] (X-1) to (Y);
			\draw[<->,>=latex, dashed] (Y-3) to (X-2);
			\draw[<->,>=latex, dashed] (Y-2) to (X-1);
			\draw[<->,>=latex, dashed] (Y-1) to (X);
			\draw[<->,>=latex, dashed] (X-3) to [out=80,in=100, looseness=1] (X-2);
			\draw[<->,>=latex, dashed] (X-2) to [out=80,in=100, looseness=1] (X-1);
			\draw[<->,>=latex, dashed] (X-1) to [out=80,in=100, looseness=1] (X);
			\draw[<->,>=latex, dashed] (Y-3) to [out=-80,in=-100, looseness=1] (Y-2);
			\draw[<->,>=latex, dashed] (Y-2) to [out=-80,in=-100, looseness=1] (Y-1);
			\draw[<->,>=latex, dashed] (Y-1) to [out=-80,in=-100, looseness=1] (Y);
			
			%	\draw[dashed,>=latex, gray] (X-3) to [out=90,in=90, looseness=0.9] (X-1);
			
			%\coordinate[left of=V-2] (d1);
			\draw [dotted,>=latex, thick] (X-3) to[left] (-3.2,2.4);
			\draw [dotted,>=latex, thick] (Z-3) to[left] (-3.2,1.2);
			\draw [dotted,>=latex, thick] (Y-3) to[left] (-3.2,0);
			\draw [dotted,>=latex, thick] (X) to[right] (2,2.4);
			\draw [dotted,>=latex, thick] (Z) to[right] (2,1.2);
			\draw [dotted,>=latex, thick] (Y) to[right] (2,0);
			\draw [dotted,>=latex, thick] (U-3) to[right] (-3.2,3.6);
			\draw [dotted,>=latex, thick] (U) to[right] (2,3.6);
			
		\end{tikzpicture}
		\caption{\centering A candidate FT-ADMG of the SCG in Figure~\ref{fig:satisfying:1.1} showing the set  introduced in Lemma~\ref{lemma:x_w_cycle} to block all back-door paths from $X_{t-1}$  to $W_{t-1}$ and $W_t$.}
		\label{fig:FTADMG_lemma2}
	\end{subfigure}
	\hfill	\begin{subfigure}{.45\textwidth}
		\centering
		\begin{tikzpicture}[{black, circle, draw, inner sep=0}]
	\tikzset{nodes={draw,rounded corners},minimum height=0.7cm,minimum width=0.6cm, font=\scriptsize}
	\tikzset{latent/.append style={white, fill=black}}
	\tikzset{cond_brown/.append style={fill=MY_yellow}}
	\tikzset{cond_gray/.append style={fill=gray!20}}
	\tikzset{cond_fushia/.append style={fill=MY_color_pink}}
	\tikzset{intercept/.append style={fill=green!30}}
	
	\node (W) at (1.2,4.8) {$Z_t$};
	\node (W-1) at (0,4.8) {$Z_{t-1}$};
	\node  (W-2) at (-1.2,4.8) {$Z_{t-2}$};
	\node  (W-3) at (-2.4,4.8) {$Z_{t-3}$};
	\node  (U) at (1.2,3.6) {$U_t$};
	\node[cond_brown] (U-1) at (0,3.6) {$U_{t-1}$};
	\node[cond_brown]  (U-2) at (-1.2,3.6) {$U_{t-2}$};
	\node  (U-3) at (-2.4,3.6) {$U_{t-3}$};
	\node[intercept, ultra thick]  (Z) at (1.2,1.2) {$W_t$};
	\node[intercept, ultra thick] (Z-1) at (0,1.2) {$W_{t-1}$};
	\node  (Z-2) at (-1.2,1.2) {$W_{t-2}$};
	\node[fill=MY_blue] (Y) at (1.2,0) {$Y_t$};
	\node (Y-1) at (0,0) {$Y_{t-1}$};
	\node  (Y-2) at (-1.2,0) {$Y_{t-2}$};
	\node (X) at (1.2,2.4) {$X_t$};
	\node[fill=MY_red, ultra thick] (X-1) at (0,2.4) {$X_{t-1}$};
	\node[cond_fushia]  (X-2) at (-1.2,2.4) {$X_{t-2}$};
	\node  (Z-3) at (-2.4,1.2) {$W_{t-3}$};
	\node  (X-3) at (-2.4,2.4) {$X_{t-3}$};
	\node  (Y-3) at (-2.4,0) {$Y_{t-3}$};
	%%% self-loop
	%%% others 
	\draw[->,>=latex] (X-3) to (Z-3);
	\draw[->,>=latex] (X-2) to (Z-2);
	\draw[->,>=latex] (X-1) to (Z-1);
	\draw[->,>=latex] (X) to (Z);
	\draw[->,>=latex] (Z-3) to (Y-3);
	\draw[->,>=latex] (Z-2) -- (Y-2);
	\draw[->,>=latex] (Z-1) -- (Y-1);
	\draw[->,>=latex, thick] (Z) -- (Y);
	\draw[->,>=latex] (U-3) to (X-3);
	\draw[->,>=latex] (U-2) to (X-2);
	\draw[->,>=latex] (U-1) to (X-1);
	\draw[->,>=latex] (U) to (X);
	\draw[->,>=latex] (W-3) to (U-3);
	\draw[->,>=latex] (W-2) to (U-2);
	\draw[->,>=latex] (W-1) to (U-1);
	\draw[->,>=latex] (W) to (U);
	
	\draw[->,>=latex] (X-3) to (Z-2);
	\draw[->,>=latex] (X-2) to (Z-1);
	\draw[->,>=latex] (X-1) to (Z);
	\draw[->,>=latex] (Z-3) to (Y-2);
	\draw[->,>=latex] (Z-2) -- (Y-1);
	\draw[->,>=latex] (Z-1) -- (Y);
	\draw[->,>=latex] (U-2) to (X-1);
	\draw[->,>=latex] (U-1) to (X);
	\draw[->,>=latex] (W-3) to (U-2);
	\draw[->,>=latex] (W-2) to (U-1);
	\draw[->,>=latex] (W-1) to (U);
	\draw[->,>=latex] (U-3) to (W-2);
	\draw[->,>=latex] (U-2) to (W-1);
	\draw[->,>=latex] (U-1) to (W);
	
	\draw[->,>=latex] (U-3) -- (U-2);
	\draw[->,>=latex] (U-2) -- (U-1);
	\draw[->,>=latex] (U-1) -- (U);
	\draw[->,>=latex] (X-3) -- (X-2);
	\draw[->,>=latex] (X-2) -- (X-1);
	\draw[->,>=latex] (X-1) -- (X);
	\draw[->,>=latex] (Y-3) -- (Y-2);
	\draw[->,>=latex] (Y-2) -- (Y-1);
	\draw[->,>=latex] (Y-1) -- (Y);
	\draw[->,>=latex] (Z-3) -- (Z-2);
	\draw[->,>=latex] (Z-2) -- (Z-1);
	\draw[->,>=latex] (Z-1) -- (Z);
	\draw[->,>=latex] (W-3) -- (W-2);
	\draw[->,>=latex] (W-2) -- (W-1);
	\draw[->,>=latex] (W-1) -- (W);
	
	\draw[<->,>=latex, dashed] (X-3) to [out=180,in=180, looseness=0.8] (Y-3);
	\draw[<->,>=latex, dashed] (X-2) to [out=30,in=-30, looseness=0.5] (Y-2);
	\draw[<->,>=latex, dashed] (X-1) to [out=30,in=-30, looseness=0.5] (Y-1);
	\draw[<->,>=latex, dashed] (X) to [out=0,in=0, looseness=0.8] (Y);
	\draw[<->,>=latex, dashed] (X-3) to (Y-2);
	\draw[<->,>=latex, dashed] (X-2) to (Y-1);
	\draw[<->,>=latex, dashed] (X-1) to (Y);
	\draw[<->,>=latex, dashed] (Y-3) to (X-2);
	\draw[<->,>=latex, dashed] (Y-2) to (X-1);
	\draw[<->,>=latex, dashed] (Y-1) to (X);
%	\draw[<->,>=latex, dashed] (X-3) to [out=80,in=100, looseness=1] (X-2);
%	\draw[<->,>=latex, dashed] (X-2) to [out=80,in=100, looseness=1] (X-1);
%	\draw[<->,>=latex, dashed] (X-1) to [out=80,in=100, looseness=1] (X);
	\draw[<->,>=latex, dashed] (Y-3) to [out=-80,in=-100, looseness=1] (Y-2);
	\draw[<->,>=latex, dashed] (Y-2) to [out=-80,in=-100, looseness=1] (Y-1);
	\draw[<->,>=latex, dashed] (Y-1) to [out=-80,in=-100, looseness=1] (Y);
	
	%	\draw[dashed,>=latex, gray] (X-3) to [out=90,in=90, looseness=0.9] (X-1);
	
	%\coordinate[left of=V-2] (d1);
	\draw [dotted,>=latex, thick] (X-3) to[left] (-3.2,2.4);
	\draw [dotted,>=latex, thick] (Z-3) to[left] (-3.2,1.2);
	\draw [dotted,>=latex, thick] (Y-3) to[left] (-3.2,0);
	\draw [dotted,>=latex, thick] (X) to[right] (2,2.4);
	\draw [dotted,>=latex, thick] (Z) to[right] (2,1.2);
	\draw [dotted,>=latex, thick] (Y) to[right] (2,0);
	\draw [dotted,>=latex, thick] (U-3) to[right] (-3.2,3.6);
	\draw [dotted,>=latex, thick] (U) to[right] (2,3.6);
\end{tikzpicture}
		\caption{\centering A candidate FT-ADMG of the SCG in Figure~\ref{fig:satisfying:6} showing the set  introduced in Lemma~\ref{lemma:x_w_self_confounder} to block all back-door paths from $X_{t-1}$  to $W_{t-1}$ and $W_t$.  }
		\label{fig:FTADMG_lemma2prime}
	\end{subfigure}

	\begin{subfigure}{.45\textwidth}
		\centering
		\begin{tikzpicture}[{black, circle, draw, inner sep=0}]
			\tikzset{nodes={draw,rounded corners},minimum height=0.7cm,minimum width=0.6cm, font=\scriptsize}
			\tikzset{latent/.append style={white, fill=black}}
			\tikzset{cond_brown/.append style={fill=MY_yellow}}
			\tikzset{cond_gray/.append style={fill=gray!20}}
			\tikzset{cond_fushia/.append style={fill=MY_color_pink}}
			\tikzset{intercept/.append style={fill=green!30}}
			
			\node[cond_brown]  (W) at (1.2,4.8) {$Z_t$};
			\node[cond_brown] (W-1) at (0,4.8) {$Z_{t-1}$};
			\node  (W-2) at (-1.2,4.8) {$Z_{t-2}$};
			\node  (W-3) at (-2.4,4.8) {$Z_{t-3}$};
			\node  (U) at (1.2,3.6) {$U_t$};
			\node[cond_fushia] (U-1) at (0,3.6) {$U_{t-1}$};
			\node  (U-2) at (-1.2,3.6) {$U_{t-2}$};
			\node  (U-3) at (-2.4,3.6) {$U_{t-3}$};
			\node[intercept, ultra thick]  (Z) at (1.2,1.2) {$W_t$};
			\node[cond_fushia] (Z-1) at (0,1.2) {$W_{t-1}$};
			\node  (Z-2) at (-1.2,1.2) {$W_{t-2}$};
			\node[fill=MY_blue] (Y) at (1.2,0) {$Y_t$};
			\node (Y-1) at (0,0) {$Y_{t-1}$};
			\node  (Y-2) at (-1.2,0) {$Y_{t-2}$};
			\node[fill=MY_red, ultra thick] (X) at (1.2,2.4) {$X_t$};
			\node[cond_fushia] (X-1) at (0,2.4) {$X_{t-1}$};
			\node  (X-2) at (-1.2,2.4) {$X_{t-2}$};
			\node  (Z-3) at (-2.4,1.2) {$W_{t-3}$};
			\node  (X-3) at (-2.4,2.4) {$X_{t-3}$};
			\node  (Y-3) at (-2.4,0) {$Y_{t-3}$};
			%%% self-loop
			%%% others 
			\draw[->,>=latex] (X-3) to (Z-3);
			\draw[->,>=latex] (X-2) to (Z-2);
			\draw[->,>=latex] (X-1) to (Z-1);
			\draw[->,>=latex] (X) to (Z);
			\draw[->,>=latex] (Z-3) to (Y-3);
			\draw[->,>=latex] (Z-2) -- (Y-2);
			\draw[->,>=latex] (Z-1) -- (Y-1);
			\draw[->,>=latex, thick] (Z) -- (Y);
			\draw[->,>=latex] (U-3) to (X-3);
			\draw[->,>=latex] (U-2) to (X-2);
			\draw[->,>=latex] (U-1) to (X-1);
			\draw[->,>=latex] (U) to (X);
			\draw[->,>=latex] (W-3) to (U-3);
			\draw[->,>=latex] (W-2) to (U-2);
			\draw[->,>=latex] (W-1) to (U-1);
			\draw[->,>=latex] (W) to (U);
			
			\draw[->,>=latex] (X-3) to (Z-2);
			\draw[->,>=latex] (X-2) to (Z-1);
			\draw[->,>=latex] (X-1) to (Z);
			\draw[->,>=latex] (Z-3) to (Y-2);
			\draw[->,>=latex] (Z-2) -- (Y-1);
			\draw[->,>=latex] (Z-1) -- (Y);
			\draw[->,>=latex] (U-2) to (X-1);
			\draw[->,>=latex] (U-1) to (X);
			\draw[->,>=latex] (W-3) to (U-2);
			\draw[->,>=latex] (W-2) to (U-1);
			\draw[->,>=latex] (W-1) to (U);
			\draw[->,>=latex] (X-3) to (U-2);
			\draw[->,>=latex] (X-2) to (U-1);
			\draw[->,>=latex] (X-1) to (U);
			
			\draw[->,>=latex] (U-3) -- (U-2);
			\draw[->,>=latex] (U-2) -- (U-1);
			\draw[->,>=latex] (U-1) -- (U);
			\draw[->,>=latex] (X-3) -- (X-2);
			\draw[->,>=latex] (X-2) -- (X-1);
			\draw[->,>=latex] (X-1) -- (X);
			\draw[->,>=latex] (Y-3) -- (Y-2);
			\draw[->,>=latex] (Y-2) -- (Y-1);
			\draw[->,>=latex] (Y-1) -- (Y);
			\draw[->,>=latex] (Z-3) -- (Z-2);
			\draw[->,>=latex] (Z-2) -- (Z-1);
			\draw[->,>=latex] (Z-1) -- (Z);
			\draw[->,>=latex] (W-3) -- (W-2);
			\draw[->,>=latex] (W-2) -- (W-1);
			\draw[->,>=latex] (W-1) -- (W);
			
			\draw[<->,>=latex, dashed] (X-3) to [out=180,in=180, looseness=0.8] (Y-3);
			\draw[<->,>=latex, dashed] (X-2) to [out=30,in=-30, looseness=0.5] (Y-2);
			\draw[<->,>=latex, dashed] (X-1) to [out=30,in=-30, looseness=0.5] (Y-1);
			\draw[<->,>=latex, dashed] (X) to [out=0,in=0, looseness=0.8] (Y);
			\draw[<->,>=latex, dashed] (X-3) to (Y-2);
			\draw[<->,>=latex, dashed] (X-2) to (Y-1);
			\draw[<->,>=latex, dashed] (X-1) to (Y);
			\draw[<->,>=latex, dashed] (Y-3) to (X-2);
			\draw[<->,>=latex, dashed] (Y-2) to (X-1);
			\draw[<->,>=latex, dashed] (Y-1) to (X);
			\draw[<->,>=latex, dashed] (X-3) to [out=80,in=100, looseness=1] (X-2);
			\draw[<->,>=latex, dashed] (X-2) to [out=80,in=100, looseness=1] (X-1);
			\draw[<->,>=latex, dashed] (X-1) to [out=80,in=100, looseness=1] (X);
			\draw[<->,>=latex, dashed] (Y-3) to [out=-80,in=-100, looseness=1] (Y-2);
			\draw[<->,>=latex, dashed] (Y-2) to [out=-80,in=-100, looseness=1] (Y-1);
			\draw[<->,>=latex, dashed] (Y-1) to [out=-80,in=-100, looseness=1] (Y);
			
			%	\draw[dashed,>=latex, gray] (X-3) to [out=90,in=90, looseness=0.9] (X-1);
			
			%\coordinate[left of=V-2] (d1);
			\draw [dotted,>=latex, thick] (X-3) to[left] (-3.2,2.4);
			\draw [dotted,>=latex, thick] (Z-3) to[left] (-3.2,1.2);
			\draw [dotted,>=latex, thick] (Y-3) to[left] (-3.2,0);
			\draw [dotted,>=latex, thick] (X) to[right] (2,2.4);
			\draw [dotted,>=latex, thick] (Z) to[right] (2,1.2);
			\draw [dotted,>=latex, thick] (Y) to[right] (2,0);
			\draw [dotted,>=latex, thick] (U-3) to[right] (-3.2,3.6);
			\draw [dotted,>=latex, thick] (U) to[right] (2,3.6);
		\end{tikzpicture}
		\caption{\centering A candidate FT-ADMG of the SCG in Figure~\ref{fig:satisfying_inst:1.1} showing the set  introduced in Lemma~\ref{lemma:x_w_gamma_zero} to block all back-door paths from $X_t$ to $Y_t$.}
		\label{fig:FTADMG_lemma3}
	\end{subfigure}
	\hfill 
	\begin{subfigure}{.45\textwidth}
		\centering
		\begin{tikzpicture}[{black, circle, draw, inner sep=0}]
			\tikzset{nodes={draw,rounded corners},minimum height=0.7cm,minimum width=0.6cm, font=\scriptsize}
			\tikzset{latent/.append style={white, fill=black}}
			\tikzset{cond_brown/.append style={fill=MY_orange}}
			\tikzset{cond_gray/.append style={fill=gray!20}}
			\tikzset{cond_fushia/.append style={fill=MY_color_fushia}}
			\tikzset{intercept/.append style={fill=green!30}}
			
			\node[cond_brown]  (W) at (1.2,4.8) {$Z_t$};
			\node[cond_brown] (W-1) at (0,4.8) {$Z_{t-1}$};
			\node[cond_brown]  (W-2) at (-1.2,4.8) {$Z_{t-2}$};
			\node  (W-3) at (-2.4,4.8) {$Z_{t-3}$};
			\node  (U) at (1.2,3.6) {$U_t$};
			\node (U-1) at (0,3.6) {$U_{t-1}$};
			\node[cond_fushia]  (U-2) at (-1.2,3.6) {$U_{t-2}$};
			\node  (U-3) at (-2.4,3.6) {$U_{t-3}$};
			\node[intercept, ultra thick]  (Z) at (1.2,1.2) {$W_t$};
			\node[intercept, ultra thick] (Z-1) at (0,1.2) {$W_{t-1}$};
			\node[cond_fushia]  (Z-2) at (-1.2,1.2) {$W_{t-2}$};
			\node[fill=MY_blue, ultra thick] (Y) at (1.2,0) {$Y_t$};
			\node (Y-1) at (0,0) {$Y_{t-1}$};
			\node  (Y-2) at (-1.2,0) {$Y_{t-2}$};
			\node[cond_brown] (X) at (1.2,2.4) {$X_t$};
			\node[
			path picture={
				\fill[MY_orange] (path picture bounding box.south west) 
				-- ++(path picture bounding box.east|-0,0.0) 
				-- ++(1,2.8)
				-- (path picture bounding box.north west) -- cycle;
				\fill[fill=MY_red] (path picture bounding box.south east) 
				-- ++(path picture bounding box.west|-0,-2.3) 
				-- ++(-0.4,3.2)
				-- (path picture bounding box.north east) -- cycle;
			}
			] (X-1) at (0,2.4) {$X_{t-1}$};
			\node[cond_brown]  (X-2) at (-1.2,2.4) {$X_{t-2}$};
			\node  (Z-3) at (-2.4,1.2) {$W_{t-3}$};
			\node  (X-3) at (-2.4,2.4) {$X_{t-3}$};
			\node  (Y-3) at (-2.4,0) {$Y_{t-3}$};
			%%% self-loop
			%%% others 
			\draw[->,>=latex] (X-3) to (Z-3);
			\draw[->,>=latex] (X-2) to (Z-2);
			\draw[->,>=latex] (X-1) to (Z-1);
			\draw[->,>=latex] (X) to (Z);
			\draw[->,>=latex] (Z-3) to (Y-3);
			\draw[->,>=latex] (Z-2) -- (Y-2);
			\draw[->,>=latex] (Z-1) -- (Y-1);
			\draw[->,>=latex, thick] (Z) -- (Y);
			%	\draw[->,>=latex] (U-3) to (Z-3);
			%	\draw[->,>=latex] (U-2) to (Z-2);
			%	\draw[->,>=latex] (U-1) to (Z-1);
			%	\draw[->,>=latex] (U) to (Z);
			\draw[->,>=latex] (W-3) to (U-3);
			\draw[->,>=latex] (W-2) to (U-2);
			\draw[->,>=latex] (W-1) to (U-1);
			\draw[->,>=latex] (W) to (U);
			
			\draw[->,>=latex] (X-3) to (Z-2);
			\draw[->,>=latex] (X-2) to (Z-1);
			\draw[->,>=latex] (X-1) to (Z);
			\draw[->,>=latex] (Z-3) to (Y-2);
			\draw[->,>=latex] (Z-2) -- (Y-1);
			\draw[->,>=latex, thick] (Z-1) -- (Y);
			\draw[->,>=latex] (U-3) to (Z-2);
			\draw[->,>=latex] (U-2) to (Z-1);
			\draw[->,>=latex] (U-1) to (Z);
			\draw[->,>=latex] (Z-3) to (U-2);
			\draw[->,>=latex] (Z-2) to (U-1);
			\draw[->,>=latex] (Z-1) to (U);
			\draw[->,>=latex] (W-3) to (U-2);
			\draw[->,>=latex] (W-2) to (U-1);
			\draw[->,>=latex] (W-1) to (U);
			\draw[->,>=latex] (U-3) to (W-2);
			\draw[->,>=latex] (U-2) to (W-1);
			\draw[->,>=latex] (U-1) to (W);
			
			%			\draw[->,>=latex] (X-2) -- (X-1);
			%			\draw[->,>=latex] (X-1) -- (X);
			\draw[->,>=latex] (U-3) -- (U-2);
			\draw[->,>=latex] (U-2) -- (U-1);
			\draw[->,>=latex] (U-1) -- (U);
			\draw[->,>=latex] (Y-3) -- (Y-2);
			\draw[->,>=latex] (Y-2) -- (Y-1);
			\draw[->,>=latex] (Y-1) -- (Y);
			\draw[->,>=latex] (Z-3) -- (Z-2);
			\draw[->,>=latex] (Z-2) -- (Z-1);
			\draw[->,>=latex] (Z-1) -- (Z);
			\draw[->,>=latex] (W-3) -- (W-2);
			\draw[->,>=latex] (W-2) -- (W-1);
			\draw[->,>=latex] (W-1) -- (W);
			
			\draw[<->,>=latex, dashed] (X-3) to [out=180,in=180, looseness=0.8] (Y-3);
			\draw[<->,>=latex, dashed] (X-2) to [out=30,in=-30, looseness=0.5] (Y-2);
			\draw[<->,>=latex, dashed] (X-1) to [out=30,in=-30, looseness=0.5] (Y-1);
			\draw[<->,>=latex, dashed] (X) to [out=0,in=0, looseness=0.8] (Y);
			\draw[<->,>=latex, dashed] (X-3) to (Y-2);
			\draw[<->,>=latex, dashed] (X-2) to (Y-1);
			\draw[<->,>=latex, dashed] (X-1) to (Y);
			\draw[<->,>=latex, dashed] (Y-3) to (X-2);
			\draw[<->,>=latex, dashed] (Y-2) to (X-1);
			\draw[<->,>=latex, dashed] (Y-1) to (X);
			\draw[<->,>=latex, dashed] (X-3) to [out=80,in=100, looseness=1] (X-2);
			\draw[<->,>=latex, dashed] (X-2) to [out=80,in=100, looseness=1] (X-1);
			\draw[<->,>=latex, dashed] (X-1) to [out=80,in=100, looseness=1] (X);
			\draw[<->,>=latex, dashed] (Y-3) to [out=-80,in=-100, looseness=1] (Y-2);
			\draw[<->,>=latex, dashed] (Y-2) to [out=-80,in=-100, looseness=1] (Y-1);
			\draw[<->,>=latex, dashed] (Y-1) to [out=-80,in=-100, looseness=1] (Y);
			
			%	\draw[dashed,>=latex, gray] (X-3) to [out=90,in=90, looseness=0.9] (X-1);
			
			%\coordinate[left of=V-2] (d1);
			\draw [dotted,>=latex, thick] (X-3) to[left] (-3.2,2.4);
			\draw [dotted,>=latex, thick] (Z-3) to[left] (-3.2,1.2);
			\draw [dotted,>=latex, thick] (Y-3) to[left] (-3.2,0);
			\draw [dotted,>=latex, thick] (X) to[right] (2,2.4);
			\draw [dotted,>=latex, thick] (Z) to[right] (2,1.2);
			\draw [dotted,>=latex, thick] (Y) to[right] (2,0);
			\draw [dotted,>=latex, thick] (U-3) to[right] (-3.2,3.6);
			\draw [dotted,>=latex, thick] (U) to[right] (2,3.6);
			
		\end{tikzpicture}
		\caption{\centering A candidate FT-ADMG of the SCG in Figure~\ref{fig:satisfying:2} showing the set  introduced in Lemma~\ref{lemma:w_y} to block all back-door paths from $W_{t-1}$ and $W_t$ to $Y_t$.}
		\label{fig:FTADMG_lemma4}
	\end{subfigure}
	\caption{Four FT-ADMGs that correspond respectively to the SCGs in Figures~\ref{fig:satisfying:1.1}, \ref{fig:satisfying:6} \ref{fig:satisfying_inst:1.1}, and \ref{fig:satisfying:2}. In each graph, the red and blue vertices represent the total effect of interest, while the green vertices are those that intercept all directed paths from the cause to the effect of interest. In FT-ADMGs (a), (b), and (c), all back-door paths from the red vertex (in bold) to the green vertices (in bold) are blocked by the brown, pink, and gray vertices. In FT-ADMG (d), all back-door paths from each green vertex (in bold)to the blue vertex (in bold) are blocked by the orange, purple, and other green vertices that are temporally prior to the selected green vertex.}
	\label{fig:FTADMGs}
\end{figure}

%\ifthenelse {\boolean{showToPublic}} {
%	\textcolor{olive}{All the above results are founded on a set that sub-intercepts all directed paths from $X_{t-\gamma}$ to $Y_t$. Unfortunatly, such a set is not sufficient to identify the total effect by deviding it into a multi-stage adjustment (as in the front-door theorem in \cite{Pearl_1995}) as it was shown in Section~\ref{sec:standard_front_door}. It is important to carefully select the subset that truely intercept all directed paths from $X_{t-\gamma}$ to $Y_t$ in the true \emph{unkown} FT-ADMG. This can be made possible using independence tests, especially that we have already showed that all back-door paths from $X_{t-\gamma}$ to any $W_{t-\lambda}$ in the set  can be blocked by a specific set of micro vertices and that  all back-door paths from  any $W_{t-\lambda}$ to $Y_t$ can be blocked by another specific set of micro vertices. This means that whenever $W_{t-\lambda}$ is respectively statistically dependent of  $X_{t-\gamma}$ and $Y_{t}$ conditional on the corresponding sets introduced in the lemmas~\ref{lemma:x_w_cycle}-\ref{lemma:w_y}, we can conclude that $W_{t-\lambda}$ intercepts at least one directed path from $X_{t-\gamma}$ to $Y_t$.}
%}{}

In the following, a sound theorem is introduced for identifying total effects using SCG, even in the presence of cycles and latent confounders.

\begin{theorem}
	\label{theorem:front_door}
	If a set of vertices $\mathbb{W}$ satisfies the SCG-front-door criterion (i.e., Definition~\ref{def:front_door_SCG})   relative to $(X_{t-\gamma},Y_t)$ and if $\Pr(\mathbb{s})>0$ then $\Pr(y_{t}\mid do(x_{t-\gamma}))$ is identifiable and is given by the do-free formula:
%	\ifthenelse {\boolean{showToPublic}}{
%		\begin{align}
%		\label{eq:do_free_formula}
%		\Pr(y_t\mid do(x_{t-\gamma})) = 
%		%		\sum_{\mathbb{f}}  
%		& \sum_{\mathbb{f},\mathbb{b}^{xf}} &&
%		\Pr(\mathbb{f}\mid x_{t-\gamma}, \mathbb{b}^{xf}) \Pr(\mathbb{b}^{xf}) &&& \nonumber \\
%		&\times \sum_{\substack{\mathbb{b}^f, x_{t-\gamma}', \\ \mathbb{b}^{{\lambda_{1}}}, \cdots,  \mathbb{b}^{{\lambda_{n}}}}}
%		&&\Pr(y_t\mid \mathbb{f}, \mathbb{b}^{xf}, \mathbb{b}^f, x_{t-\gamma}',  \mathbb{b}^{{\lambda_{1}}}, \cdots,  \mathbb{b}^{{\lambda_{n}}})\Pr(\mathbb{b}^{xf}, \mathbb{b}^f,x_{t-\gamma}',  \mathbb{b}^{{\lambda_{1}}}, \cdots,  \mathbb{b}^{{\lambda_{n}}})&&& \nonumber \\
%		& &&\times \prod_{k=1}^{n}\Pr(\mathbb{b}^{{\lambda_k}}\mid \mathbb{b}^{{\lambda_{1}}}, \cdots,  \mathbb{b}^{{\lambda_{k-1}}}, \mathbb{f}_{{\lambda_{1}}}, \cdots, \mathbb{f}_{{\lambda_{k-1}}}, \mathbb{b}^{xf}, \mathbb{b}^f, x_{t-\gamma}')&&&   
%	\end{align}
%	}
	\begin{align}
			\label{eq:do_free_formula}
			\Pr(y_t\mid do(x_{t-\gamma})) = 
			%		\sum_{\mathbb{f}}  
			&  \sum_{\mathbb{f}}  \sum_{\mathbb{b}^{x}} &&
			\Pr(\mathbb{f}\mid x_{t-\gamma}, \mathbb{b}^{x}) \Pr(\mathbb{b}^{x}) &&&  \\
			&\times \sum_{\substack{\mathbb{b}^f, x_{t-\gamma}'}}
			&&\Pr(y_t\mid \mathbb{f}, \mathbb{b}^f, x_{t-\gamma}')\Pr(\mathbb{b}^f,x_{t-\gamma}')&&& \nonumber 
		\end{align}
	where $\mathbb{F}=\ifthenelse {\boolean{showToPublic}}{(\mathbb{F}_{\lambda_1} ,  \cdots , \mathbb{F}_{\lambda_n})  \subseteq}{} \{ (\mathbb{W}_{t-\gamma+\ell})_{0\leq \ell \le \gamma} \}$;   
%	\ifthenelse {\boolean{showToPublic}}{ 
%	and  $\forall W_{t-\lambda} \in \mathbb{F}:$
%	%
%		\begin{align*}
%			%\mathbb{F} &
%			%\subseteq\{ (\mathbb{W}_{t-\gamma+\ell})_{0\leq \ell \le \gamma} \} && \text{ such that}, \forall W_{t-\lambda} \in \mathbb{F}:&&&\\
%			&   W_{t-\lambda}\notindep X_{t-\gamma} &&\mid 
%			\textcolor{MY_yellow}{\{(B_{t-\gamma-\ell})_{-\gamma\leq \ell \leq \gamma_{\max}} | B \in Anc(X, \mathcal{G}^s)\backslash Desc(X, \mathcal{G}^s)\}}&&&
%			\\
%			& &&\cup 
%			\textcolor{MY_color_pink}{\{(B_{t-\gamma-\ell})_{1\leq \ell \leq \gamma_{\max}} | B \in Anc(X, \mathcal{G}^s)\cap Desc(X, \mathcal{G}^s)\}}
%			\cup
%			\textcolor{gray}{\{(X_{t-\gamma+\ell})_{0\leq \ell \le \gamma}\}}, \text{ and}&&&
%			\\
%			&   W_{t-\lambda}\notindep Y_{t} &&\mid  	
%			\textcolor{MY_yellow}{\{(B_{t-\gamma-\ell})_{-\gamma\leq \ell \leq \gamma_{\max}} | B \in Anc(W, \mathcal{G}^s)\backslash Desc(W, \mathcal{G}^s)\}}&&&
%			\\
%			& &&\cup 
%			\textcolor{MY_color_pink}{\{(B_{t-\gamma-\ell})_{1\leq \ell \leq \gamma_{\max}} | B \in Anc(W, \mathcal{G}^s)\cap Desc(W, \mathcal{G}^s)\}}
%			\cup
%			\textcolor{MY_green}{\{(\mathbb{W}_{t-\lambda-\ell})_{1\leq \ell \le \gamma-\lambda } \}};&&&			
%	\end{align*}
%}
%{;}
%
%
	\begin{align*}
		\mathbb{B}^{x} = &\textcolor{MY_yellow}{\{(B_{t-\gamma-\ell})_{0\leq \ell \leq \gamma_{\max}} | B \in Anc(X, \mathcal{G}^s)\backslash Desc(X, \mathcal{G})\}}\\ 
		&\cup \textcolor{MY_color_pink}{\{(B_{t-\gamma-\ell})_{1\leq \ell \leq \gamma_{\max}} | B \in (Anc(X, \mathcal{G}^s)\cup Anc(\mathbb{W}, \mathcal{G}^s))\cap Desc(X, \mathcal{G})\}}
		\cup  \mathbb{B}^c\\
		&\text{such that } \mathbb{B}^c=\textcolor{gray}{\{ (X_{t-\gamma+\ell})_{1\leq \ell \le \gamma} \}} \text{if Condition~\ref{item:front_door_SCG:a} is satsified, and } \mathbb{B}^c=\emptyset \text{ otherwise};
	\end{align*}
	%\begin{align*}
	%	\mathbb{B}^f = \{ (\mathbb{W}_{t-\gamma-\ell})_{1\leq \ell \leq \gamma_{\max}} \}
	%	\cup \{(B_{t-\gamma-\ell})_{\gamma\leq \ell \leq \gamma_{\max}} | B \in Anc(W, \mathcal{G}^s)\backslash \{Anc(X,\mathcal{G}^s)\cup \{W\}\}\}.
	%\end{align*}
	and
%	\ifthenelse {\boolean{showToPublic}}{} {and}
	\begin{align*}
		\mathbb{B}^f = & \textcolor{MY_orange}{\{(B_{t-\gamma-\ell})_{-\gamma\leq \ell \leq \gamma_{\max}} | B \in Anc(\mathbb{W}, \mathcal{G}^s)\backslash  Desc(\mathbb{W}, \mathcal{G}^s))\}}\\ 
		&\cup  \textcolor{MY_color_fushia}{\{(B_{t-\gamma-\ell})_{1\leq \ell \leq \gamma_{\max}} | B \in (Anc(\mathbb{W}, \mathcal{G}^s)\cap Desc(\mathbb{W}, \mathcal{G}^s))\}} \backslash\{X_{t-\gamma}\}.
	\end{align*}
	%
%	\ifthenelse {\boolean{showToPublic}}{ and $\mathbb{B}^{\lambda_{k}} =   \{ (\mathbb{W}_{t-\lambda_{k}-\ell})_{1\leq \ell < \lambda_{k-1}-\lambda_{k}} \}$.}{}
%	 %
%	 \begin{align*}
%	 	\textcolor{red}{
%	 		\mathbb{B}^{f^{k-1}} =  \{ (\mathbb{W}_{t-\lambda_{k-1}-\ell})_{1\leq \ell \le \gamma-\lambda} \}\backslash \mathbb{F}; \qquad
%	 		\mathbb{A}^{f^k} = \{ (\mathbb{W}_{t-\lambda_{k}-\ell})_{1\leq \ell < \lambda_{k-1}-\lambda_{k}} \};
%	 	}
%	 \end{align*}
	%
	%$\mathbb{b}=\{x_{t-\gamma-\gamma_{max}}, \cdots, x_{t-\gamma-1},  x_{t-\gamma+1} \cdots, x_{t}, w_{t-\gamma-\gamma_{max}}, \cdots, w_{t-\gamma-1}\}$.
\end{theorem}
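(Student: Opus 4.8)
I would prove the theorem by reducing SCG-identifiability to a statement that holds, with the \emph{same} formula, in every candidate FT-ADMG. Fix an arbitrary $\mathcal{G}\in\mathcal{C}(\mathcal{G}^s)$; this is an ordinary acyclic directed mixed graph, so the classical machinery of covariate adjustment and do-calculus applies inside $\mathcal{G}$. It suffices to show that, for this $\mathcal{G}$ and any positive distribution consistent with it, the right-hand side of \eqref{eq:do_free_formula} equals $\Pr(y_t\mid do(x_{t-\gamma}))$ computed in $\mathcal{G}$: the sets $\mathbb{F}:=\{(\mathbb{W}_{t-\gamma+\ell})_{0\le\ell\le\gamma}\}$, $\mathbb{B}^x$ and $\mathbb{B}^f$ are defined purely from $\mathcal{G}^s$ and the formula refers only to the observed distribution, so once this holds for every $\mathcal{G}$, identifiability in $\mathcal{G}^s$ follows from its definition. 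Positivity $\Pr(\mathbb{s})>0$ makes every conditional in \eqref{eq:do_free_formula} well defined and licenses the adjustment steps below.

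\textbf{Verifying the micro-level graphical conditions.} First, Lemma~\ref{lemma:intercept} gives that $\mathbb{F}$ intercepts all directed paths from $X_{t-\gamma}$ to $Y_t$ in $\mathcal{G}$. Condition~4 of Definition~\ref{def:front_door_SCG} is an exhaustive trichotomy, so exactly one of Lemma~\ref{lemma:x_w_cycle} ($Cycles(X,\mathcal{G}^s)=\emptyset$), Lemma~\ref{lemma:x_w_self_confounder} ($Cycles(X,\mathcal{G}^s)=\{X\rightarrow X\}$ and no $X\longdashleftrightarrow Z$ with $Z\in Anc(X,\mathcal{G}^s)$), or Lemma~\ref{lemma:x_w_gamma_zero} ($\gamma=0$) applies, and it produces a set that blocks every back-door path from $X_{t-\gamma}$ to each $W_{t-\lambda}\in\mathbb{F}$ in $\mathcal{G}$ and contains no descendant of $X_{t-\gamma}$. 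Here a bookkeeping step is needed: the $\mathbb{B}^x$ named in the theorem is exactly the set of Lemma~\ref{lemma:x_w_gamma_zero} when $\gamma=0$, and in the other two cases it is a valid enlargement of the set of Lemma~\ref{lemma:x_w_cycle} (resp.\ Lemma~\ref{lemma:x_w_self_confounder}) — for instance, when $Cycles(X,\mathcal{G}^s)=\emptyset$ one has $Par(X,\mathcal{G}^s)\subseteq Anc(X,\mathcal{G}^s)\setminus Desc(X,\mathcal{G}^s)$, the newly added micro vertices are still non-descendants of $X_{t-\gamma}$ by temporal priority (Assumption~\ref{ass:temporal_priority}), and since all timepoints $t-\gamma-\ell$ for $0\le\ell\le\gamma_{\max}$ are included one checks that no new colliding back-door path is opened. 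Finally, Lemma~\ref{lemma:w_y} gives that $\mathbb{B}^f$ blocks all back-door paths from each $W_{t-\lambda}\in\mathbb{F}$ to $Y_t$ that avoid $\mathbb{F}\setminus\{W_{t-\lambda}\}$ and contains no descendant of any $W_{t-\lambda}\in\mathbb{F}$; and one shows, using Condition~\ref{item:front_door_SCG:2} (which rules out a back-door path $W\rightarrow\cdots\rightarrow X$) together with temporal priority, that $X_{t-\gamma}$ is never a descendant of any $W_{t-\lambda}\in\mathbb{F}$, so $\mathbb{B}^f\cup\{X_{t-\gamma}\}$ is descendant-free with respect to $\mathbb{F}$.

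\textbf{Assembling the formula and the main obstacle.} With these facts in hand, \eqref{eq:do_free_formula} follows inside $\mathcal{G}$ by composing three steps that mirror Pearl's derivation of the classical front-door expression and its covariate-augmented refinements: (i) because $\mathbb{F}$ intercepts all directed paths from $X_{t-\gamma}$ to $Y_t$, the mediation identity $\Pr(y_t\mid do(x_{t-\gamma}))=\sum_{\mathbb{f}}\Pr(\mathbb{f}\mid do(x_{t-\gamma}))\,\Pr(y_t\mid do(\mathbb{f}))$ holds (all paths from $X_{t-\gamma}$ to $Y_t$ are blocked once both are intervened on); (ii) $\Pr(\mathbb{f}\mid do(x_{t-\gamma}))$ is identified by back-door adjustment for $\mathbb{B}^x$, giving the first line of \eqref{eq:do_free_formula}; (iii) $\Pr(y_t\mid do(\mathbb{f}))$ is identified by back-door adjustment for $\mathbb{B}^f\cup\{X_{t-\gamma}\}$, giving the second line ($x_{t-\gamma}'$ being the observational copy of $X_{t-\gamma}$ used as a covariate). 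I expect the crux to be step (iii), or more precisely the joint validity of these nested adjustments in the presence of cycles: one must verify the adjustment criterion for $\Pr(y_t\mid do(\mathbb{f}))$ with the set $\mathbb{B}^f\cup\{X_{t-\gamma}\}$, noting that in the mutilated graph $\mathcal{G}_{\underline{\mathbb{F}}}$ any back-door path from $W_{t-\lambda}$ to $Y_t$ that touches another member of $\mathbb{F}$ meets it as a collider and is therefore blocked precisely because $\mathbb{B}^f\cup\{X_{t-\gamma}\}$ contains no descendant of that member, while the remaining back-door paths are exactly the ones covered by Lemma~\ref{lemma:w_y}. This do-calculus verification, together with the set-reconciliation bookkeeping above, is carried out for an arbitrary $\mathcal{G}\in\mathcal{C}(\mathcal{G}^s)$; since \eqref{eq:do_free_formula} does not depend on $\mathcal{G}$, the theorem follows.
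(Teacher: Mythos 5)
Your proposal is correct and follows essentially the same route as the paper's proof: Lemma~\ref{lemma:intercept} for interception, the mediation decomposition $\Pr(y_t\mid do(x_{t-\gamma}))=\sum_{\mathbb{f}}\Pr(\mathbb{f}\mid do(x_{t-\gamma}))\Pr(y_t\mid do(\mathbb{f}))$, and two back-door adjustments justified by Lemmas~\ref{lemma:x_w_cycle}/\ref{lemma:x_w_self_confounder}/\ref{lemma:x_w_gamma_zero} and Lemma~\ref{lemma:w_y}. You are in fact somewhat more explicit than the paper about the bookkeeping (checking that enlarging the lemmas' sets to $\mathbb{B}^x$ opens no new collider paths, and that $X_{t-\gamma}$ is not a descendant of any $W_{t-\lambda}$), which the paper asserts more briefly.
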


\begin{proof}
	By Lemma~\ref{lemma:intercept}, since $\mathbb{W}$ intercepts all directed paths from $X$ to $Y$ in the SCG, then $\{ (\mathbb{W}_{t-\gamma+\ell})_{0\leq \ell \le \gamma} \}$, i.e., $\mathbb{F}$, 
	%\ifthenelse {\boolean{showToPublic}}{ \textcolor{olive}{sub-}}{, i.e., $\mathbb{F}$, }
	intercepts all directed paths from $X_{t-\gamma}$ to $Y_t$ in any candidate FT-ADMG in $\mathcal{C}(\mathcal{G}^s)$. 
%	\ifthenelse {\boolean{showToPublic}} {
%		\textcolor{olive}{Thus given the dependence constraints given in the theorem, $\mathbb{F}$ intercepts all directed paths except those that are canceled out by each other.}
%	}{} 
	Given this result, by the law of total probability, the total effect $\Pr(y_t \mid do(x_{t-\gamma}))$ can be computed in two steps: first, by computing $\Pr(\mathbb{f} \mid do(x_{t-\gamma}))$ and  $\Pr(y_t \mid do(\mathbb{f}))$, and subsequently multiplying the two quantities together while summing over $\mathbb{f}$.
	
	Notice that the set  used in Lemma~\ref{lemma:x_w_cycle} and the  set  used in Lemma~\ref{lemma:x_w_self_confounder} is a subset of $\mathbb{B}^{x}$ and at the same time $\mathbb{B}^{x}$ cannot contain any descendant of $X_{t-\gamma}$ in the context of Lemma~\ref{lemma:x_w_cycle} and in the context of Lemma~\ref{lemma:x_w_self_confounder}.
	Therefore, by Lemma~\ref{lemma:x_w_cycle}, \ref{lemma:x_w_self_confounder}, and \ref{lemma:x_w_gamma_zero}, $\mathbb{B}^{x}$ statisfies the standard back-door criterion~\citep{Pearl_1995} relative to $(X_{t-\gamma},\mathbb{F})$ which means  that by \citet[Theorem 1]{Pearl_1995},
	$\Pr(\mathbb{f}\mid do(x_{t-\gamma}))=\sum_{\mathbb{b}^{x}}
	\Pr(\mathbb{f}\mid x_{t-\gamma}, \mathbb{b}^{x})\Pr(\mathbb{b}^{x}).$

	 By Lemma~\ref{lemma:w_y}, 
	$\mathbb{B}^f \cup \{X_{t-\gamma}\}$ blocks all back-door paths from  $W_{t-\lambda}$ to $Y_t$ for all $0\le \lambda \le \gamma$,
%	\ifthenelse {\boolean{showToPublic}}{. Thus conditonal on  $\mathbb{B}^{xf}\cup \mathbb{B}^f$, since by Assumption~\ref{ass:temporal_priority}, $\mathbb{B}^{f_{\lambda_k}}$ cannot contain any descendent of $\mathbb{F}_{\lambda_k}$ and $\mathbb{B}^{f_{\lambda_k}}$ block all back-door paths between $\mathbb{F}_{\lambda_k}$ and $Y_t$ that cannot be blocked neither by $\mathbb{B}^{xf}\cup \mathbb{B}^f$ nor by the intervention on $\mathbb{F}_{\lambda_{i}\ne \lambda_{k}}$ then $\mathbb{B}^{\lambda_1}\cup \cdots \cup \mathbb{B}^{\lambda_n}$
%	statisfies the sequential back-door criterion
%	relative to $(\mathbb{F}, Y_t)$ which means that by 
%	\citet[Theorem 1]{Pearl_1995_with_Robins},
%	$\Pr(y_t\mid do(\mathbb{f}))= \sum_{\mathbb{b}^{xf}, \mathbb{b}^f, x_{t-\gamma}', \mathbb{b}^{{\lambda_{1}}}, \cdots,  \mathbb{b}^{{\lambda_{n}}}} \Pr(y_t\mid \mathbb{f}, \mathbb{b}^{xf},  \mathbb{b}^f, x_{t-\gamma}', \mathbb{b}^{{\lambda_{1}}}, \cdots,  \mathbb{b}^{{\lambda_{n}}})\Pr(\mathbb{b}^{xf},  \mathbb{b}^f,x_{t-\gamma}', \mathbb{b}^{{\lambda_{1}}}, \cdots,  \mathbb{b}^{{\lambda_{n}}})\prod_{k=1}^{n}\Pr(\mathbb{b}^{{\lambda_k}}\mid \mathbb{b}^{{\lambda_{1}}}, \cdots,  \mathbb{b}^{{\lambda_{k-1}}}, \mathbb{f}_{{\lambda_{1}}}, \cdots, \mathbb{f}_{{\lambda_{k-1}}}, \mathbb{b}^{xf}, \mathbb{b}^f, x_{t-\gamma}').$ 
%	}
	except those passing by  $\textcolor{MY_green}{ \{ (\mathbb{W}_{t-\gamma+\ell})_{0\leq \ell \le \gamma} \}}\backslash\{W_{t-\lambda}\}$ and it does not contain any descendant of $W_{t-\lambda}$. Therefore, since it is intervened on $\mathbb{F}= \textcolor{MY_green}{ \{ (\mathbb{W}_{t-\gamma+\ell})_{0\leq \ell \le \gamma} \}}$, $ \mathbb{B}^f \cup \{X_{t-\gamma}\}$ satisfies the back-door criterion relative to $(\mathbb{F}, Y_t)$,   
	which means that by \citet[Theorem 1]{Pearl_1995}, $\Pr(y_t\mid do(\mathbb{f}))= 	\sum_{ \mathbb{b}^{f}, x_{t-\gamma}'}
		\Pr(y_t\mid \mathbb{f},\mathbb{b}^{f}, x_{t-\gamma}', )\Pr(\mathbb{b}^{f}).$
\end{proof}

For simplicity, Theorem~\ref{theorem:front_door} is stated under a strict positivity condition on the distribution, which should ideally be relaxed, as discussed in \cite{Hwang_2024}.
Furthermore, the sets used in Equation~\ref{eq:do_free_formula} of Theorem~\ref{theorem:front_door} are applicable whether Condition~\ref{item:front_door_SCG:a} or Condition~\ref{item:front_door_SCG:c} or  Condition~\ref{item:front_door_SCG:b} of Definition~\ref{def:front_door_SCG} is statisfied. Nevertheless, by Lemma~\ref{lemma:x_w_cycle} the set $\mathbb{B}^x$  in Theorem~\ref{theorem:front_door} can be reduced to 
$  
\textcolor{MY_yellow}{\{(B_{t-\gamma - \ell})_{0\leq \ell \leq \gamma_{\max}} | B \in Par(X, \mathcal{G}^s)\}}
\cup 
\textcolor{MY_color_pink}{\{(B_{t-\gamma-\ell})_{1\leq \ell \leq \gamma_{\max}} | B \in Anc(\mathbb{W}, \mathcal{G}^s)\cap Desc(X, \mathcal{G}^s)\}}
\cup 
\textcolor{gray}{\{ (X_{t-\gamma+\ell})_{1\leq \ell \le \gamma} \}}$ 
when Condition~\ref{item:front_door_SCG:a} is satisfied 
and it can be reduced to 
$  
\textcolor{MY_yellow}{\{(B_{t-\gamma - \ell})_{0\leq \ell \leq \gamma_{\max}} | B \in Par(X, \mathcal{G}^s)\}}
\cup 
\textcolor{MY_color_pink}{\{ (X_{t-\gamma-\ell})_{1\le \ell \le \gamma_{\max}} \}}$ 
when Condition~\ref{item:front_door_SCG:c} is satisfied.

Definition~\ref{def:front_door_SCG} reveals a broad class of SCGs that enable the identification of the total effect using the do-free formula provided in Theorem~\ref{theorem:front_door}. This significantly enhances the applicability of causal inference in real-world scenarios where the fully specified causal graph is unknown, but its SCG is available. For example, consider a conceptual study aimed at evaluating the effect of sedation levels administered today on blood pressure regulation tomorrow. Mathematically, this corresponds to estimating $\Pr(BloodPressure_{tomorrow} = b \mid Sedative_{today} = s)$ for a given patient in critical care. Suppose daily patient monitoring records sedation levels, arterial pressure, and heart rate variability over time.
Let us assume that the SCG corresponding to these time series is the one depicted in Figure~\ref{fig:SCG1}, where $X$ represents sedation levels, $Y$ corresponds to arterial pressure, and $W$ denotes heart rate variability. In this setup, the effect of sedation on blood pressure is fully mediated by heart rate variability. Note that the SCG encodes the assumption of  the presence of a latent confounder (physiological factor) between sedation needs and blood pressure regulation, and latent confounding between different time points of $X$ and latent confounding between different timepoints of $Y$. In this scenario, the total effect of interest is not identifiable by adjustment alone, as the back-door criterion is not applicable. The standard front-door criterion and the Granger front-door criterion are  not applicable for the reasons given in Section~\ref{sec:standard_front_door}.
%\begin{equation*}
%	\Pr(y_t\mid x_{t-1})=.
%\end{equation*} 
%\textcolor{purple}{This is would have been correct if there are no instantaneous relations...} 
However, the SCG-front-door criterion and the corresponding theorem provide the appropriate tools to correctly quantify the total effect of sedation levels administered today on blood pressure regulation tomorrow from non-experimental data, assuming all underlying assumptions hold. In this case, the appropriate do-free formula is given by: 
\begin{align*} 
	\Pr(y_t\mid x_{t-1})=\sum_{w_{t},w_{t-1}}  &\sum_{x_{t-2},x_{t}} \Pr(w_{t},w_{t-1}\mid x_{t-1}, x_{t-2}, x_{t})\\
	&\sum_{w_{t-2},x_{t-2}, x_{t}, x_{t-1}'}\Pr(y_t\mid w_{t},w_{t-1}, w_{t-2}, x_{t-2}, x_{t}, x_{t-1}')\Pr(w_{t-2},x_{t-2}, x_{t}, x_{t-1}').
\end{align*}

\section{Towards non identifiability results}
\label{sec:non_identifiability}

%\begin{proposition}
%	Consider all SCGs in figure~\ref{fig:not_satisfying}, the total effect $\Pr(y_t\mid do(x_{t-\gamma}))$ is not identifiable. 
%\end{proposition}

This section explains why the total effect $\Pr(y_t\mid do(x_{t-\gamma}))$ is not identifiable in any of the SCGs shown in Figure~\ref{fig:not_satisfying}.

%\begin{proof}
%\textbf{Other non identifiability results.}
Consider any SCG in Figure~\ref{fig:not_satisfying}. 
$\Pr(y_t\mid do(x_{t-\gamma}))$ is not identifiable by the back-door criterion~\citep{Pearl_1995} because the back-door path $X_{t-\gamma} \longdashleftrightarrow Y_t$ cannot be blocked by any observed micro vertex. In the following, it is demonstrated that for each SCG, the total effect cannot be identified by decomposing it into other total effects and then multiplying them together.

Consider the SCG $\mathcal{G}^s$ in Figure~\ref{fig:not_satisfying:1}. Let us show that if $X \leftrightarrows W$ in $\mathcal{G}^s$, then there exists at least one $\lambda$ where $0 \le \lambda < \gamma_{\max}$ such that $\Pr(w_{t-\lambda}\mid do(x_{t-\gamma}))$ is not identifiable. For $\gamma = \lambda$, there exists an FT-ADMG $\mathcal{G}_1 \in \mathcal{C}(\mathcal{G}^s)$ such that $X_t \in Par(W_t, \mathcal{G}_1)$ and another FT-ADMG $\mathcal{G}_2 \in \mathcal{C}(\mathcal{G}^s)$ where $W_t \in Par(X_t, \mathcal{G}_2)$. Thus, the total effect is not identifiable. The same logic can be applied to the SCGs in Figure~\ref{fig:not_satisfying:3} and a similar logic for Figure~\ref{fig:not_satisfying:2} to show that  $\Pr(y_t\mid do(w_t))$ is not identifiable.

Consider the SCG $\mathcal{G}^s$ in Figure~\ref{fig:not_satisfying:4}. Let us demonstrate that if $X \rightarrow W \rightarrow U \rightarrow X$ exists in $\mathcal{G}^s$, then there exists at least one $\lambda$ where $0 \le \lambda < \gamma_{\max}$ such that $\Pr(w_{t-\lambda}\mid do(x_{t-\gamma}))$ is not identifiable. Let’s take $\gamma = \lambda$. Since there is a back-door path $\pi^s = \langle X, U, W \rangle$, there exists an FT-ADMG $\mathcal{G}_1$ with the back-door path $\pi_1 = X_{t} \leftarrow U_{t} \leftarrow W_{t}$, which would need to be blocked by conditioning on $U_t$. However, there is also a directed path $\pi^s = \langle X, W, Z \rangle$, which corresponds to an FT-ADMG $\mathcal{G}_2$ with the directed path $\pi_2 = X_{t} \rightarrow W_{t} \rightarrow U_{t}$, where conditioning on $U_t$ would bias the estimation of the total effect. This ambiguity around $U_{t}$ implies that the total effect is not identifiable. A similar argument can be made for the SCG in Figure~\ref{fig:not_satisfying:5}, to show that $\Pr(y_t\mid do(w_t))$ is not identifiable.
%\end{proof}

\section{Conclusion}
\label{sec:conclusion}
This paper  focuses on identifying total effects from summary causal graphs with latent confounding. Definition~\ref{def:front_door_SCG}  establishes graphical conditions that are sufficient, under any underlying positive probability distribution, for the identifiability of the total effect. Additionally, in cases where identifiability is established, Theorem~\ref{theorem:front_door}  provides a do-free formula for estimating the total effect. These results 
contribute to the ongoing effort to understand and estimate total effects from observational data using summary causal graphs.
%have significant implications, particularly in dynamic systems, especially when experts are unable to fully specify a complete temporal causal graph. 
The main limitation of our result is that, while it is sound, it is not complete for identifying total effects using summary causal graphs. 
%\ifthenelse {\boolean{showToPublic}} {
%	\textcolor{olive}{Furthermore, although Definition~\ref{def:front_door_SCG} provides sufficient graphical conditions for identifiability, the do-free formula presented in Theorem~\ref{theorem:front_door} relies on certain dependence constraints that are unnecessary when applying the standard front-door criterion.}
%}{} 
Consequently, future work should aim to develop a \emph{complete} identifiability result that accounts for latent confounding and cycles, along with a corresponding do-free formula that does not require any information about the distribution.

\subsubsection*{Acknowledgments}
%I thank Simon Ferreira for several discussions about discrete-time dynamic structural causal models and full-time acyclic directed mixed graphs. 
I thank Fabrice Carrat, Nathanael Lapidus, and Benjamin Glemain for several discussions about applications of this work in epidemiology. This work was  supported by the CIPHOD project (ANR-23-CPJ1-0212-01).

\bibliography{references}
\bibliographystyle{tmlr}

\newpage

\appendix
\section{Appendix}
%You may include other additional sections here.

\mylemmaone*
\begin{proof}
Let $\mathcal{G}$ be a candidate FT-ADMG in $\mathcal{C}(\mathcal{G}^s)$. Consider a directed path $\pi$ from $X_{t-\gamma}$ to $Y_t$ in $\mathcal{G}$. Suppose $\pi$ does not contain any vertex from $\{ (\mathbb{W}_{t-\gamma+\ell})_{0\leq \ell \le \gamma} \}$. Then, $\pi$ must include at least one vertex from $\{ (\mathbb{W}_{t-\gamma+\ell})_{\ell \in [t_0, t_{max}] \backslash{\{0, \cdots, \gamma\}} } \}$, as all paths from $X$ to $Y$ are intercepted by $\mathbb{W}$ in $\mathcal{G}^s$.
Consider the case where there exists $W_{t-\gamma + \ell}$ in $\pi$ such that $\ell<0$. In this case, notice that $W_{t-\gamma + \ell}$ is temporally prior to $X_{t-\gamma}$ which means that by Assumption~\ref{ass:temporal_priority}, $\pi$ cannot be a directed path because there exists no directed path from $X_{t-\gamma}$ to $W_{t-\gamma + \ell}$.
Consider the case where there exists $W_{t-\gamma + \ell}$ in $\pi$ such that $\ell>\gamma$. In this case, notice that $Y_t$  is temporally prior to $W_{t-\gamma + \ell}$ which means that by Assumption~\ref{ass:temporal_priority}, $\pi$ cannot be a directed path because there exists no directed path from $W_{t-\gamma + \ell}$ to $Y_{t}$.
Therefore, it must be the case that $\pi$ includes at least one vertex from ${ (\mathbb{W}_{t-\gamma+\ell})_{0\leq \ell \le \gamma} }$. Since $\mathcal{G}$ and $\pi$ are arbitrary, this conclusion applies to all directed paths between from $X_{t-\gamma}$ to $Y_t$ in any candidate FT-ADMG in $\mathcal{C}(\mathcal{G}^s)$.
\end{proof}

\begin{property}
	\label{property:1}
	Consider an SCG $\mathcal{G}^s$. If there is no backdoor path from $X$ to $W$ in $\mathcal{G}$, then each active backdoor paths from $X_{t-\gamma}$ to $W_{t-\lambda}$ in any FT-ADMG in $\mathcal{C}(\mathcal{G}^s)$ has to pass by another vertex $X_{t-\lambda'}$ with $\lambda'\ne \gamma$. 
\end{property}
\begin{proof}
	Consider an SCG $\mathcal{G}^s$ and a given candidate FT-ADMG $\mathcal{G}$ in $\mathcal{C}(\mathcal{G}^s)$.
	First, note that a backdoor path from $X_{t-\gamma}$ to $Y_t$ in $\mathcal{G}$ cannot consist solely of edges pointing toward $X_{t-\gamma}$ because if this were the case, there would be a backdoor from $X$ to $W$ path in  $\mathcal{G}^s$. This would contradict the assumption that there is no backdoor path from $X$ to $W$. Therefore, the only possible backdoor path, is a path containing at least one ancestor of  $X_{t-\gamma}$ and $W_{t-\lambda}$.
	Consider an active backdoor path $\pi$ in $\mathcal{G}$: $X_{t-\gamma}\leftarrow U_{t'}, \cdots , U_{t''}, \cdots \leftarrow S_{t'''} \cdots, Q_{t''''} \rightarrow \cdots, Z_{t'''''}, \cdots , Z_{t''''''} \rightarrow W_{t-\lambda}$.
	Let's assume that there exists no $X_{t-\lambda'}$ such that $\lambda' \ne \gamma$ that belongs to $\pi$.
Since the path  $\pi$  does not pass through any instance of $X$ other than $X_{t-\gamma}$, the path that is compatible with  $\pi$ in $\mathcal{G}^s$ is the following path : $X \leftarrow U \cdots \leftarrow S \cdots  Q \rightarrow \cdots, Z \rightarrow W$ where each subpath of micro vertices with the left and right endpoints shares the same macro vertex are replaced by this macro vertex.  The path we’ve identified corresponds directly to a backdoor path in $\mathcal{G}^s$. This contradiction implies that the assumption we took   must be false. Moreover, if  we assume that $X_{t-\lambda'}$  belongs to the path then if we replace $Q_{t''''}$ by   $X_{t-\lambda'}$ then the compatible path in $\mathcal{G}^s$ would be $X  \rightarrow  Z \rightarrow W$ which is not a backdoor path.
\end{proof}

\mylemmatwo*
\begin{proof}
	By definition of $\gamma_{\max}$, the set \textcolor{MY_yellow}{$\{(B_{t-\gamma-\ell})_{0\leq \ell \leq \gamma_{\max}} | B \in Par(X, \mathcal{G}^s)\}$} contains all observed parents of $X_{t-\gamma}$ in any FT-ADMG in $\mathcal{C}(\mathcal{G}^s)$. Given that $Cycles(X,\mathcal{G}^s) = \emptyset$, if $B \in Par(X,\mathcal{G}^s)$, then $B \not\in Desc(X,\mathcal{G}^s)$, implying that \textcolor{MY_yellow}{$\{(B_{t-\gamma-\ell})_{0\leq \ell \leq \gamma_{\max}} | B \in Par(X, \mathcal{G}^s)\}$} cannot include any descendant of $X_{t-\gamma}$. Therefore, this set blocks all backdoor paths from $X_{t-\gamma}$ to $W_{t-\lambda}$ that begin with an undashed left arrow, i.e., "$X_{t-\gamma} \leftarrow$" without activating a new path between $X_{t-\gamma}$ and $W_{t-\lambda}$. 
	Again, since $Cycles(X,\mathcal{G}^s) = \emptyset$,
	the only backdoor paths from $X_{t-\gamma}$ to $W_{t-\lambda}$ that is not blocked or becomes activated by \textcolor{MY_yellow}{$\{(B_{t-\gamma-\ell})_{0\leq \ell \leq \gamma{\max}} | B \in Par(X, \mathcal{G}^s)\}$}  are those that  are those starting with "$X_{t-\gamma} \longdashleftrightarrow$".
	By Property~\ref{property:1},  all backdoor paths from $X_{t-\gamma}$ to $W_{t-\lambda}$ in any FT-ADMG $\mathcal{G}$ in $\mathcal{C}(\mathcal{G}^s)$  has to pass by another vertex $X_{t-\lambda'}$ with $\lambda'\ne \gamma$.
	Let us consider three cases: 
			\begin{itemize}[nosep]
				\item[Case 1] 
				If $\lambda'<0$ then obviously, by Assumption~\ref{ass:temporal_priority}, the backdoor path is necessarily blocked by the empty set since $X_{t-\lambda'}$ cannot be an ancestor of $W_{t-\lambda}$ nor an ancestor of $X_{t-\gamma}$.

				\item[Case 2] 
					If  $\lambda'>\gamma$, then by Condition~\ref{item:front_door_SCG:2},  $\{(X_{t-\gamma-\ell})_{\ell\le 1}\}$ can  blocks the backdoor path, however this set is unbounded and therefore unpratical. So instead, we can replace it by
					\textcolor{MY_color_pink}{$\{(B_{t-\gamma-\ell})_{1\leq \ell \leq \gamma_{\max}} | B \in Anc(\mathbb{W}, \mathcal{G}^s)\cap Desc(X, \mathcal{G}^s)\}$}  (remark that $\{ (X_{t-\gamma-\ell})_{1\leq \ell \leq \gamma_{\max}} \}$ is included in that set) to blocks the backdoor path.
				Furthermore, by Assumption~\ref{ass:temporal_priority}, \textcolor{MY_color_pink}{$\{(B_{t-\gamma-\ell})_{1\leq \ell \leq \gamma_{\max}} | B \in Anc(\mathbb{W}, \mathcal{G}^s)\cap Desc(X, \mathcal{G}^s)\}$}  cannot contain descendants of $X_{t-\gamma}$. 

				\item[Case 3]  					
				If $0\le \lambda'< \gamma$, then by Condition~\ref{item:front_door_SCG:2}, \textcolor{gray}{$\{ (X_{t-\gamma+\ell})_{1\leq \ell \le \gamma} \}$} blocks the backdoor path.
				Furthermore, since $Cycles(X, \mathcal{G}^s)=0$, \textcolor{gray}{$\{ (X_{t-\gamma+\ell})_{1\leq \ell \le \gamma} \}$} cannot contain descendants of $X_{t-\gamma}$.
			\end{itemize}\vspace*{-\baselineskip}
\end{proof}

\mylemmatwoprime*
\begin{proof}
	Condition~\ref{item:front_door_SCG:2} implies that there exists no active backdoor path $\pi^s=\langle V^1=X,\dots,V^n=W \rangle$ from $X$ to $W$ in $\mathcal{G}^s$ such that $\langle V^2,\dots,V^{n-1} \rangle \subseteq Desc(X, \mathcal{G}^s)$ which means that there are no cycles that contain $X$ and another vertex on a path from $X$ to $W$.
Furthermore, by Property~\ref{property:1},  all backdoor paths from $X_{t-\gamma}$ to $W_{t-\lambda}$ in any FT-ADMG $\mathcal{G}$ in $\mathcal{C}(\mathcal{G}^s)$  has to pass by another vertex $X_{t-\lambda'}$ with $\lambda'\ne < \gamma$. Let us consider three cases:
\begin{itemize}
	\item[Case 1] 	If $\lambda'<0$, then obviously, by Assumption~\ref{ass:temporal_priority}, the backdoor path is necessarily blocked by the empty set since $X_{t-\lambda'}$ cannot be an ancestor of $W_{t-\lambda}$ nor an ancestor of $X_{t-\gamma}$.
	\item [Case 2]		If $\lambda'\ge  \gamma$ and if the second micro vertex $U_{t-\lambda''}$ on the backdoor path we are considering is compatible with the macro vertex $U\not\in Cycles(X, \mathcal{G}^s)$ then the backdoor path can be blocked by   \textcolor{MY_yellow}{$\{(B_{t-\gamma-\ell})_{0\leq \ell \leq \gamma_{\max}} | B \in Par(X, \mathcal{G}^s)\}$}.
	\item [Case 3] 	If $\lambda'\ge \gamma$ and if the second micro vertex $U_{t-\lambda''}$ on the backdoor path we are considering is compatible with the macro vertex $U\in Cycles(X, \mathcal{G}^s)$ then, since we consider that $Cycle(X,\mathcal{G}^s)=\{X\rightarrow X\}$, $U=X$ and $\lambda' >\gamma$ . Therefore, 
	the backdoor path starts with $X_{t-\gamma} \leftarrow X_{t-\lambda'}$ which means it is obviously blocked by    $\textcolor{MY_color_pink}{\{ (X_{t-\gamma-\ell})_{1\le \ell \le \gamma_{\max}} \}}$.
\end{itemize}
In addition,  by construction and by Assumption~\ref{ass:temporal_priority}, $\textcolor{MY_yellow}{\{(B_{t-\ell})_{0\leq \ell \leq \gamma_{\max}} | B \in Par(X, \mathcal{G}^s)\}} \cup \textcolor{MY_color_pink}{\{ (X_{t-\gamma-\ell})_{1\le \ell \le \gamma_{\max}} \}}$   cannot contain any descendant of $X_{t-\gamma}$. 
Which means that $\textcolor{MY_yellow}{\{(B_{t-\ell})_{0\leq \ell \leq \gamma_{\max}} | B \in Par(X, \mathcal{G}^s)\}} \cup \textcolor{MY_color_pink}{\{ (X_{t-\gamma-\ell})_{1\le \ell \le \gamma_{\max}} \}}$ blocks all backdoor paths from $X_{t-\gamma}$ to $W_{t-\lambda}$ starting with an undashed right arrow, i.e, "$X_{t-\gamma} \leftarrow$".
%In addition, $ \{ (X_{t-\ell})_{1\leq \ell \leq \gamma_{\max}}$
Since we consider that $\not\exists Z\in Anc(X,\mathcal{G}^s)$ such that $X\longdashleftrightarrow Z$, , no additional back-door paths can exist from $X_{t-\gamma}$ to $W_{t-\lambda}$.
\end{proof}

\mylemmathree*
\begin{proof}
	Condition~\ref{item:front_door_SCG:2} implies that there exists no active backdoor path $\pi^s=\langle V^1=X,\dots,V^n=W \rangle$ from $X$ to $W$ in $\mathcal{G}^s$ such that $\langle V^2,\dots,V^{n-1} \rangle \subseteq Desc(X, \mathcal{G}^s)$ which means that there are no cycles that contain $X$ and another vertex on a path from $X$ to $W$.
	Furthermore, by Property~\ref{property:1},  all backdoor paths from $X_t$ to $W_t$ in any FT-ADMG $\mathcal{G}$ in $\mathcal{C}(\mathcal{G}^s)$  has to pass by another vertex $X_{t-\lambda'}$ with $\lambda'\ne 0$. Let us consider three cases:
	\begin{itemize}
		\item[Case 1] 	If $\lambda'<0$, then obviously, by Assumption~\ref{ass:temporal_priority}, the backdoor path is necessarily blocked by the empty set since $X_{t-\lambda'}$ cannot be an ancestor of $W_t$ nor an ancestor of $X_t$.
		\item [Case 2]		If $\lambda'>0$ and if the second micro vertex $U_{t-\lambda}$ on the backdoor path we are considering is compatible with the macro vertex $U\not\in Cycles(X, \mathcal{G}^s)$ then the backdoor path can be blocked by   \textcolor{MY_yellow}{$\{(B_{t-\ell})_{0\leq \ell \leq \gamma_{\max}} | B \in Anc(X, \mathcal{G}^s)\backslash Desc(X, \mathcal{G})\}$}.
		\item [Case 3] 	If $\lambda'>0$ and if the second micro vertex $U_{t-\lambda}$ on the backdoor path we are considering is compatible with the macro vertex $U\in Cycles(X, \mathcal{G}^s)$ then the backdoor path can be blocked by    $\textcolor{MY_yellow}{\{(B_{t-\ell})_{0\leq \ell \leq \gamma_{\max}} | B \in Anc(X, \mathcal{G}^s)\backslash Desc(X, \mathcal{G})\}} \cup \textcolor{MY_color_pink}{\{(B_{t-\ell})_{1\leq \ell \leq \gamma_{\max}} | B \in Anc(X, \mathcal{G}^s) \cap Desc(X, \mathcal{G})\}}$.
		
	\end{itemize}
	In addition,  by construction and by Assumption~\ref{ass:temporal_priority}, $\textcolor{MY_yellow}{\{(B_{t-\ell})_{0\leq \ell \leq \gamma_{\max}} | B \in Anc(X, \mathcal{G}^s)\backslash Desc(X, \mathcal{G})\}} \cup \textcolor{MY_color_pink}{\{(B_{t-\ell})_{1\leq \ell \leq \gamma_{\max}} | B \in Anc(X, \mathcal{G}^s) \cap Desc(X, \mathcal{G})\}}$   cannot contain any descendant of $X_{t}$. 
	Which means that $\textcolor{MY_yellow}{\{(B_{t-\ell})_{0\leq \ell \leq \gamma_{\max}} | B \in Anc(X, \mathcal{G}^s)\backslash Desc(X, \mathcal{G})\}} \cup \textcolor{MY_color_pink}{\{(B_{t-\ell})_{1\leq \ell \leq \gamma_{\max}} | B \in Anc(X, \mathcal{G}^s) \cap Desc(X, \mathcal{G})\}}$ blocks all backdoor paths from $X_{t}$ to $W_t$ starting with an undashed right arrow, i.e, "$X_{t} \leftarrow$".
	%In addition, $ \{ (X_{t-\ell})_{1\leq \ell \leq \gamma_{\max}}$
	This means that the only backdoor paths from $X_{t}$ to $W_{t} $ that are not blocked are the ones statring with  "$X_{t} \longdashleftrightarrow$".
	Let us consider two cases: 
	
	\begin{itemize}[nosep]
		\item[Case 1] 	If $\lambda'<0$, then obviously, by Assumption~\ref{ass:temporal_priority}, the backdoor path is necessarily blocked by the empty set since $X_{t-\lambda'}$ cannot be an ancestor of $W_t$ nor an ancestor of $X_t$.
		\item[Case 2]	
		If  $\lambda'>0$, then by Condition~\ref{item:front_door_SCG:2},  $\{(X_{t-\gamma-\ell})_{\ell\le 1}\}$ can  blocks the backdoor path, however this set is unbounded and therefore unpratical. So instead, we can replace it by 
		$\textcolor{MY_color_pink}{\{(B_{t-\ell})_{1\leq \ell \leq \gamma_{\max}} | B \in Anc(\mathbb{W}, \mathcal{G}^s) \cap Desc(X, \mathcal{G})\}}$. Furthermore, by  Assumption~\ref{ass:temporal_priority}, $\textcolor{MY_color_pink}{\{(B_{t-\ell})_{1\leq \ell \leq \gamma_{\max}} | B \in Anc(\mathbb{W}, \mathcal{G}^s) \cap Desc(X, \mathcal{G})\}}$ cannot contain any descendant of $X_t$ in any candidate FT-ADMG in $\mathcal{C}(\mathcal{G}^s)$.
	\end{itemize}\vspace*{-\baselineskip}
\end{proof}

\begin{property}
	\label{property:2}
		Consider an SCG $\mathcal{G}^s$. Given Condition~\ref{item:front_door_SCG:3}, if all backdoor path from $W$ to $Y$ in $\mathcal{G}$ are blocked by $X$, then each active backdoor paths from $W_{t-\lambda}$ to $Y_{t}$ in any FT-ADMG in $\mathcal{C}(\mathcal{G}^s)$ has to pass either by a vertex $X_{t-\lambda'}$  or a vertex $W_{t-\lambda''}$ with $\lambda''\ne \lambda$. 
\end{property}
\begin{proof}
Consider an SCG $\mathcal{G}^s$ and a given candidate FT-ADMG $\mathcal{G}$ in $\mathcal{C}(\mathcal{G}^s)$.
The proof that there might exists an active backdoor path from $W_{t-\lambda}$ to $Y_t$ that does not pass through any $X_{t-\lambda'}$ must pass through a vertex $W_{t-\lambda''}$ follows a similar reasoning to that used in the proof of Property~\ref{property:1}.
Let's assume there is an active backdoor path between $W_{t-\lambda}$ and $Y_t$ that does not pass through  $W_{t-\lambda''}$.
If the path between $W_{t-\lambda}$ and $Y_t$ is active and does not pass through $W_{t-\lambda'}$, then this path must be relying on connections in the SCG. 
Consider such an active backdoor path in $\mathcal{G}$: $W_{t-\lambda}\leftarrow U_{t'}, \cdots , U_{t''}, \cdots \leftarrow S_{t'''} \cdots, Q_{t''''} \rightarrow \cdots, Z_{t'''''}, \cdots , Z_{t''''''} \rightarrow Y_{t}$.
The compatible of the path considered in $\mathcal{G}^s$ is the following path : $W \leftarrow U \cdots \leftarrow S \cdots  Q \rightarrow \cdots, Z \rightarrow Y$ where each subpath of micro vertices with the left and right endpoints shares the same macro vertex are replaced by this macro vertex. 
The identified path corresponds to a backdoor path in $\mathcal{G}^s$. If $X$ is not on this path, it would lead to a contradiction, as in that case, not all backdoor paths would be blocked by $X$. This implies that our initial assumption\textemdash that there exists a backdoor path between $W_{t-\lambda}$ and $Y_t$ that does not pass through $W_{t-\lambda''}$\textemdash must be false. Consequently, under our assumption, $X$ must be on the path in $\mathcal{G}^s$, and $X_{t-\lambda'}$ must be on the corresponding path in $\mathcal{G}$.

Now, under a similar scenario, let us assume that the backdoor path between $W_{t-\lambda}$ and $Y_t$ does pass through $W_{t-\lambda''}$. For instance, suppose $W_{t-\lambda''} = Q_{t''''}$ on this path. In this case, the compatible path in $\mathcal{G}^s$ would be $W \rightarrow \cdots \rightarrow Z \rightarrow Y$, which is not a backdoor path in $\mathcal{G}^s$.
\end{proof}

\mylemmafour*
\begin{proof}
	If there is a cycle that includes $W$ and any other vertex not in $\mathbb{W}$ on a directed path from $X$ to $Y$ that is intercepted by $W$ and that is closer than $W$ to $Y$ on the path, then there exists an active backdoor path from $W$ to $Y$ that does not pass through $X$ (which means cannot be blocked by $X$). By Conditions~\ref{item:front_door_SCG:2} and \ref{item:front_door_SCG:3}, this cannot be true. Therefore, it must be the case that all cycles containing $W$ do not include any other vertex on a directed path from $X$ to $Y$ that are neither in $\mathbb{W}$ nor closer to $Y$ on the path.
	By Property~\ref{property:2}, all backdoor path  from $W_{t-\lambda}$ to $Y_t$  in any FT-ADMG in $\mathcal{C}(\mathcal{G}^s)$ has to pass by either by $X_{t-\lambda'}$ or by $W_{t-\lambda ''}$ with $\lambda''\ne \lambda$.
	
	Let's consider one of these backdoor paths that pass by a vertex $X_{t-\lambda'}$ but does not pass by any vertex $W_{t-\lambda''}$ such that $\lambda'' \ne \lambda$. By Condition~\ref{item:front_door_SCG:3}, $	\textcolor{MY_yellow}{\{(B_{t-\gamma-\ell})_{-\gamma\leq \ell \leq \gamma_{\max}} | B \in Anc(W, \mathcal{G}^s)\backslash Desc(W, \mathcal{G}^s)\}}$ block this path,  however it can activate a path of the form $W_{t-\lambda} \leftarrow .... X_{t-\lambda'} \longdashleftrightarrow Y_t$ where $\lambda'>\gamma+\gamma_{\max}$. Since $\lambda\le \gamma$ and $\lambda'>\gamma+\gamma_{\max}$ then the parents of $W_{t-\lambda}$ which are temporally prior to $t-\lambda$ can block this path. Notice that these parents are in $\textcolor{MY_color_pink}{\{(B_{t-\gamma-\ell})_{1\leq \ell \leq \gamma_{\max}} | B \in Anc(W, \mathcal{G}^s)\cap Desc(W, \mathcal{G}^s)\}}$. Furthermore, by construction and by Assumption~\ref{ass:temporal_priority}, $	\textcolor{MY_yellow}{\{(B_{t-\gamma-\ell})_{-\gamma\leq \ell \leq \gamma_{\max}} | B \in Anc(W, \mathcal{G}^s)\backslash Desc(W, \mathcal{G}^s)\}} \cup \textcolor{MY_color_pink}{\{(B_{t-\gamma-\ell})_{1\leq \ell \leq \gamma_{\max}} | B \in Anc(W, \mathcal{G}^s)\cap Desc(W, \mathcal{G}^s)\}}$ does not contain any descendant of $W_{t-\lambda}$.
	
	Now let's consider one of the remaining backdoor paths that pass by a vertex $W_{t-\lambda''}$.  Let us consider two cases:
	\begin{itemize}
		\item[Case 1] If $\lambda'' >\gamma$, then obviously the backdoor path can be blocked by $\textcolor{MY_color_pink}{\{(B_{t-\gamma-\ell})_{1\leq \ell \leq \gamma_{\max}} | B \in Anc(W, \mathcal{G}^s)\cap Desc(W, \mathcal{G}^s)\}}$.
		\item[Case 2]	If $\lambda''\le \gamma$ then the backdoor path pass by  $ \textcolor{MY_green}{ \{ (\mathbb{W}_{t-\gamma+\ell})_{0\leq \ell \le \gamma} \}}\backslash\{W_{t-\lambda}\}$ which means it does not need to be blocked.
	\end{itemize}\vspace*{-\baselineskip}
\end{proof}

\end{document}